\documentclass[11pt,a4paper]{amsart}
\usepackage[foot]{amsaddr}
\usepackage{ifxetex}
\ifxetex
  \usepackage[no-math]{fontspec}
\else
\fi
\usepackage{amsmath}
\usepackage{amsfonts}
\usepackage{amssymb}
\usepackage{amsthm}
\usepackage{fullpage}
\usepackage{tablefootnote}
\usepackage{microtype}
\ifxetex
  \usepackage[libertine]{newtxmath}
\else
  \usepackage{newtxmath}
\fi
\usepackage[tt=false]{libertine}
\usepackage{caption}
\usepackage{tcolorbox}
\usepackage{bbm}
\usepackage{hyperref, color}
\hypersetup{colorlinks=true,citecolor=blue, linkcolor=blue, urlcolor=blue}
\usepackage[linesnumbered,boxed,ruled,vlined]{algorithm2e}
\usepackage{bm}
\usepackage{bbm}
\usepackage[numbers]{natbib}
\usepackage{xcolor}
\usepackage{enumerate}
\usepackage{enumitem}
\usepackage{ragged2e}
\usepackage{tabularx}
\usepackage{array}
\usepackage{longtable}
\usepackage{hhline}
\usepackage{multirow}
\newcolumntype{L}[1]{>{\=Raggedright\arraybackslash}p{#1}}
\newcolumntype{C}[1]{>{\centering\arraybackslash}m{#1}}
\newcolumntype{R}[1]{>{\=Raggedleft\arraybackslash}p{#1}}

\usepackage{makecell}
\usepackage{aliascnt}
\usepackage{cleveref}
\usepackage{footnote}
\usepackage{float}
\makesavenoteenv{tabular}

\renewcommand{\epsilon}{\varepsilon}

\newtheorem{theorem}{Theorem}[section]
\newcommand{\newsharedtheorem}[2]{%
  \newaliascnt{#1}{theorem}%
  \newtheorem{#1}[#1]{#2}%
  \aliascntresetthe{#1}%
}
\newsharedtheorem{observation}{Observation}
\newsharedtheorem{claim}{Claim}
\newtheorem*{claim*}{Claim}
\newsharedtheorem{condition}{Condition}
\newsharedtheorem{example}{Example}
\newsharedtheorem{fact}{Fact}
\newsharedtheorem{lemma}{Lemma}
\newsharedtheorem{proposition}{Proposition}
\newsharedtheorem{conjecture}{Conjecture}
\newsharedtheorem{corollary}{Corollary}
\theoremstyle{definition}

\newsharedtheorem{definition}{Definition}
\newsharedtheorem{remark}{Remark}
\newtheorem*{remark*}{Remark}

\crefname{theorem}{theorem}{theorems}
\Crefname{theorem}{Theorem}{Theorems}
\crefname{observation}{observation}{observations}
\Crefname{observation}{Observation}{Observations}
\crefname{claim}{claim}{claims}
\Crefname{claim}{Claim}{Claims}
\crefname{condition}{condition}{conditions}
\Crefname{condition}{Condition}{Conditions}
\crefname{Condition}{condition}{conditions}
\Crefname{Condition}{Condition}{Conditions}
\crefname{example}{example}{examples}
\Crefname{example}{Example}{Examples}
\crefname{fact}{fact}{facts}
\Crefname{fact}{Fact}{Facts}
\crefname{lemma}{lemma}{lemmas}
\Crefname{lemma}{Lemma}{Lemmas}
\crefname{proposition}{proposition}{propositions}
\Crefname{proposition}{Proposition}{Propositions}
\crefname{conjecture}{conjecture}{conjectures}
\Crefname{conjecture}{Conjecture}{Conjectures}
\crefname{corollary}{corollary}{corollaries}
\Crefname{corollary}{Corollary}{Corollaries}
\crefname{definition}{definition}{definitions}
\Crefname{definition}{Definition}{Definitions}
\crefname{remark}{remark}{remarks}
\Crefname{remark}{Remark}{Remarks}
\crefname{assumption}{assumption}{assumptions}
\Crefname{assumption}{Assumption}{Assumptions}

\def\\+Ex{\mathop{\mathbf{\+E}}\nolimits}
\renewcommand{\Pr}[2][]{ \ifthenelse{\isempty{#1}}
  {\mathop{\mathbf{Pr}}\left[#2\right]} {\mathop{\mathbf{Pr}}_{#1}\left[#2\right]} }

\newcommand{\abs}[1]{\left\vert#1\right\vert}
\newcommand{\set}[1]{\left\{#1\right\}}
\newcommand{\tuple}[1]{\left(#1\right)}

\newcommand{\tp}{\tuple}

\newcommand{\inner}[2]{\left\langle #1,#2\right\rangle}

\newcommand{\defeq}{\triangleq}

\newcommand{\DTV}[2]{d_{\mathrm{TV}}\left({#1},{#2}\right)}

\newcommand{\edge}{\mathrm{edge}}
\newcommand{\Cov}{\mathrm{Cov}}

\def\*#1{\boldsymbol{#1}}
\def\+#1{\mathcal{#1}}
\def\-#1{\mathrm{#1}}
\def\=#1{\mathbb{#1}}

\DeclareMathOperator{\oPr}{\mathbf{Pr}}
\renewcommand{\Pr}[2][]
{\ifthenelse{\isempty{#1}}
  {\oPr\left[#2\right]}
  {\oPr_{#1}\left[#2\right]}
} 

\DeclareMathOperator{\oE}{\mathbb{E}}
\newcommand{\E}[2][]
{\ifthenelse{\isempty{#1}}
  {\oE\left[#2\right]}
  {\oE_{#1}\left[#2\right]}
} 

\DeclareMathOperator{\oVar}{\mathrm{Var}}
\newcommand{\Var}[2][]
{\ifthenelse{\isempty{#1}}
  {\oVar\left[#2\right]}
  {\oVar_{#1}\left[#2\right]}
}
\def\oEnt{\mathrm{Ent}}
\NewDocumentCommand{\Ent}{ O{} O{} m }{
  \ifthenelse{\isempty{#1}} {
    \ifthenelse{\isempty{#2}} {
      \oEnt\left[#3\right]
    } {
      \oEnt^{#2}\left[#3\right]
    }
  } {
    \ifthenelse{\isempty{#2}} {
      \oEnt_{#1}\left[#3\right]
    } {
      \oEnt_{#1}^{#2}\left[#3\right]
    }
  }
}

\usepackage[textsize=tiny]{todonotes}

\usepackage{xifthen}

\newcommand{\dist}{\mathrm{dist}}
\newcommand{\one}[1]{\mathbbm{1}_{\{#1\}}}
\makeatletter
\def\prob#1#2#3{\goodbreak\begin{list}{}{\labelwidth\z@ \itemindent-\leftmargin
                        \itemsep\z@  \topsep6\p@\@plus6\p@
                        \let\makelabel\descriptionlabel}
                \item[\it Name]#1
               \item[\it Instance]                #2
                \item[\it Output]#3
                \end{list}}
\makeatother

\newbool{doubleblind}
\setbool{doubleblind}{false}

\title{Approximate Colorwise Tensorization of Entropy and\\ Optimal Mixing of the Wang--Swendsen--Koteck\'{y} Dynamics}

\ifdoubleblind
\author{Author(s)}
\else
\author{Chunyang Wang, Yuichi Yoshida, Zihan Zhang}
\address[Chunyang Wang, Yuichi Yoshida, Zihan Zhang]{National Institute of Informatics, Tokyo, Japan. \textnormal{Email: \texttt{$\{$c\_wang,yyoshida,zihan$\}$@nii.ac.jp}}}

\fi

\begin{document}

\begin{abstract}
We study the mixing time of Wang--Swendsen--Koteck\'{y} (WSK) dynamics for uniformly sampling proper $q$-colorings. The WSK dynamics is widely used in statistical physics for sampling from the antiferromagnetic Potts model and can be considered a global counterpart of the flip dynamics, which currently yields the state-of-the-art bounds for sampling colorings in general graphs (Carlson and Vigoda, SODA 2025). However, despite its importance, the tools for analyzing such dynamics remain limited.

We develop new tools that enable us to analyze the mixing time of the WSK dynamics through the lens of relative entropy contraction. We introduce new criteria for multi-spin distributions: approximate colorwise tensorization of entropy (ACTE) and approximate colorwise subadditivity of entropy (ACSE). These criteria provide a colorwise counterpart to standard vertex-wise entropy factorization principles, and expose a form of color symmetry beyond coordinate-wise analyses. We also develop new inductive approaches for establishing such criteria on specific types of graphs, which can be viewed as local-to-global arguments for proving high-dimensional functional inequalities in a graph-theoretic sense.

As concrete applications, we establish an optimal $O_q(\log n)$ mixing time for the WSK dynamics on chordal and outerplanar graphs, down to the optimal number of colors. Because trees and line graphs of trees are chordal, the result covers both vertex and edge colorings of trees. Our results work in a regime that bypasses the irreducibility threshold for Glauber dynamics while also improving the best known mixing time bounds (Carlson, Chen, Feng and Vigoda, SODA 2025). 



\end{abstract}	

\maketitle

\setcounter{tocdepth}{1}
\tableofcontents

\section{Introduction}

Sampling proper $q$-colorings remains one of the most fundamental yet notoriously difficult problems in the field of approximate counting and sampling. A longstanding open conjecture postulates the existence of an efficient algorithm for approximately counting and sampling proper $q$-colorings of graphs with maximum degree $\Delta$ whenever $q \geq \Delta+1$. Despite a storied history, we are still far from proving this conjecture. To date, the best threshold for efficient approximate counting and sampling is $q>1.809\Delta$, recently established by Carlson and Vigoda~\cite{carlson2025flip}. This result represents the frontier of a line of work~\cite{vigoda1999improved,chen2019improved,carlson2025flip} analyzing a Markov chain known as \emph{flip dynamics}, wherein a two-color Kempe component is chosen and potentially flipped at each step. 

An alternative Markov chain for sampling proper $q$-colorings is the \emph{Wang--Swendsen--Koteck\'{y} (WSK) dynamics}~\cite{wang1989antiferro,wang1990three}, which is also based on Kempe moves and can be viewed as a global counterpart of the flip dynamics. The WSK dynamics originated in statistical physics as a cluster algorithm for the antiferromagnetic $q$-state Potts model. Although it is less studied from a theoretical perspective, it is widely used in practice and shown to be empirically effective on specific instances. Numerical studies~\cite{ferreira1999antiferro,sokal2001personal} have suggested that the WSK dynamics might exhibit constant-time mixing for the critical~\cite{alon2021mixing} $q=3$ regime on $n\times n$ periodic square lattices with even $n$.

The WSK dynamics for sampling proper $q$-colorings is formally defined as follows:

 \begin{definition}[Wang--Swendsen--Koteck\'{y} dynamics]\label{def:WSK-dynamics}
Let $G=(V,E)$ be a graph and let $q\geq 2$ be an integer such that $G$ admits at least one proper $q$-coloring. Given a current proper $q$-coloring $\sigma\in [q]^V$, a single step of the Wang--Swendsen--Koteck\'{y} (WSK) dynamics proceeds as follows:

First, choose an unordered pair of distinct colors $\{a,b\}\subseteq[q]$ uniformly at random. Let
\[
    W=W_{a,b}(\sigma)\defeq\{v\in V:\sigma_v\in\{a,b\}\}
\]
denote the set of vertices colored either $a$ or $b$. Sample a new coloring $\sigma'$ uniformly from the proper $q$-colorings satisfying
\[\sigma'_{V\setminus W}=\sigma_{V\setminus W},\quad \sigma'_v\in\{a,b\} \text{ for all } v\in W.\]

Equivalently, for each connected component of the induced subgraph $G[W]$, independently choose to either retain its current coloring or swap the colors $a$ and $b$, each with probability $1/2$.
\end{definition}

\medskip
\noindent
\textbf{Challenges for analyzing the mixing time of the WSK dynamics/flip dynamics.}\,\, Despite the theoretical significance of flip dynamics and the practical relevance of the WSK dynamics, the mathematical toolkit for analyzing the mixing time of such Kempe-swap Markov chains remains much less developed than that for Glauber dynamics. Historically, the mixing time analysis of flip dynamics has relied predominantly on path coupling~\cite{bubley97pathcoupling,vigoda1999improved,chen2019improved,carlson2025flip}, a technique introduced nearly thirty years ago. 
The same is true for the WSK dynamics. Although the WSK dynamics has long been used in statistical physics as a natural cluster algorithm for the zero-temperature antiferromagnetic Potts model, rigorous work on the WSK dynamics has focused mainly on its (non-)ergodicity~\cite{mohar2009kempe,mohar2010nonergodicity,salas2022ergodicity}, and torpid-mixing lower bounds for particular graph families~\cite{luczak2005torpid}. To the best of our knowledge, there is currently no general rapid-mixing upper-bound theory for the WSK dynamics on proper $q$-colorings.

This stands in sharp contrast with the recent breakthroughs in the analysis of Markov chain mixing times. Over the past few years, a powerful new class of techniques based on high-dimensional expanders and spectral independence has led to several major advances for the analysis of mixing times of Glauber dynamics~\cite{anari2019logconcave, alev2020improved, anari2020spectral,chen2020rapid}. In particular, for antiferromagnetic two-spin systems, rapid mixing of
the Glauber dynamics up to the uniqueness threshold has been established in a series of works~\cite{anari2019logconcave, alev2020improved, anari2020spectral,chen2020rapid, anari2021logconcave,chen2021optimal,chen2021rapid,abdolazimi2022matrix,anari2022entropic,blanca2022mixing,CE22,CFYZ22}. However, while this framework has achieved optimal mixing bounds for antiferromagnetic two-spin systems, applying it to multi-spin systems like $q$-coloring has proven challenging without additional structural assumptions. Therefore, both directions beg the following question:

\begin{center}
    \emph{Can we leverage recent advances in Markov chain mixing times\\ to develop new tools for analyzing WSK and flip dynamics?}
\end{center}

\subsection{Our Contributions}
In this paper, we answer the aforementioned question positively by introducing a new framework for analyzing the mixing time of the WSK dynamics. We successfully combine the analysis of the WSK dynamics with recent advances in the analysis of Markov chain mixing times through functional inequalities. Our contribution can be divided into two parts.

Our first contribution is new entropy criteria for the WSK dynamics. We introduce \emph{approximate colorwise tensorization of entropy} (ACTE) and \emph{approximate colorwise subadditivity of entropy} (ACSE), the colorwise analogues of approximate tensorization of entropy and approximate subadditivity of entropy. In particular, we show that ACTE directly lower-bounds the modified log-Sobolev constant of the WSK dynamics, and ACSE over induced subgraphs implies ACTE.

\subsubsection{New Criteria for Rapid Mixing of the WSK dynamics}



Modern analytic techniques for bounding the mixing time of Markov chains establish decay to equilibrium by means of functional inequalities such as Poincar\'{e} or log-Sobolev inequalities, which correspond to decay
of variance and relative entropy respectively. In particular, modified log-Sobolev inequalities are often used to prove optimal mixing time bounds. While there have been
several notable recent results bounding the modified log-Sobolev
constant for various Markov chains~\cite{CGM19,blanca2021entropy,hermon2023modified}, we are not aware of any prior techniques or results addressing the modified log-Sobolev constant for the WSK dynamics. 

In this paper, we establish new criteria for multi-spin distributions for bounding the modified log-Sobolev constant and therefore the mixing time of the WSK dynamics. 

For a configuration $\sigma\in [q]^V$ and a color $c\in [q]$, we use $I_c(\sigma)=\{v\mid \sigma_v=c\}$ to denote the set of vertices colored $c$ in $\sigma$. For a set of colors $S\subseteq [q]$, we denote $I_S(\sigma)\defeq \tp{I_c(\sigma)}_{c\in S}$. We first introduce the notion of \emph{approximate colorwise tensorization of entropy} in \Cref{def:color-entropy-tensorization}. Note that the choice of all but two color classes was to align with the updates of the WSK dynamics.

\begin{definition}[Approximate colorwise tensorization of entropy (ACTE)]\label{def:color-entropy-tensorization}
 Let $q\geq 3$ and $\mu$ be a distribution over $\Omega\subseteq [q]^V$. We say that $\mu$ satisfies \emph{approximate colorwise tensorization of entropy (ACTE)} with constant $C>0$ if for all $f:\Omega\to \=R^{\geq 0}$ we have that
\[
\Ent[\mu]{f}\leq C\sum\limits_{\{a, b\}\in \binom{[q]}{2}}\E[\mu]{\Ent[\mu]{f\mid I_{[q]\setminus \{a,b\}}}},
\]
where $\E[\mu]{f}=\sum\limits_{\sigma\in \Omega}\mu(\sigma)f(\sigma)$ and $\Ent[\mu]{f}=\E[\mu]{f\log f}-\E[\mu]{f}\log \E[\mu]{f}$. 
\end{definition}

Note that the condition in \Cref{def:color-entropy-tensorization} is defined with respect to the Gibbs distribution instead of the WSK dynamics. We remark that product distributions satisfy $1$-approximate colorwise tensorization of entropy. 

We show that approximate colorwise tensorization of entropy in fact directly implies the rapid mixing of the WSK dynamics.

\begin{theorem}[ACTE implies rapid mixing]\label{thm:tensorization-of-entropy-implies-mixing}
    Let $q\geq 3$ and let $G=(V,E)$ be a $q$-colorable graph with $|V|=n$. Suppose that the uniform distribution over all proper $q$-colorings of $G$ satisfies approximate colorwise tensorization of entropy with constant $C$.
    Then it holds for the WSK dynamics with respect to $G$ and $q$ that the modified log-Sobolev constant is at least $2/(Cq(q-1))$ and the mixing time is bounded by \[T_{\mathrm{mix}}=O(Cq^2(\log n+\log \log q)).\]
\end{theorem}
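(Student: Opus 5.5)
The plan is to write the WSK transition matrix as an average of heat-bath block updates and then use the standard fact that a heat-bath update is a projection. For a pair $\{a,b\}$, let $P_{ab}$ be the kernel that resamples $\sigma$ from $\mu(\cdot\mid I_{[q]\setminus\{a,b\}}(\sigma))$. Conditioning on $I_{[q]\setminus\{a,b\}}$ fixes $W_{a,b}(\sigma)$ and all colors outside $W$. Since $\mu$ is uniform, the conditional law is uniform over proper $\{a,b\}$-colorings of $G[W]$. This is exactly the second step of \Cref{def:WSK-dynamics}. Hence $P_{\mathrm{WSK}}=\binom{q}{2}^{-1}\sum_{\{a,b\}}P_{ab}$. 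Each $P_{ab}$ is a self-adjoint idempotent Markov operator on $L^2(\mu)$, namely the conditional expectation $f\mapsto \E[\mu]{f\mid I_{[q]\setminus\{a,b\}}}$. So $P_{\mathrm{WSK}}$ is reversible with respect to $\mu$.

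Next I lower-bound the Dirichlet form $\mathcal{E}(f,\log f)$ blockwise. For a conditional-expectation kernel $P_{ab}$, set $\mathcal{E}_{ab}(f,\log f)=\langle f,(I-P_{ab})\log f\rangle_\mu$. Write $g=f$ and let $\bar g=\E{f\mid \mathcal{F}}$, where $\mathcal{F}$ is generated by $I_{[q]\setminus\{a,b\}}$. Then
\[
\mathcal{E}_{ab}(f,\log f)=\E[\mu]{f\log f}-\E[\mu]{\bar g\,\E{\log f\mid\mathcal{F}}}.
\]
By Jensen, $\E{\log f\mid\mathcal F}\le \log \bar g$, so this is at least $\E[\mu]{f\log f}-\E[\mu]{\bar g\log\bar g}=\E[\mu]{\Ent[\mu]{f\mid I_{[q]\setminus\{a,b\}}}}$. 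This is the standard inequality that the entropy dissipation of a heat-bath block dominates the expected conditional entropy. Averaging over pairs and applying the ACTE hypothesis gives
\[
\mathcal{E}_{\mathrm{WSK}}(f,\log f)\ge \binom{q}{2}^{-1}\sum_{\{a,b\}}\E[\mu]{\Ent[\mu]{f\mid I_{[q]\setminus\{a,b\}}}}\ge \frac{2}{Cq(q-1)}\Ent[\mu]{f}.
\]
This is the claimed bound on the modified log-Sobolev constant $\rho$.

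Finally, I convert $\rho$ into mixing time by the standard bound $T_{\mathrm{mix}}(\epsilon)\le \rho^{-1}\bigl(\log\log(1/\mu_{\min})+\log(1/(2\epsilon^2))\bigr)$, up to constants, via entropy decay and Pinsker. This bound is for the continuous-time chain, or for the discrete chain when an MLSI of this Dirichlet-form type is available. Here $P_{\mathrm{WSK}}$ is a positive semidefinite operator, being an average of projections. Alternatively, the block-by-block entropy contraction $\Ent{P_{ab}f}\le\Ent{f}-\E{\Ent{f\mid\cdot}}$ yields discrete-time relative entropy contraction $\Ent[\mu]{P f}\le (1-\rho)\Ent[\mu]{f}$ directly: by convexity of entropy along the averaging, $\Ent{\sum_i \lambda_i P_i f}\le\sum_i\lambda_i\Ent{P_i f}$, and each $P_i$ satisfies $\Ent{P_i f}=\Ent{f}-\E{\Ent{f\mid\cdot}}$. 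I would use this discrete argument to avoid continuous-time subtleties. Since $\mu$ is uniform on at most $q^n$ colorings, $\mu_{\min}\ge q^{-n}$, so $\log\log(1/\mu_{\min})\le\log n+\log\log q$. This gives $T_{\mathrm{mix}}=O(Cq^2(\log n+\log\log q))$.

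The only delicate step is the correct identification of $P_{ab}$ as the conditional law given $I_{[q]\setminus\{a,b\}}$, together with making the discrete-time contraction precise. In particular, one must check the identity $\Ent{P_i f}=\Ent{f}-\E{\Ent{f\mid\mathcal F_i}}$, applied to densities $f=\nu/\mu$ with $P_i$ acting by adjoint, which equals $P_i$ by self-adjointness. Everything else is routine.
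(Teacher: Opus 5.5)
Your proof is correct, and it takes a more hands-on route than the paper's. The paper never splits $P_{\mathrm{WSK}}$ into blocks. Instead it proceeds in three steps:
\begin{enumerate}
\item It identifies the WSK dynamics as the down-up chain $P_{q\leftrightarrow q-2}$ of the colorwise localization scheme.
\item It observes via \eqref{eq:localization-transform} that ACTE with constant $C$ is exactly $C\binom{q}{2}$-approximate conservation of entropy up to time $q-2$.
\item It cites \Cref{lem:AC-var-to-var-decay} for the modified log-Sobolev bound, and the Bobkov--Tetali estimate \eqref{eq:ent-decay-implies-mixing} for the mixing time.
\end{enumerate}
Your decomposition $P_{\mathrm{WSK}}=\binom{q}{2}^{-1}\sum_{\{a,b\}}P_{ab}$ into conditional expectations is the same chain, unpacked. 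Averaging over the random order of the revealed colors turns conditioning on $Y_{q-2}$ into conditioning on $I_{[q]\setminus\{a,b\}}$ for a uniformly random pair. Your Jensen step then plays the role of the cited lemma in this special case.

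What your route buys is self-containment. It uses only Jensen, convexity of $\Ent[\mu]{\cdot}$, and the law of total entropy. Your convexity argument also gives the discrete-time contraction $\Ent[\mu]{Pf}\le(1-\rho)\Ent[\mu]{f}$ directly. The mixing bound then follows from Pinsker, without going through an MLSI-to-mixing estimate such as \eqref{eq:ent-decay-implies-mixing}, which is naturally a continuous-time statement.

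What the paper's route buys is the localization viewpoint, which it reuses immediately. Viewing ACTE as conservation of entropy along $Y_0,\dots,Y_{q-2}$ makes the telescoping in \Cref{thm:subadditivity-implies-tensorization} natural, because each intermediate conditional law is a uniform coloring of an induced subgraph with fewer colors.
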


We further introduce another notion, called \emph{approximate colorwise subadditivity of entropy}, given in \Cref{def:color-entropy-subadditivity}. 

\begin{definition}[Approximate colorwise subadditivity of entropy (ACSE)]\label{def:color-entropy-subadditivity}
 Let $q\geq 3$ and $\mu$ be a distribution over $\Omega\subseteq [q]^V$. We say that $\mu$ satisfies \emph{approximate colorwise subadditivity of entropy (ACSE)} with constant $C>0$ if for all $f:\Omega\to \=R^{\geq 0}$ we have that
\[
\Ent[\mu]{f}\leq C\sum\limits_{c\in [q]} \E[\mu]{\Ent[\mu]{f\mid I_c}}.
\]
\end{definition}

We show that approximate colorwise subadditivity of entropy over all induced subgraphs is in fact sufficient to imply approximate colorwise tensorization of entropy, and hence rapid mixing of the WSK dynamics combining with \Cref{thm:tensorization-of-entropy-implies-mixing}.

\begin{theorem}[ACSE over induced subgraphs implies ACTE]\label{thm:subadditivity-implies-tensorization}
    Let $q\geq 3$ and let $G=(V,E)$ be a $q$-colorable graph. Suppose that for every $3\leq i\leq q$, there exists a constant $C(i)$ such that:
    \begin{itemize}
        \item for every induced subgraph $G'$ of $G$ that admits at least one proper $i$-coloring, the uniform distribution $\mu_{G',i}$ over all proper $i$-colorings of $G'$ satisfies approximate colorwise subadditivity of entropy with constant $C(i)$.
    \end{itemize}
    Then the uniform distribution $\mu_{G,q}$ over all proper $q$-colorings of $G$ satisfies approximate colorwise tensorization of entropy with constant
    \[
    \frac{2}{q(q-1)}\prod\limits_{i=3}^{q}  \tp{i C(i)}.
    \]
\end{theorem}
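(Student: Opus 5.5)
We argue by induction on the number of colors $q$, peeling off one color class at a time. For $q$ colors and a graph $G$, define
\[
T(q)\defeq \sup_{G,f}\frac{\Ent[\mu_{G,q}]{f}}{\sum_{\{a,b\}}\E[\mu_{G,q}]{\Ent[\mu_{G,q}]{f\mid I_{[q]\setminus\{a,b\}}}}},
\]
where the supremum runs over induced subgraphs of the original graph that admit a proper $q$-coloring. The claim is $T(q)\le \frac{2}{q(q-1)}\prod_{i=3}^q iC(i)$. Equivalently, set $K(q)\defeq\frac{q(q-1)}{2}T(q)$; then we must show $K(q)\le\prod_{i=3}^q iC(i)$.

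\textbf{Base case $q=3$.} Here $I_{[3]\setminus\{a,b\}}=I_c$ for the single remaining color $c$. The ACTE sum is therefore exactly the ACSE sum, so $T(3)\le C(3)$. This is within the claimed bound $\frac{2}{6}\cdot 3C(3)=C(3)$.

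\textbf{Inductive step.} Apply ACSE for $\mu=\mu_{G,q}$:
\[
\Ent{f}\le C(q)\sum_{c}\E{\Ent{f\mid I_c}}.
\]
Conditioned on $I_c=I$, the measure $\mu(\cdot\mid I_c=I)$ is the uniform distribution over proper $(q-1)$-colorings of the induced subgraph $G[V\setminus I]$, using colors $[q]\setminus\{c\}$. This subgraph is an induced subgraph of $G$ and is $(q-1)$-colorable. If $q-1\ge 3$, the induction hypothesis (ACTE with $q-1$ colors, which holds for every such induced subgraph because the ACSE hypothesis is inherited by induced subgraphs) gives
\[
\Ent{f\mid I_c=I}\le T(q-1)\sum_{\{a,b\}\subseteq[q]\setminus\{c\}}\E{\Ent{f\mid I_{[q]\setminus\{a,b\}}}\;\middle|\; I_c=I}.
\]
Here we use that conditioning on $I_c$ and then on the remaining classes other than $a,b$ is the same as conditioning on $I_{[q]\setminus\{a,b\}}$, since $c\notin\{a,b\}$. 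Taking expectations over $I_c$ and summing over $c$, each pair $\{a,b\}$ appears once for each $c\notin\{a,b\}$, that is, $q-2$ times. This yields
\[
T(q)\le (q-2)\,C(q)\,T(q-1).
\]
In terms of $K$ this reads
\[
K(q)\le \frac{q(q-1)}{2}(q-2)\,C(q)\,\frac{2}{(q-1)(q-2)}K(q-1)=q\,C(q)\,K(q-1),
\]
which gives exactly $K(q)\le\prod_{i=3}^q iC(i)$.

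\textbf{Technical points.} First, the conditional measure must be identified correctly: given $I_c$, the remaining configuration is uniform over proper colorings of $G[V\setminus I]$ with the other $q-1$ colors. This holds because the coloring constraints are local, and relabeling the colors so that they form $[q-1]$ preserves both entropy and the color-class structure. Second, the pair count $q-2$ must be checked. Third, $G[V\setminus I]$ may be empty or may fail to use all colors; these cases are harmless, since the inequality is trivial or the hypothesis still applies. The only real subtlety, and the main obstacle, is making the induction hypothesis uniform over all induced subgraphs. This is why the statement requires ACSE for every induced subgraph and every $i\le q$.
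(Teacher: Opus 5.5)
Your argument is correct and is essentially the paper's proof. The paper reveals colors one at a time in a uniformly random order (the colorwise localization process) and applies ACSE to each conditional measure, which is the uniform $(q-i)$-coloring of an induced subgraph, to get $\E{\Ent{f\mid Y_{i+1}}}\ge \frac{1}{(q-i)C(q-i)}\E{\Ent{f\mid Y_i}}$; it then chains these bounds down to time $q-2$, and your recursion $T(q)\le (q-2)C(q)T(q-1)$ is exactly that chain unrolled, with your count of $q-2$ appearances per pair replacing the paper's uniform averaging over the next color and its final $1/\binom{q}{2}$ normalization.
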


The form of \Cref{thm:subadditivity-implies-tensorization} is reminiscent of local-to-global arguments found in the high-dimensional expander literature~\cite{alev2020improved}. In fact, if revealing a single color class is viewed as a ``local'' operation, then \Cref{thm:subadditivity-implies-tensorization} can be understood as a ``colorwise local-to-global'' argument. A key distinction of \Cref{thm:subadditivity-implies-tensorization} is that, unlike traditional local-to-global theorems which require the local property to hold under all arbitrary pinnings of the original distribution, we only require the local property to hold for all induced subgraphs. This relaxation is possible because we exploit the color symmetry preserved under the specific type of conditionings outlined in \Cref{def:color-entropy-tensorization}, a structural advantage that also enables our subsequent applications.

\begin{remark}[Color symmetry and extensions to list colorings]\label{rem:color-symmetry}
    Previous methods for studying the uniform sampling of $q$-colorings do not depend essentially on the symmetry between colors. This includes path coupling (both for Glauber dynamics and flip dynamics)~\cite{bubley97pathcoupling,jerrum1995simple,salas1997absense,vigoda1999improved,chen2019improved,carlson2025flip}, correlation decay method~\cite{goldberg2005strong,chen2023strong}, spectral independence method~\cite{feng2021rapid,chen2021rapid,abdolazimi2022matrix,wang2024samplingproper} and zero-freeness/interpolation method~\cite{bencs2021zero,liu2025correlation}. As a result, these methods naturally extend to \emph{list colorings} where each vertex can have different color lists, and only the size of the color list of each vertex is required. In contrast, our framework strictly relies on the symmetry between colors, and therefore does not appear to generalize to list colorings in its current form.

    We remark that this limitation may be meaningful rather than merely technical.  List coloring
can be substantially more restrictive than ordinary coloring: for example, there
are graph families whose list chromatic number is much larger than their
chromatic number~\cite{erdos1980choosability}.  Consequently, one should not
assume without proof that the optimal sampling threshold for worst-case
$q$-list-colorings coincides with the corresponding threshold for ordinary
$q$-colorings on the same graph class.  In this sense, a technique that genuinely
uses the global color symmetry of the ordinary coloring model may potentially
bypass obstructions that are intrinsic to the adversarial list-coloring setting.
\end{remark}

\begin{remark}[Approximate tensorization and subadditivity of entropy]
    The naming of the criteria in \Cref{def:color-entropy-tensorization} and \Cref{def:color-entropy-subadditivity} aims to align with the historical definition of approximate \emph{tensorization} and \emph{subadditivity} of entropy, which are notions useful for bounding the modified log-Sobolev constant for Glauber dynamics. 
    
    Approximate tensorization of entropy was shown to directly imply bounds on the modified log-Sobolev constant and the mixing time of the Glauber dynamics~\cite{caputo2015approximate}, and played a central role in
classical results proving that strong spatial mixing implies optimal mixing time
of the Glauber dynamics~\cite{stroock1992logarithmic,martinelli1994approach,cesi2001quasi}. Formally, a distribution $\mu$ over $\Omega\subseteq [q]^V$ is said to satisfy approximate tensorization of entropy with constant $C$ if for all $f:\Omega\to \=R^{\geq 0}$,
\begin{equation}\label{eq:approx-tensor-ent}
    \Ent[\mu]{f}\leq C\sum\limits_{v\in V} \E[\mu]{\Ent[\mu]{f\mid \sigma_{V\setminus \{v\}}}}.
\end{equation} 
Approximate tensorization of entropy can also be interpreted as a notion of closeness to a product distribution. In particular,  product distribution satisfies $1$-approximate tensorization of entropy.

For approximate subadditivity of entropy, such notion was known to be equivalent to a Brascamp-Lieb type inequality~\cite{carlen2004sharp,carlen2009subadditivity,barthe2011correlation}, and was also studied under the name of \emph{entropic independence} in a literature of high-dimensional expanders~\cite{anari2022entropic}.  Formally, a distribution $\mu$ over $\Omega\subseteq [q]^V$ is said to satisfy approximate subadditivity of entropy with constant $C$ if for all $f:\Omega\to \=R^{\geq 0}$,
\begin{equation}\label{eq:approx-subadditivity-ent}
    \Ent[\mu]{f}\leq C\sum\limits_{v\in V} \E[\mu]{\Ent[\mu]{f\mid \sigma_v}}.
\end{equation} 

Both notions of approximate tensorization of entropy and approximate subadditivity of entropy can be generalized under the notion of \emph{general block factorization} of entropy. See~\cite{blanca2022mixing} for a detailed discussion of these notions. Note that in contrast to the vertex-wise criteria established in \eqref{eq:approx-tensor-ent} and \eqref{eq:approx-subadditivity-ent}, our criteria in \Cref{def:color-entropy-tensorization} and \Cref{def:color-entropy-subadditivity} are stated with respect to global color classes. 
\end{remark}

\subsubsection{Optimal Mixing through Inductive Entropy Inequalities}
For the second part of our contribution, we verify the ACTE and ACSE criteria for specific graph classes. As concrete applications, we establish an $O(q^2(\log n+\log \log q))$ mixing time upper bound for the WSK dynamics on chordal and outerplanar graphs with $n$ vertices. The required number of colors is optimal down to the colorability threshold, effectively bypassing the irreducibility barrier of Glauber dynamics.

We also prove a complementary lower bound, demonstrating that our upper bounds are worst-case optimal with respect to the graph size $n$.

We establish the ACTE and ACSE criteria via a novel inductive method. Conceptually, this approach functions as a graph-theoretic analogue to local-to-global arguments for proving high-dimensional functional inequalities, a framework we detail later in the technical overview.

We first present our result on chordal graphs. A graph is called \emph{chordal} if it has no induced cycle of length at least four. For a graph $G$, we use $\omega(G)$ to denote its \emph{clique number}, which is the size of the maximum clique in $G$. We establish the following theorem.

\begin{theorem}[Rapid mixing of the WSK dynamics on chordal graphs]\label{thm:WSK-rapid-mixing-chordal}
Let $G$ be a chordal graph with $n$ vertices and let $q\geq \max\{2,\omega(G)\}$. The WSK dynamics for proper $q$-colorings of $G$ mixes in time $O(q^2(\log n+\log \log q))$.
\end{theorem}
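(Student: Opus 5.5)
By \Cref{thm:tensorization-of-entropy-implies-mixing} and \Cref{thm:subadditivity-implies-tensorization}, it suffices to prove an ACSE bound for chordal graphs. Concretely, we aim to show the following. For every $i\geq 3$ and every chordal $G'$ admitting a proper $i$-coloring, the uniform distribution $\mu_{G',i}$ satisfies ACSE with constant $C(i)=1/i$, or at least some $C(i)$ with $\prod_{i=3}^q iC(i)=O(1)$. Induced subgraphs of chordal graphs are chordal, so the hypothesis of \Cref{thm:subadditivity-implies-tensorization} is then met. This yields ACTE with constant $O(1/q^2)$ times $\prod_i iC(i)$. Feeding that into \Cref{thm:tensorization-of-entropy-implies-mixing} gives the mixing time $O(q^2(\log n+\log\log q))$.

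The small cases are handled separately. For $q=2$, the WSK step with the only pair $\{1,2\}$ resamples every component's swap independently, and the uniform distribution on proper $2$-colorings is exactly uniform over component flips. Hence the chain mixes in one step. If $\omega(G)\le 2$ with $q\geq 3$ there is nothing special to do, since this falls under the general argument.

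To prove ACSE with the sharp constant, I would induct on $|V|$ using a perfect elimination ordering. Let $v$ be a simplicial vertex whose neighborhood $N(v)$ is a clique of size $k\le \omega-1\le i-1$. Given a proper coloring $\tau$ of $G-v$, $v$ is uniform on the $i-k\geq 1$ colors missing from $N(v)$. This number is independent of $\tau$, so $\mu_{G,i}$ is $\mu_{G-v,i}$ times a uniform choice of the color of $v$ among the free colors.

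Write $\sigma=(\tau,x)$ with $x=\sigma_v$. By the chain rule, $\Ent{f}=\Ent{\E{f\mid\tau}}+\E{\Ent{f\mid\tau}}$. The first term is bounded by the induction hypothesis applied to $g=\E{f\mid \tau}$ on $G-v$. What remains is to compare $\sum_c\E{\Ent{g\mid I_c(\tau)}}$ and $\E{\Ent{f\mid \tau}}$ with $\sum_c\E{\Ent{f\mid I_c(\sigma)}}$.

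The key observation is that revealing $I_c(\sigma)$ reveals $I_c(\tau)$ together with whether $x=c$. I would use the color symmetry at this step. Conditioned on $I_c(\sigma)$ with $x\neq c$, the law of $(\tau,x)$ is again the "$G-v$ coloring plus free color for $v$" structure with one fewer color. Averaging over $c$ should let the term $\E{\Ent{f\mid\tau}}$ be absorbed, since each fiber $\{x\}$ is separated by revealing $I_x$. This should produce an inequality of the form
\[\sum_c \E{\Ent{f\mid I_c(\sigma)}}\ \ge\ \sum_c\E{\Ent{g\mid I_c(\tau)}}+ \E{\Ent{f\mid \tau}}\]
up to a constant close to $1$, which closes the induction.

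\textbf{Main obstacle.} This last comparison is the step I expect to be hardest, because conditioning on $I_c(\sigma)$ mixes the $\tau$ and $x$ coordinates. Approximate versions lose constant factors per vertex, and such losses cannot accumulate over $n$ steps. My first attempt would be to prove the inequality exactly, using convexity of entropy and the uniform-free-color structure. If that fails, I would eliminate a maximal clique at a time via a clique tree, so that any loss is incurred per color rather than per vertex.
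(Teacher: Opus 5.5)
Your reduction is the paper's: ACSE on induced subgraphs, then \Cref{thm:subadditivity-implies-tensorization} and \Cref{thm:tensorization-of-entropy-implies-mixing}; induction on a simplicial vertex $v$ whose color is uniform on its free set $T$ with $|T|=s$; and splitting $I_c(\sigma)$ into $(I_c(\tau),\one{x=c})$. However, the quantitative core is wrong, starting with your targets, which cannot be attained.
\begin{itemize}
\item $C(i)=1/i$ would force $\Ent{\E{f\mid I_c}}=0$ for every $f$ and every $c$, i.e.\ every color class would be deterministic.
\item Even $\prod_{i=3}^q iC(i)=O(1)$ fails. The hypothesis of \Cref{thm:subadditivity-implies-tensorization} must hold for a single vertex. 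There, for mean-zero $h$, one has exactly $\sum_c\E{\Var{h\mid I_c}}=\frac{i(i-2)}{i-1}\Var{h}$. Linearizing $f=1+\epsilon h$ then forces $iC(i)\ge\frac{i-1}{i-2}$, hence $\prod_{i=3}^q iC(i)\ge q-1$.
\end{itemize}
The theorem only needs $\prod_{i=3}^q iC(i)=O(q^2)$. The paper proves $C(i)=1/(i-2)$ on all chordal graphs, so the product is $q(q-1)/2$ and the ACTE constant is exactly $1$.

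More seriously, your key comparison inequality cannot close a useful induction. It puts coefficient $1$ on $\E{\Ent{f\mid\tau}}$, while the induction hypothesis gives coefficient $1/C$ on $\Ent{g}$. Since $\Ent{f}=\Ent{g}+\E{\Ent{f\mid\tau}}$, the induction closes only if $C\ge 1$. That yields ACTE constant $\frac{2}{q(q-1)}\prod_{i=3}^q i=(q-2)!$, which is factorial in $q$. To close with $C=1/(q-2)$ you need coefficient $q-2$ on both pieces.

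For the $\Ent{g}$ piece this works as you suggest: use convexity of entropy, plus the fact that $\one{x=c}$ is conditionally independent of $\tau$ given $I_c(\tau)$, then apply the induction hypothesis.

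The missing idea concerns the fiber term, where you must prove $\sum_c\E{\Ent{f\mid\tau,\one{x=c}}}\ge(q-2)\,\E{\Ent{f\mid\tau}}$.
\begin{itemize}
\item For the $q-s$ colors $c\notin T$ (those used on the clique $N(v)$), $\one{x=c}\equiv 0$ given $\tau$. Revealing $I_c$ therefore leaves the fiber at $v$ untouched, and each such color contributes the full $\Ent{f\mid\tau}$.
\item For the $s$ free colors you need the single-vertex inequality $\sum_{c\in T}\E{\Ent{f\mid\tau,\one{x=c}}\mid\tau}\ge(s-2)\,\Ent{f\mid\tau}$.
\end{itemize}
This single-vertex inequality is a genuine strong data-processing statement. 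Let $p$ be the $f$-tilted law of $x$ and $r_c=(1-p_c)/(s-1)$. The inequality is equivalent to $(s-1)\,D(r\Vert\mathrm{Unif}_s)\le D(p\Vert\mathrm{Unif}_s)$, and plain data processing does not give it. The paper obtains it from the Gu--Polyanskiy bound (\Cref{lem:wrong-color-entropy-bound}, leading to \Cref{lem:bounded-entropy-decay-singleton}). The same inequality is also the base case of the induction, which your proposal does not address.

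Your heuristic that revealing $I_x$ ``separates each fiber'' points the wrong way. The large coefficient comes precisely from the fact that most color reveals do not touch the fiber at $v$ at all.
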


    In the context of sampling $q$-colorings of chordal graphs uniformly at random via Markov chains, Heinrich~\cite{Heinrich2020glauber} investigated a closely related chain during their study of Glauber dynamics. As an intermediate step, Heinrich analyzed a variant of flip dynamics, establishing a coupling-based $O_q(n^2)$ mixing bound. In contrast, we analyze the WSK dynamics, which serves as a global counterpart of the flip dynamics. Also, our analysis pursues an entirely different approach by bounding the modified log-Sobolev constant, which ultimately allows us to establish a mixing time bound optimal in $n$. 

Important subclasses of chordal graphs include trees and line graphs of trees. From \Cref{thm:WSK-rapid-mixing-chordal}, we immediately establish the following two corollaries.

\begin{corollary}[Rapid mixing of the WSK dynamics on trees]\label{cor:vertex-coloring-main}
Let $G$ be a tree with $n$ vertices and let $q\geq 2$. The WSK dynamics for proper $q$-colorings of $G$ mixes in time $O(q^2(\log n+\log \log q))$.
\end{corollary}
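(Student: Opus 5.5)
This corollary follows directly from \Cref{thm:WSK-rapid-mixing-chordal}, once we check that every tree $G$ satisfies its hypotheses. We verify the two hypotheses in turn.

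\textbf{Chordality.} A tree contains no cycle at all. In particular, it contains no induced cycle of length at least four, so $G$ is chordal.

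\textbf{The condition on $q$.} The condition $q\geq \max\{2,\omega(G)\}$ reduces to $q\geq 2$, because the clique number of a tree is at most $2$. Indeed, $\omega(G)=2$ if $G$ has at least one edge. If $n=1$, then $\omega(G)=1$, and the maximum is still $2$. A clique of size $3$ would be a triangle, which is impossible in an acyclic graph. Also, any tree with $q\geq 2$ is $q$-colorable (properly $2$-color it by the parity of depth from a root), so the WSK dynamics is well defined in the sense of \Cref{def:WSK-dynamics}.

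\textbf{Conclusion.} Applying \Cref{thm:WSK-rapid-mixing-chordal} with this $G$ and $q$ gives mixing time $O(q^2(\log n+\log\log q))$, as claimed.

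The degenerate case $q=2$ needs no special treatment, since \Cref{thm:WSK-rapid-mixing-chordal} already covers $q\geq\max\{2,\omega(G)\}$. For a connected tree with $q=2$, the chain is in fact trivial. The unique color pair is $\{1,2\}$, and the whole tree forms a single component, which is swapped with probability $1/2$. This already samples uniformly from the two proper colorings. There is no genuine obstacle in this argument; the only step needing care is the hypothesis check above, especially the clique-number bound and the trivial one-vertex case.
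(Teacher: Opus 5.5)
Your proof is correct and matches the paper's argument. The paper also derives this corollary immediately from \Cref{thm:WSK-rapid-mixing-chordal}, using that trees are chordal with $\omega(G)\le 2$ (so the condition becomes $q\ge 2$), and it likewise notes that the case $q=2$ mixes trivially in one step.
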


\begin{corollary}[Rapid mixing of the WSK dynamics for edge colorings on trees]\label{cor:edge-coloring-main}
Let $G$ be a tree with $n$ vertices and maximum degree $\Delta$. Let $q\geq \max\{2,\Delta\}$. The WSK dynamics for proper $q$-edge-colorings of $G$ mixes in time $O(q^2(\log n+\log \log q))$.
\end{corollary}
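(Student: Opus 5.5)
The plan is to derive \Cref{cor:edge-coloring-main} from \Cref{thm:WSK-rapid-mixing-chordal} applied to the line graph $L(G)$. Proper $q$-edge-colorings of $G$ are exactly the proper $q$-vertex-colorings of $L(G)$. Moreover, the WSK dynamics for edge colorings of $G$ is, by definition, the WSK dynamics of \Cref{def:WSK-dynamics} run on $L(G)$. Two edges are adjacent in $L(G)$ iff they share an endpoint, so the $a/b$ Kempe components coincide under this identification. It therefore suffices to check three things: $L(G)$ is chordal, $\omega(L(G))\le \max\{2,\Delta\}$, and the number of vertices of $L(G)$ is at most $n$.

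For chordality, suppose $e_1,\dots,e_k$ with $k\ge 4$ is an induced cycle in $L(G)$. Consecutive edges $e_i,e_{i+1}$ share an endpoint $v_i$. Since the cycle is induced and $k\ge 4$, no vertex of $G$ lies on three of the $e_i$; otherwise those three edges would be pairwise adjacent in $L(G)$, giving a chord. Hence the shared endpoints $v_i$ are distinct. Each $e_i$ then joins $v_{i-1}$ and $v_i$, so $v_1,\dots,v_k$ form a cycle in $G$, contradicting that $G$ is a tree.

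For the clique number, a clique in $L(G)$ is a set of pairwise intersecting edges. In a triangle-free graph such as a tree, any such set with at least three edges must share a common vertex, since three pairwise intersecting edges with no common vertex form a triangle. Thus $\omega(L(G))\le \max\{2,\Delta\}\le q$. In fact $\omega(L(G))\le\Delta$ whenever $\Delta\ge 2$.

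Finally, $L(G)$ has $n-1$ vertices. \Cref{thm:WSK-rapid-mixing-chordal} therefore gives mixing time $O(q^2(\log(n-1)+\log\log q))=O(q^2(\log n+\log\log q))$. For degenerate cases, $L(G)$ may be empty or a single vertex when $n\le 2$; we treat these as trivial, or as covered by the theorem with the usual convention that mixing is immediate. No step is a real obstacle; the only point needing care is the chordality argument, specifically ruling out three edges of the cycle meeting at a single vertex.
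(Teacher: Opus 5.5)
Your proposal is correct and follows essentially the same route as the paper, which also obtains this corollary by applying \Cref{thm:WSK-rapid-mixing-chordal} to the line graph $L(G)$. The paper relies on \Cref{fact:line-graph-tree} ($L(T)$ is chordal and $\omega(L(T))=\Delta$), stated without proof; your chordality and clique-number arguments correctly supply that fact.
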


Previously, for the problem of uniformly sampling colorings on trees with Markov chains, the most studied chain is the Glauber dynamics. It has been shown that the mixing time of Glauber dynamics on trees with maximum degree $\Delta$ is $n^{\Theta({(1+\Delta/(q\log\Delta))})}$ for any $q\geq 3$ and the mixing time undergoes a phase transition at the reconstruction threshold when $q=\Delta(1+o_\Delta(1))/\ln \Delta$~\cite{goldberg2010mixing,tetali2010phase,lucier2011glauber}. In contrast, our result shows that the WSK dynamics has a mixing time optimal with respect to $n$ whenever $q\geq 3$.

The mixing time of Glauber dynamics for edge colorings on trees has also been studied~\cite{delcourt2020glauber,carlson2025optimal}. In ~\cite{carlson2025optimal}, it has been shown that the Glauber dynamics has the optimal relaxation time $O_{q}(n)$ whenever $q\geq \Delta+2$, but they can only prove near-linear mixing time on $\Delta$-regular trees. In contrast, our result shows that the WSK dynamics has a mixing time optimal with respect to $n$ whenever $q\geq \Delta$, which is even beyond the irreducibility threshold of Glauber dynamics.

\begin{remark}[Optimal mixing of global chains beyond strong spatial mixing]
For a long time, the tractability of the sampling problem has been conjectured to be correlated with a decay of correlation property called strong spatial mixing (SSM)~\cite{weitz06counting,bayati2007simple,sinclair12approximation,li2013correlation, lu2013improved,yin2013approximate,gamarnik2015strong,patel2017deterministic,liu2025correlation}. For colorings on trees, strong spatial mixing has been established under the regime of $q\geq \Delta+3$~\cite{chen2023strong}. For edge colorings on trees, strong spatial mixing has only been proven under the regime $q>3\Delta$~\cite{chen2025decay}. Very recently, Chen et al.~\cite{chen2026edgetilting} have shown a sharp mixing bound for the Swendsen--Wang dynamics for the ferromagnetic Ising model with an external field on bounded-degree graphs, which they identify as the first such bound for a classical global Markov chain beyond the SSM regimes. Our work also fits into this category as we achieve an optimal threshold far beyond the currently known SSM regime for edge colorings. Interestingly, on the technical side, both the paper by Chen et al.~\cite{chen2026edgetilting} and our paper use the interpretation of a well-known chain under a new localization scheme.
\end{remark}

The second application of our technique is an optimal mixing result for outerplanar graphs. An outerplanar graph is a graph that has a planar drawing in which all vertices belong to the outer face of the drawing. It is well known that outerplanar graphs are 3-colorable. We prove the following mixing time bound for the WSK dynamics on this class of graphs.

\begin{theorem}[Rapid mixing of the WSK dynamics on outerplanar graphs]\label{thm:WSK-rapid-mixing-outerplanar}
Let $G$ be an outerplanar graph with $n$ vertices. Let $q\geq 3$. The WSK dynamics for proper $q$-colorings of $G$ mixes in time $O(q^2(\log n+\log \log q))$.
\end{theorem}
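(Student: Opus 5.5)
We would prove \Cref{thm:WSK-rapid-mixing-outerplanar} by reducing to the two criteria. By \Cref{thm:tensorization-of-entropy-implies-mixing} it suffices to show that $\mu_{G,q}$ satisfies ACTE with a constant $C=O(1)$ depending only on $q$; the $q$-dependence is absorbed exactly as in the chordal case, where the resulting $C(i)$ must make $\frac{2}{q(q-1)}\prod_{i=3}^q iC(i)$ at most a constant, so that the mixing time is $O(q^2(\log n+\log\log q))$. By \Cref{thm:subadditivity-implies-tensorization}, and since induced subgraphs of outerplanar graphs are outerplanar (and are $i$-colorable for every $i\ge 3$), it suffices to prove: for every outerplanar $G'$ and every $i\ge 3$, $\mu_{G',i}$ satisfies ACSE with a constant $C(i)$ independent of $G'$. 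The concrete target is $C(i)=1/(i-1)$ or similar, matching the product-case value and making the product telescope to $O(1)$. This is presumably the same normalization used for chordal graphs.

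To prove ACSE we would induct on $|V(G')|$, using the standard structure of outerplanar graphs: every outerplanar graph has a vertex $v$ of degree at most $2$, and if the degree is $2$ then, in a maximal outerplanar supergraph, its neighbours are adjacent. Removing $v$ leaves an outerplanar graph for which ACSE holds by induction. The key structural fact is that, under color symmetry, the law of $\sigma_v$ given $\sigma_{G'-v}$ is uniform on $[i]\setminus\sigma(N(v))$, a set of size $i-1$ or $i-2$ whose size depends only on whether the neighbours share a color. We would decompose
\[
\Ent[\mu]{f}=\Ent[\mu]{\E[\mu]{f\mid \sigma_{V'\setminus v}}}+\E[\mu]{\Ent[\mu]{f\mid\sigma_{V'\setminus v}}},
\]
apply induction to the first term, viewed as a function on colorings of $G'-v$, and bound the second term by a single-vertex entropy. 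We then need to re-express both in terms of $\E{\Ent{f\mid I_c}}$ for $G'$. The revealed set $I_c(\sigma)$ on $G'$ determines $I_c$ on $G'-v$ together with whether $v\in I_c$, so conditional entropies given $I_c$ on $G'$ dominate, up to a correction, those of the projected function given $I_c$ restricted to $G'-v$. We would handle the correction using convexity of entropy (the chain rule) together with the uniformity of $\sigma_v$ over the remaining colors. Chordal graphs (simplicial elimination) are handled the same way, with the clique neighbourhood giving $|[i]\setminus\sigma(N(v))|=i-|N(v)|$ deterministically.

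The main obstacle is the degree-$2$ case when the two neighbours of $v$ are \emph{not} adjacent. Then the number of available colors for $v$ fluctuates, and the measure on $G'-v$ induced from $\mu_{G',i}$ is not uniform: it is tilted by weight $i-1$ versus $i-2$ according to whether the neighbours agree. The induction hypothesis only covers uniform distributions, so it does not directly apply. We would first try to avoid this case by choosing the elimination vertex inside a triangle. Every $2$-connected outerplanar graph with at least three vertices has a degree-$2$ vertex whose neighbours are adjacent (an ear of the outer cycle's triangulation), and non-2-connected graphs are handled by gluing blocks at cut vertices. For gluing at a cut vertex $u$, color symmetry makes the two sides conditionally independent given $\sigma_u$, and the uniform marginal at $u$ lets entropies split. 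If an outerplanar block is not triangulated, so that no adjacent-neighbour ear exists, we would instead contract or remove a chordless face-path and compare the tilted measure to the uniform one. The tilt is bounded by a factor $(i-1)/(i-2)\le 2$, which costs a constant factor per step. That constant would accumulate across induction steps unless it is absorbed locally, and controlling this accumulation is where we expect the real difficulty to lie.
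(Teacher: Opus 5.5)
Your reduction to an $O(1)$ ACTE constant via \Cref{thm:tensorization-of-entropy-implies-mixing}, and the idea of peeling off pieces with the law of total entropy, match the paper's framework. But the proposal stops at the crux: you have no argument for non-triangular faces. Moreover, the structural claim meant to sidestep them is false as stated. A $2$-connected outerplanar graph need not have a degree-$2$ vertex with adjacent neighbours ($C_n$ with $n\ge 4$ is a counterexample; the claim holds only for maximal outerplanar graphs). So cycles are the essential case, not an exception.

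The missing idea is to remove a whole ear rather than a single vertex. By \Cref{cor:structural-decomposition}, $G$ is built by adding isolated vertices, pendant edges, and cycles $J_i$ glued to $G_{i-1}$ along one edge $\{v_1,v_k\}$. Since $v_1,v_k$ are adjacent, they always get distinct colors. By color symmetry, every coloring of $G_{i-1}$ then has the same number of extensions to the internal path $W$. Hence the marginal on $G_{i-1}$ is exactly uniform, with no $(i-1)/(i-2)$ tilt, and the conditional law on $W$ is the uniform coloring of a cycle with one edge pinned. A single-vertex elimination costing a constant factor per step, as you fear, would compound over $\Theta(n)$ steps and cannot give the stated bound.

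Even with this reduction you still need an entropy inequality for pinned cycles, which does not follow from simplicial-vertex induction. Your chosen vehicle (ACSE for all induced subgraphs, fed into \Cref{thm:subadditivity-implies-tensorization}) is also too lossy.

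\begin{itemize}
\item That theorem needs $C(i)$ essentially equal to $1/(i-2)$, and any fixed factor $\kappa>1$ above it becomes $\kappa^{q-2}$, violating $O(q^2(\log n+\log\log q))$.
\item Your target $1/(i-1)$ is impossible: a single vertex already has constant $\frac{\log i}{(i-1)\log(i-1)}$.
\end{itemize}

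The paper obtains pinned-cycle ACSE only with $\kappa=1/(1-\log 2)$ (\Cref{lem:ACSE-cycle}). It compares with the path obtained by deleting a uniformly random unpinned edge, extending $f$ by $1$ on colorings that violate that edge. It then uses $\Phi(1+t(x-1))\ge\Phi(1-t)\Phi(x)$ together with $1-\bar\psi_c\le(1-\alpha_c/(n-1))\log 2$.

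To avoid the exponential loss, the paper carries ACTE rather than ACSE as the inductive invariant. The refined inequality (\Cref{lem:refined-ACSE-cycle}) separates empty from nonempty revealed color classes:
\begin{itemize}
\item a nonempty class breaks the cycle into pinned paths, which are chordal with ACTE constant $1$;
\item an empty class leaves a pinned cycle with $q-1$ colors.
\end{itemize}
An induction on $q$ then gives pinned-cycle ACTE with constant $1/(1-\log 2)$ (\Cref{lem:q-minus-two-ACSE-cycle}), which plugs into the ear-by-ear induction.

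None of these ingredients appears in your plan: ear removal with uniform marginal, the random-edge-deletion comparison, and the direct ACTE induction on $q$. The step you flag as ``where the real difficulty lies'' is exactly what they resolve.
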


Finally, we prove a complementary lower bound for the WSK dynamics on path graphs, showing that the upper bounds on mixing times are asymptotically optimal with respect to the size of the graph: 

\begin{theorem}[Lower bound on the mixing time of the WSK dynamics]\label{thm:WSK-rapid-mixing-lower}
Let $G$ be a path graph with $n$ vertices and let $q\geq 3$. The WSK dynamics for proper $q$-colorings of $G$ has mixing time $\Omega_{q}(\log n)$.
\end{theorem}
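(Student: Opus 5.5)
We will prove the lower bound $\Omega_q(\log n)$ with a standard distinguishing-statistic argument based on a \emph{coupon-collector} phenomenon. The idea is to find $\Theta(n)$ well-separated local features of the path, each of which stays frozen until the WSK dynamics touches it with a specific pair of colors. A touch happens with probability $\Theta_q(1)$ per step, so after $t=c\log n$ steps a polynomial number $n^{1-\Theta(c)}$ of the features remain untouched. This is detectable, because under stationarity the feature statistic concentrates.

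\textbf{Initial state.} Let the path be $v_1,\dots,v_n$. Start from the periodic coloring $\sigma^0$ with $\sigma^0_{v_i}=1$ for odd $i$ and $\sigma^0_{v_i}=2$ for even $i$. Then $\sigma^0$ uses only colors $\{1,2\}$, and the $\{1,2\}$-Kempe component is the whole path. Choosing the pair $\{1,2\}$ therefore only flips the whole path or leaves it fixed.

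\textbf{Statistic.} Consider the position $v_i$ for $i$ odd in $\{1,\dots,n\}$, taken from a set of $m=\Theta(n)$ positions spaced at least $3$ apart. Call such a position \emph{untouched} at time $t$ if no step so far has changed the colors of $v_{i-1},v_i,v_{i+1}$ except through simultaneous global swaps. A more robust choice is the statistic
\[
X(\sigma)=\#\{i: \sigma_{v_j}\in\{\sigma_{v_1\ldots}\}\}.
\]
Simplest of all is $Y(\sigma)$, the number of indices $i$ with $\sigma_{v_i}=\sigma_{v_{i+2}}$.

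\textbf{Why $Y$ works.} Under $\sigma^0$ we have $Y=n-2$. Under the uniform measure the path is a Markov chain, and $\Pr{\sigma_{v_i}=\sigma_{v_{i+2}}}=1/(q-1)$. Correlations decay exponentially, so $Y$ concentrates around $(n-2)/(q-1)$ with $O(\sqrt n)$ fluctuations.

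\textbf{Local change requires a specific pair.} Under the dynamics, an index $i$ with $\sigma_{v_i}=\sigma_{v_{i+2}}=a$ and $\sigma_{v_{i+1}}=b$ can only change status at a step whose chosen pair meets $\{a,b\}$ or involves a color adjacent to $v_{i+1}$'s neighbors. The key claim is that for each $i$, the probability that the triple $(\sigma_{v_i},\sigma_{v_{i+1}},\sigma_{v_{i+2}})$ is modified in a way that breaks the equality, in a single step, is at most $p=1-\Theta(1/q^2)$ bounded away from $1$. A single step's swap of a pair involving none of the triple's colors leaves it unchanged, and pairs $\{a,c\}$ with $c\notin$ the triple occur with probability bounded below.

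A cleaner route is to track, for each $i$, the indicator $Z_i(t)$ that every pair chosen up to time $t$ avoided $\{\sigma\text{-colors at }v_i\}$. Since only the chosen pair matters, this is handled more simply by conditioning on the sequence of chosen pairs. Given that sequence, with probability $\Theta_q(1)^t$ a fixed local window is unchanged up to global relabeling. I expect making this precise to be the main obstacle: flips of Kempe components are correlated across the path, and the conditional independence across far-apart windows must be justified. The plan is to use the fact that components are flipped independently, and to bound $\E{Y(\sigma^t)}\ge (n-2)\bigl((1-\tfrac1{q-1})\rho^t+\tfrac1{q-1}\bigr)$ for some $\rho=\rho(q)>0$.

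\textbf{Conclusion.} Bound $\Var{Y(\sigma^t)}=O_q(n)$ using the fact that windows at distance $\gg 1$ interact only through shared component flips, whose effect on the covariance is controlled. Take $t=c_q\log n$ with $c_q$ small enough that $n\rho^t\ge n^{2/3}\gg\sqrt n$. Chebyshev's inequality then separates the law of $Y(\sigma^t)$ from its stationary law, so the total variation distance is at least $1/2$. This gives $T_{\mathrm{mix}}=\Omega_q(\log n)$.
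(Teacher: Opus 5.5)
There is a genuine gap, and it sits in the variance step, not the mean. Your bound $\mathrm{Var}[Y(\sigma^t)]=O_q(n)$ is false. Every window sees the same random sequence of colour pairs $\mathcal{A}_t=(A_1,\dots,A_t)$, and this shared randomness moves $Y$ by order $n$. Your remark that far-apart windows interact ``only through shared component flips'' misses this source of correlation.

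Already at $t=1$ it fails. If $A_1=\{1,2\}$ or $A_1\cap\{1,2\}=\emptyset$, then $Y=n-2$. If instead $A_1=\{1,c\}$ with $c\ge 3$, every vertex coloured $1$ is an isolated component, so the odd positions become i.i.d.\ uniform on $\{1,c\}$ and $Y\approx\frac34(n-2)$; the case $\{2,c\}$ is the same. Both alternatives have probability bounded away from $0$, so $\mathrm{Var}[Y(\sigma^1)]=\Theta_q(n^2)$.

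This persists for all $t$. Your mean estimate is in fact exact: each indicator $\mathbf{1}\{\sigma_{v_i}=\sigma_{v_{i+2}}\}$ evolves marginally as a two-state chain, breaking with probability $\frac{q-2}{q(q-1)}$ and re-forming with probability $\frac{1}{q(q-1)}$. Hence $\mathbb{E}[Y(\sigma^t)]-\frac{n-2}{q-1}=(n-2)\frac{q-2}{q-1}(1-\frac1q)^t$. Conditioning on whether $A_1$ meets $\{1,2\}$ in exactly one colour then gives $\mathrm{Var}[\mathbb{E}[Y(\sigma^t)\mid\mathcal{A}_t]]\ge\delta_q\bigl(\mathbb{E}[Y(\sigma^t)]-\frac{n-2}{q-1}\bigr)^2$ for every $t\ge 1$, with $\delta_q>0$. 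So the unconditional variance is never negligible next to the squared gap, and your Chebyshev step, which needs exactly that, cannot be run.

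The repair is to condition on $\mathcal{A}_t$, as the paper does. The conditional variance is then manageable. Given $X_{t-1}$ and $A_t$, components flip independently, and each flip changes $Y$ (and each ordered window-pattern count) by $O(1)$. A law-of-total-variance recursion therefore works, but for your statistic it must be run over the whole family of ordered window-pattern counts, because $\mathbb{E}[Y(X_t)\mid X_{t-1},A_t]$ is not a function of $Y(X_{t-1})$.

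What is genuinely missing is a lower bound on the \emph{conditional} bias $\bigl|\mathbb{E}[Y\mid\mathcal{A}_t]-\text{stationary value}\bigr|$ that holds for every (or most) pair sequences. The unconditional mean does not give this. Your observation that, given the sequence, a window survives with probability $\Theta_q(1)^t$ only lower-bounds the probability of agreement, not its excess over $\frac{1}{q-1}$.

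The paper obtains this bound by tracking instead the fractions $Y_{\{a,b\}}$ of edges whose endpoints are coloured $\{a,b\}$. Interior edges of a flipped component keep their type, and the conditional means evolve by exact pairwise averaging: $\mathbb{E}[Y_{\{a,c\}}(X_t)\mid X_{t-1},A_t=\{a,b\}]=\frac12\bigl(Y_{\{a,c\}}+Y_{\{b,c\}}\bigr)(X_{t-1})$. Starting from $Y_{\{1,2\}}(\sigma^0)=1$, every $\mathbb{E}[Y_{\{1,2\}}(X_t)\mid\mathcal{A}_t]$ therefore lies in $2^{-t}\mathbb{Z}$. Since $\binom q2$ is never a power of two for $q\ge 3$, this mean differs from $1/\binom q2$ by at least $2^{-t}/\binom q2$ for every sequence.

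The paper then combines this with $\mathrm{Var}[Y_{\{1,2\}}(X_t)\mid\mathcal{A}_t]\le t/(n-1)$ and a stationary variance of $O_q(1/n)$. Chebyshev then separates the two laws at $t\approx\frac12\log_2 n-\frac12\log_2\log_2 n-C$.
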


\subsection{Technical Overview}
To bound the modified log-Sobolev constant and consequently the mixing time of the WSK dynamics, we interpret the dynamics through the lens of localization schemes~\cite{CE22}. While standard Glauber dynamics aligns with a \emph{coordinate-by-coordinate scheme} revealing one random vertex at a time, we introduce a \emph{colorwise localization scheme} that reveals an entire uniformly chosen color class at once. Under this framework, our new approximate colorwise tensorization of entropy (ACTE, \Cref{def:color-entropy-tensorization}) and approximate colorwise subadditivity of entropy (ACSE, \Cref{def:color-entropy-subadditivity}) criteria correspond to entropy conservation throughout the entire localization process and at a single step, respectively. This structural interpretation naturally yields \Cref{thm:tensorization-of-entropy-implies-mixing} and \Cref{thm:subadditivity-implies-tensorization}.

Establishing the vertex-wise approximate subadditivity of entropy criterion typically reduces to bounding the maximal eigenvalue of an influence matrix using the theory of high-dimensional expanders. However, this spectral approach does not seem directly applicable for ACSE, which operates on a global color class. To overcome this, we leverage the color symmetry inherent to the ACSE condition and develop an induction method to prove these global entropy inequalities. This gives a direct graph-theoretic route to proving global entropy inequalities, complementary to the usual local-to-global mechanisms from high-dimensional expansion.
Specifically:\begin{itemize}\item \textbf{Chordal Graphs} (\Cref{thm:WSK-rapid-mixing-chordal}): By definition, every nonempty chordal graph contains a simplicial vertex. Crucially, removing a simplicial vertex preserves the uniformity of the marginal distribution over proper colorings on the residual graph. This property enables an inductive proof of the entropy inequality via the law of total entropy.\item \textbf{Outerplanar Graphs} (\Cref{thm:WSK-rapid-mixing-outerplanar}): The induction generalizes by replacing the simplicial vertex with a pinned cycle (a cycle with two pinned vertices). Because ACSE for pinned cycles cannot be established via direct induction, we introduce a comparison argument by comparing the distribution on cycles against that on paths from removing a uniform random edge, which may be of independent interest. Furthermore, because pinned cycles yield a larger ACSE constant, directly invoking \Cref{thm:subadditivity-implies-tensorization} would introduce an exponential dependency with respect to $q$ in the mixing time. We circumvent this by carefully deriving the ACTE constant directly, an approach conceptually analogous to an average-case local-to-global theorem~\cite{anari2022optimal}.\end{itemize}

Finally, to prove the mixing time lower bound (\Cref{thm:WSK-rapid-mixing-lower}) for the WSK dynamics on path graphs, we employ a distinguishing statistic that tracks the empirical frequency of edges exhibiting a specific pair of adjacent colors. By proving that the variance of this local pattern decays too slowly under the global WSK transition matrix, we leverage standard bounds on the total variation distance to establish an $\Omega_{q}(\log n)$ lower bound.

\subsection{Organization}

The paper is organized as follows:
\begin{itemize}
    \item In \Cref{sec:prelim}, we introduce several necessary preliminaries and notation.
    \item In \Cref{sec:localization}, we interpret the WSK dynamics under a new colorwise localization scheme, and justify the first part of our contribution, namely proving \Cref{thm:tensorization-of-entropy-implies-mixing,thm:subadditivity-implies-tensorization}. 
    \item In \Cref{sec:induction}, we introduce inductive methods for proving the desired entropy inequalities for specific types of graphs, proving \Cref{thm:WSK-rapid-mixing-chordal,thm:WSK-rapid-mixing-outerplanar}.
    \item In \Cref{sec:lower-bound}, we give a lower bound on the mixing time of the WSK dynamics for path graphs, proving \Cref{thm:WSK-rapid-mixing-lower}.
    \item In \Cref{sec:conclusions}, we summarize our contributions and outline potential future directions.
\end{itemize}

\newpage

\section{Preliminaries and Notation}\label{sec:prelim}

\subsection{Graph Theory Basics}

Throughout the paper, all graphs are finite and simple.  For a graph $G=(V,E)$ and a vertex $v\in V$, we write
$N_G(v) \defeq \{u\in V:\{u,v\}\in E\}$ to denote the neighborhood of $v$.   For $U\subseteq V$, let $G[U]$
denote the subgraph of $G$ induced by $U$.  We also write $G-v$ as a
shorthand for $G[V\setminus\{v\}]$.

A set $K\subseteq V$ is called a \emph{clique} if every two distinct vertices
in $K$ are adjacent.  The \emph{clique number} of $G$, denoted by $\omega(G)$,
is the maximum size of a clique in $G$:
\[
    \omega(G) \defeq \max\{|K|: K\subseteq V \text{ is a clique of } G\}.
\]

We then introduce two special kinds of graphs--chordal graphs and outerplanar graphs.

\subsubsection{Chordal Graphs}
A \emph{chord} of a cycle $C$ in $G$ is an edge joining two non-consecutive
vertices of $C$.  A graph $G$ is called \emph{chordal} if every cycle of length
at least four has a chord.  Equivalently, $G$ is chordal if it has no
induced cycle of length at least four.

A vertex $v\in V$ is called \emph{simplicial} if its neighborhood forms a
clique, that is, if $G[N_G(v)]$ is a complete graph.  In particular,
isolated vertices and vertices of degree one are simplicial.

We will only use the following standard facts about chordal graphs.

\begin{fact}[Standard facts about chordal graphs]
\label{fact:chordal-facts}
Let $G=(V,E)$ be a chordal graph. The following holds:
\begin{enumerate}
    \item Every induced subgraph of $G$ is chordal;\label{item:chordal-1}
    \item If $V\neq\emptyset$, then $G$ has a simplicial vertex;\label{item:chordal-2}
    \item If $G'$ is a proper complete induced subgraph of $G$, then $G$ has a simplicial vertex outside $G'$.\label{item:chordal-3}
\end{enumerate}
\end{fact}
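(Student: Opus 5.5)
We prove the three items of \Cref{fact:chordal-facts} in order. Item (3) implies item (2), because the empty set can serve as $G'$ when $V\neq\emptyset$. Item (1) does not depend on the others.

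\textbf{Item (1).} Let $U\subseteq V$. Any induced cycle of $G[U]$ is also an induced cycle of $G$, since induced subgraphs of induced subgraphs are induced. So $G[U]$ has no induced cycle of length at least four.

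\textbf{Item (3), the Dirac-type statement.} We prove a slightly stronger claim by induction on $|V|$: every chordal graph $G$ is either complete, or has two non-adjacent simplicial vertices. This claim gives item (3) as follows. If $G$ is complete, every vertex is simplicial, and $V\setminus V(G')\neq\emptyset$ because $G'$ is proper. If $G$ is not complete, $G$ has two non-adjacent simplicial vertices. They cannot both lie in the clique $G'$, so one of them lies outside it.

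For the inductive step, suppose $G$ is not complete and pick non-adjacent vertices $u,w$. Let $S$ be a minimal $u$--$w$ separator, and let $A$ and $B$ be the components of $G-S$ containing $u$ and $w$.

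\emph{Key lemma: $S$ is a clique.} Take $x,y\in S$. By minimality, both $x$ and $y$ have neighbours in $A$ and in $B$. Choose a shortest $x$--$y$ path whose interior lies in $A$, and another whose interior lies in $B$. Their union is a cycle of length at least four. Any chord would have to be $xy$ itself, because there are no edges between $A$ and $B$ and the two paths are shortest. Chordality therefore forces $xy\in E$.

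Next apply the induction hypothesis to $G_A=G[A\cup S]$, which is chordal by item (1) and smaller than $G$. If $G_A$ is complete, choose any vertex of $A$. Otherwise $G_A$ has two non-adjacent simplicial vertices; since $S$ is a clique, at least one of them lies in $A$. Either way we obtain a vertex $a\in A$ that is simplicial in $G_A$. Because $N_G(a)\subseteq A\cup S$, the vertex $a$ is also simplicial in $G$.

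Symmetrically we obtain a vertex $b\in B$ that is simplicial in $G$. The vertices $a$ and $b$ are non-adjacent, since $S$ separates $A$ from $B$.

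The base cases are graphs with at most one vertex, which are complete.

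The main obstacle is the clique-separator lemma, specifically verifying that the cycle built from the two shortest paths really has no chord. The argument needs three things: each path is induced (because it is shortest), there are no $A$--$B$ edges, and the path interiors are disjoint from $S$ apart from the endpoints $x$ and $y$.
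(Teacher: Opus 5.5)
Your proof is correct. The paper gives no proof of this fact; it only records it as standard. What you wrote is the classical Dirac argument: a minimal $u$--$w$ separator in a chordal graph is a clique, so every non-complete chordal graph has two non-adjacent simplicial vertices. That argument is exactly what makes items (2) and (3) standard, and your deduction of (3) from the stronger claim also covers the case the paper actually uses, namely a pinned clique $K$ that may be empty.

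One wording fix is needed in the clique-separator step. Require the two $x$--$y$ paths to have at least one interior vertex, or argue by contradiction assuming $xy\notin E$. Otherwise, when $xy\in E$, the shortest path ``whose interior lies in $A$'' is the edge $xy$ itself, and the union is not a cycle.

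With that convention your chord analysis goes through. A chord of $P_A$ other than $xy$ would shortcut $P_A$ while keeping a nonempty interior in $A$, contradicting minimality. There are no edges between $A$ and $B$. So $xy$ is the only possible chord of a cycle of length at least four.
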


The \emph{line graph} of $G$, denoted by $L(G)$, is the graph whose vertex set
is $E$ and in which two distinct vertices $e,e'\in E$ are adjacent
exactly when the corresponding edges of $G$ share an endpoint.
We will use the following elementary observation for trees.

\begin{fact}
\label{fact:line-graph-tree}
If $T$ is a tree with maximum degree $\Delta$, then $L(T)$ is chordal
and
\[
    \omega(L(T))=\Delta.
\]
\end{fact}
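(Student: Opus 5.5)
We prove the two claims of Fact~\ref{fact:line-graph-tree} separately: first $\omega(L(T))=\Delta$, then chordality via a characterization of cliques in $L(T)$.

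\textbf{The clique number.} We first show that every clique of $L(T)$ of size at least $2$ consists of edges of $T$ sharing one common endpoint. Let $K\subseteq E(T)$ be a clique in $L(T)$, so its edges pairwise intersect. If $|K|\le 2$ this is immediate. Otherwise take $e_1=\{x,y\}\in K$. Every other edge of $K$ contains $x$ or $y$.

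Suppose $e_2\ni x$ with $e_2\neq e_1$, and $e_3\ni y$ with $e_3\neq e_1$. Since $e_2$ and $e_3$ must intersect, they share a vertex $z\notin\{x,y\}$. Then $x,y,z$ form a triangle in $T$, which is a contradiction. Hence all edges of $K$ contain the same endpoint. Consequently every clique of $L(T)$ is contained in the star of some vertex $v$, which has size $\deg(v)\le\Delta$. The star at a vertex of maximum degree is a clique of size exactly $\Delta$, so $\omega(L(T))=\Delta$. The degenerate case where $T$ has no edges is trivial.

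\textbf{Chordality.} Suppose $L(T)$ contains an induced cycle $e_1,e_2,\dots,e_k$ with $k\ge4$. Consecutive edges $e_i,e_{i+1}$ share a vertex $v_i$ (indices taken mod $k$). If $v_i=v_{i+1}$, then $e_i,e_{i+1},e_{i+2}$ all contain $v_i$, so $e_i$ and $e_{i+2}$ are adjacent in $L(T)$. This is a chord, since $k\ge4$ means $e_i$ and $e_{i+2}$ are non-consecutive. So $v_i\neq v_{i+1}$ for all $i$, which gives $e_{i+1}=\{v_i,v_{i+1}\}$.

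Now $v_1,\dots,v_k$ is a closed walk in $T$ whose steps are along the edges $e_2,\dots,e_k,e_1$. These edges are distinct because they are distinct vertices of $L(T)$. Are the $v_i$ distinct? If $v_i=v_j$ for non-cyclically-adjacent $i$ and $j$, then $e_i$ and $e_{j}$ both contain this vertex. These are non-consecutive cycle members when $k\ge4$, except possibly in small index configurations that can be checked directly; in all cases this yields a chord. So the $v_i$ are distinct, and $v_1\cdots v_k$ is a cycle of length $k\ge 4$ in $T$, contradicting acyclicity.

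A cleaner route avoids the distinctness bookkeeping. A closed walk with no immediate backtracking, meaning consecutive edges are distinct, cannot exist in a tree. Here consecutive edges $e_{i+1}\ne e_{i+2}$ are distinct, which already gives the contradiction. Hence $L(T)$ has no induced cycle of length at least four, i.e., it is chordal.

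The only mildly delicate point is the no-backtracking argument in the chordality step; the rest is routine.
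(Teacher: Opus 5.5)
Your proof is correct. The paper records this as a standard fact and gives no proof, so there is no argument of theirs to compare against; your write-up supplies the missing justification.

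One tightening. In your first chordality route, the hedge about ``small index configurations'' is unnecessary. Suppose $v_i=v_j$ with $j\not\equiv i,\,i\pm1 \pmod k$. Then $e_i$ and $e_j$ are distinct, non-consecutive members of the cycle that share a vertex, so they form a chord. The case $j\equiv i\pm1$ was already excluded by your $v_i\neq v_{i+1}$ step. So the $v_i$ are pairwise distinct, and $v_1\cdots v_k$ is directly a $k$-cycle in $T$. Your no-backtracking finish is also valid, since a non-backtracking walk in a tree is a path and so cannot close up.

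An alternative for chordality, closer to how the paper uses chordal graphs through simplicial vertices, is to build a perfect elimination ordering. A pendant edge $\{\ell,u\}$ of $T$ is simplicial in $L(T)$: its neighbours are the other edges at $u$, and these pairwise share $u$. Deleting it corresponds to deleting the leaf $\ell$, since $L(T-\ell)=L(T)-\{\ell,u\}$, so iterating eliminates all of $L(T)$ through simplicial vertices. This forces a chord in every cycle of length at least four: the first cycle vertex eliminated has both of its cycle neighbours among its later neighbours, which form a clique. Your induced-cycle argument is equally short; the elimination route has the advantage of exhibiting the simplicial vertices the paper's induction actually consumes.
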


\subsubsection{Outerplanar Graphs}

A graph $G$ is \emph{outerplanar} if the graph has a planar embedding $\Gamma$ such that one face (``an outer face'') is
incident to all vertices. The weak dual of an
outerplanar graph $G$, denoted $G^*$
is the dual graph with the vertex corresponding to the outer face removed.

For the purpose of this paper, we only need the following fact for outerplanar graphs.

\begin{fact}\label{fact:outerplanar}
For any 2-connected outerplanar graph $G$, its weak dual $G^*$ is a tree.
\end{fact}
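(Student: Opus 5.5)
The statement is \Cref{fact:outerplanar}: for a 2-connected outerplanar graph $G$, the weak dual $G^*$ is a tree. I would prove connectivity and acyclicity of $G^*$ separately, working with an outerplanar embedding $\Gamma$ in which the outer face $f_0$ is incident to every vertex.

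\emph{Step 1: the outer face is a Hamiltonian cycle.} Since $G$ is 2-connected (with at least three vertices), every face boundary of $\Gamma$ is a cycle. In particular the boundary of $f_0$ is a cycle $H$. Because every vertex lies on $f_0$, $H$ passes through all vertices. Every edge of $G$ is therefore either an edge of $H$ or a chord of $H$, and each chord lies in the closed disk $D$ bounded by $H$, on the side opposite to $f_0$. The bounded faces are exactly the regions into which the chords cut $D$. Two chords cannot cross, by planarity.

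\emph{Step 2: connectivity of $G^*$.} I would induct on the number $k$ of chords. If $k=0$, $D$ is a single face and $G^*$ is a single vertex. Otherwise, pick any chord $e=\{u,v\}$. It splits $H$ into two cycles $H_1$ and $H_2$, which share $e$ and bound disks $D_1$ and $D_2$. Each $G_i=G[V(H_i)]$ is again 2-connected and outerplanar with strictly fewer chords. Its bounded faces are exactly the faces of $\Gamma$ lying in $D_i$, since no chord crosses $e$. By induction each $G_i^*$ is a tree. The faces of $G$ on the two sides of $e$ are adjacent in $G^*$ through $e$. Hence $G^*$ is obtained by joining the trees $G_1^*$ and $G_2^*$ with the single edge dual to $e$. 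So $G^*$ is connected.

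\emph{Step 3: acyclicity.} The same decomposition shows the dual edge of $e$ is the only edge of $G^*$ between the faces in $D_1$ and those in $D_2$: any other shared edge would be a chord lying in both disks, and the only such edge is $e$. Two trees joined by one edge form a tree, so the induction closes.

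Alternatively, acyclicity follows from a count. The graph $G^*$ has $|E|-|V|+1$ vertices by Euler's formula, and one edge per chord. The number of chords is $|E|-|V|$, which equals $|V(G^*)|-1$. A connected graph with this many edges is a tree.

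The only delicate point is Step 1, namely that face boundaries of a 2-connected plane graph are cycles. This is a standard fact, and I would cite it rather than prove it. Everything else is routine bookkeeping with the Jordan curve theorem applied to chords.
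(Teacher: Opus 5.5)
Your proof is correct. The paper itself gives no proof of \Cref{fact:outerplanar}: it states it as a standard fact about outerplanar graphs and uses it only inside the proof of \Cref{cor:structural-decomposition}. So the comparison is between a citation and your self-contained argument.

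Your argument runs as follows. The boundary of the outer face is a Hamiltonian cycle $H$, and every other edge is a non-crossing chord on the far side of $H$. You then induct on the number of chords, splitting along one chord $e=\{u,v\}$. The Euler count gives an independent check: there are $|E|-|V|+1$ bounded faces against $|E|-|V|$ chords. The argument needs only two ingredients: that faces of a 2-connected plane graph are bounded by cycles, and the $\theta$-graph form of the Jordan curve theorem.

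It also yields something the paper leaves implicit. Your Step~3 shows that the two sides of a chord $e=\{u,v\}$ meet only along $e$. Hence the dual edge of $e$ is the sole link between the two subtrees, and the two sides share only $u$ and $v$. This is exactly what justifies the paper's informal claim, in the proof of \Cref{cor:structural-decomposition}, that each newly added facial cycle meets the previously built graph in exactly one edge.

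Three minor presentational points:
\begin{enumerate}
\item ``Closed disk on the side opposite to $f_0$'' tacitly assumes $f_0$ is the unbounded face. You may arrange this without loss of generality by re-embedding, or simply say ``the closure of the other face of $H$.''
\item Steps~2 and~3 are really a single induction showing $G^*$ is a tree. The \emph{joining} claim in Step~2 already relies on the uniqueness argument of Step~3.
\item Your restriction to graphs with at least three vertices matches the paper's usage, where bridges are handled separately.
\end{enumerate}
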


We transform \Cref{fact:outerplanar} into the form of the following corollary, which will be useful for our purpose.

\begin{corollary}[Structural Decomposition of Outerplanar Graphs]\label{cor:structural-decomposition}
    Every finite outerplanar graph $G$ admits a sequence of subgraphs
    \[
    \emptyset = G_0 \subset G_1 \subset \dots \subset G_m = G
    \]
    such that $G_i = G_{i-1} \cup J_i$ for every $i \geq 1$, 
    where the intersection graph $H_i = G_{i-1} \cap J_i$ satisfies one of the following conditions:
    \begin{enumerate}
        \item $J_i$ is an isolated vertex, and $H_i = \emptyset$;\label{item:outplanar-1}
        \item $J_i$ is a single edge, and $H_i$ consists of a single vertex;\label{item:outplanar-2}
        \item $J_i$ is a cycle, and $H_i$ consists of a single edge.\label{item:outplanar-3}
    \end{enumerate}
\end{corollary}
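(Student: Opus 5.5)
We argue by induction on the number of vertices of $G$, reducing to the 2-connected case and then peeling faces off along the weak dual tree of \Cref{fact:outerplanar}. If $G$ is empty there is nothing to prove. If $G$ is disconnected, we take the blocks of one component first, then continue. Suppose $G$ has a cut vertex (or is a single vertex). Then we take a leaf block $B$ of the block--cut tree of $G$, with attaching vertex $v$. By the induction hypothesis, $G'=G-(V(B)\setminus\{v\})$ has a sequence $\emptyset=G_0\subset\dots\subset G_k=G'$ of the required form. If $B$ is a single edge $\{v,w\}$, we append $J=B$, which meets $G'$ in the single vertex $v$, giving case~\eqref{item:outplanar-2}. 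If $G$ is a single isolated vertex, or the new component starts, we use case~\eqref{item:outplanar-1}. It therefore remains to append a 2-connected outerplanar block $B$ that meets the already built graph in at most one vertex. Subgraphs of outerplanar graphs are outerplanar, so the induction applies.

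For a 2-connected outerplanar block $B$ with at least three vertices, we fix an outerplanar embedding. The boundary of each bounded face is a cycle; these faces are the vertices of the weak dual $B^*$, which is a tree by \Cref{fact:outerplanar}, and two faces are adjacent in $B^*$ exactly when they share an edge (a chord of the outer cycle). We root $B^*$ at a face $F_1$ containing the attaching vertex $v$ (or at an arbitrary face if $B$ meets nothing), and list the faces $F_1,F_2,\dots,F_r$ in BFS order, so each $F_j$ with $j\ge 2$ is adjacent in the tree to some earlier $F_{p(j)}$.

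We add the faces one at a time.
\begin{itemize}
\item The first face $F_1$ is a cycle meeting the previous graph in at most the single vertex $v$. This is not literally one of the three allowed cases, so we split it: first add the path $F_1$ minus one edge $e$ incident to $v$ edge by edge, each addition meeting the existing graph in exactly one vertex (case~\eqref{item:outplanar-2}). Then add the cycle $F_1$ itself as $J$, where the intersection is the path already present. This does not fit case~\eqref{item:outplanar-3} either.
\item So instead we add $F_1$ minus $e$ as a sequence of single edges (case~\eqref{item:outplanar-2}), and finally handle $e$ by viewing it as closing the cycle. Since $e$ meets the existing graph in two vertices, neither case~\eqref{item:outplanar-2} nor~\eqref{item:outplanar-3} applies, so this does not quite close $F_1$ as stated.
\item A cleaner option is to start $B$ with the edge $e=\{v,w\}$ of $F_1$ at $v$, added via case~\eqref{item:outplanar-2}, or via case~\eqref{item:outplanar-1} followed by case~\eqref{item:outplanar-2} if there is no attachment. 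We then add the cycle $F_1$ with $H=e$, which is case~\eqref{item:outplanar-3}.
\item Each later face $F_j$ is a cycle sharing exactly the chord $F_j\cap F_{p(j)}$ with the union of the earlier faces. Here we use the tree structure and outerplanarity: two faces share at most one edge, and the vertices of $F_j$ other than the chord endpoints lie on the outer cycle, strictly between those endpoints, on the side not yet built. This is case~\eqref{item:outplanar-3} again.
\end{itemize}

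The main obstacle is the last claim: that when $F_j$ is added, its intersection with everything built so far, including the earlier blocks, is exactly the one shared edge. Earlier blocks meet $B$ only at $v$, and $v\in F_1$ is already present before $F_j$ is added, so no new overlap arises from them. Within $B$, the chord $\{x,y\}$ separates $B$ into two sides. All faces already built lie on the root side, because the BFS subtree hanging below $F_j$ in $B^*$ is exactly the set of faces on the other side of the chord. Hence $F_j$ meets the built part exactly in $\{x,y\}$ and the chord edge. This uses the standard fact that removing the two endpoints of a chord of a 2-connected outerplanar graph disconnects it into the two arcs of the outer cycle.

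Finally, we need that the union of all faces is $B$. This holds because in a 2-connected outerplanar graph every edge lies on the boundary of some bounded face.
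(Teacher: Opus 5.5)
Your proposal is correct and follows essentially the same route as the paper. You process the block--cut tree parent-before-child (your leaf-block induction is the same order read in reverse), start each 2-connected block with an edge of the root face at the attaching vertex via case (2), and then add the bounded faces in top-down order of the weak-dual tree via case (3); your chord-separation argument and the face-covering remark simply spell out what the paper asserts briefly. In the write-up, delete your first two bullets about $F_1$, which are dead ends, and keep only the third.
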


\begin{proof}
    We construct the sequence by fixing an outerplanar embedding of $G$ and processing each connected component independently. For each component, we traverse its block-cut tree starting from an arbitrary root vertex, adding this initial root via \Cref{item:outplanar-1}.

    We then process the remaining blocks of the component in a top-down (parent-before-child) topological order. Let $z \in V(G_{i-1})$ be the cut vertex connecting a newly visited block $B$ to the already-constructed graph.
    
    If $B$ is a bridge, we add it via \Cref{item:outplanar-2}, as it shares exactly the single vertex $z$ with $G_{i-1}$.

    If $B$ is a 2-connected block, its weak dual $B^*$ is a tree by \Cref{fact:outerplanar}. Choose an edge $e_0 \in E(B)$ that is incident to $z$ and lies on the outer face. We first add $e_0$ via \Cref{item:outplanar-2}. Next, we root the weak dual tree $B^*$ at the unique internal face of $B$ containing $e_0$. We then add the internal facial cycles of $B$ following a top-down tree traversal of $B^*$. 
    
    When a facial cycle is added, it acts as a child in $B^*$, meaning it shares exactly one edge with its parent face in the weak dual. Because the embedding is outerplanar, this cycle cannot share any other vertices or edges with previously constructed subgraphs (otherwise, the embedding would be violated by a crossing chord or an enclosed vertex). Thus, each facial cycle is added via \Cref{item:outplanar-3}. This constructs the required sequence.
\end{proof}

\subsection{Colorings and Edge Colorings}
\label{subsec:coloring-distributions}

For an integer $q\geq 1$, write $[q]\defeq\{1,2,\ldots,q\}$.  Let
$G=(V,E)$ be a finite simple graph. We then define colorings and edge colorings.

\subsubsection{Vertex Colorings} A $q$-(vertex) coloring of $G$ is a map $\sigma:V\to [q]$. We say $\sigma$ is a \emph{proper coloring} if adjacent vertices receive distinct colors, that is, $\sigma_u\neq \sigma_v$ for every $\{u,v\}\in E$. We denote the set of proper $q$-colorings of $G$ by $\Omega_{G,q}$. Whenever $\Omega_{G,q}\neq\emptyset$, we use $ \mu_{G,q}$ to denote the uniform distribution over $\Omega_{G,q}$.

For a $q$-coloring $\sigma\in [q]^V$ and $W\subseteq V$, we write $\sigma_W$ for the restriction of $\sigma$ to $W$. When $W=\{v\}$ for some $v\in V$, we write $\sigma_v$ instead of $\sigma_{\{v\}}$ for the ease of notation.

Finally, for a $q$-coloring $\sigma\in [q]^V$ and a color $c\in [q]$, define the color class of color $c$ by
\[
    I_c(\sigma)
    \defeq
    \{v\in V:\sigma_v=c\}.
\]
Also, for a $q$-coloring $\sigma\in [q]^V$ and a nonempty set of colors $S\subseteq [q]$, define the color class of the color set $S$ by
\[
I_S(\sigma)\defeq \tp{I_c(\sigma)}_{c\in S}.
\]

\subsubsection{Edge Colorings}
A $q$-edge coloring of $G$ is a map $\sigma:E\to [q]$. We say $\sigma$ is a \emph{proper edge coloring} if adjacent edges receive distinct colors, that is, $\sigma_{e}\neq \sigma_{e'}$ whenever $e\neq e'$ and $e\cap e'\neq \emptyset$. We denote the set of proper $q$-edge colorings of $G$ by $\Omega^{\edge}_{G,q}$. Whenever $\Omega^{\edge}_{G,q}\neq\emptyset$, we use $ \mu^{\edge}_{G,q}$ to denote the uniform distribution over $\Omega^{\edge}_{G,q}$. We remark that edge colorings on $G$ can be viewed as vertex colorings on $L(G)$, the line graph of $G$.




\subsection{Markov Chain Mixing times, Entropy, and Log-Sobolev Inequalities}
\label{subsec:MC-basics}

Throughout the paper, all logarithms are natural unless otherwise specified.

\subsubsection{Markov Chains and Mixing Times}
Let $\Omega$ be a finite state space, and $(X_t)_{t \ge 0}$ be a Markov chain over $\Omega$ with transition matrix $P$. 
\begin{itemize}
    \item The Markov chain is \emph{irreducible} if for all $x, y \in \Omega$, there exists $t \ge 0$ with $P^t(x,y) > 0$;
    \item The Markov chain is \emph{aperiodic} if $\gcd\{t \ge 1:P^t(x,x)>0\} = 1$ for all $x \in \Omega$.
\end{itemize}
The fundamental theorem of Markov chains shows that the Markov chain $(X_t)_{t \ge 0}$ has a unique \emph{stationary distribution}, i.e. a distribution $\mu$ satisfying $\mu P = \mu$, if $(X_t)_{t \ge 0}$ is irreducible and aperiodic.

Let $(X_t)_{t \ge 0}$ be a Markov chain with transition matrix $P$ and stationary distribution $\mu$. The \emph{mixing time} measures the speed of the Markov chain $(X_t)_{t \ge 0}$ converging to its stationary distribution $\mu$, which is formally defined as
\begin{align*}
T_{\mathrm{mix}}\defeq\max_{x \in \Omega} \min\set{t: d_{\mathrm{TV}}\tp{P^t(x,\cdot),\mu} < \frac{1}{4} }.
\end{align*}

A Markov chain $P$ is \emph{time reversible} with respect to the distribution $\mu$ if it satisfies the {detailed balance equation}:
\begin{align*}
    \forall x,y \in \Omega, \quad \mu(x)P(x,y) = \mu(y)P(y,x).
\end{align*}
This implies that $\mu$ is a stationary distribution of $P$. 
We assume throughout that all Markov chains considered in this paper are time reversible.

\subsubsection{Functional Inequalities}
Let $\mu$ be a distribution over a finite state space $\Omega$, and $(X_t)_{t \ge 0}$ be a Markov chain over $\Omega$ with the transition matrix $P$ and the stationary distribution $\mu$. The \emph{expectation}, \emph{variance}, \emph{entropy}, and the inner product are defined as follows:
\begin{itemize}
\item Expectation: For all $f:\Omega \to \mathbb{R}$, $\E[\mu]{f} \defeq \sum_{x \in \Omega} \mu(x) f(x)$;
\item Variance: For all $f:\Omega \to \mathbb{R}$, $\Var[\mu]{f} \defeq \E[\mu]{f^2} - \tp{\E[\mu]{f}}^2$;
\item Entropy: For all $f:\Omega \to \mathbb{R}_{\geq 0}$, $\Ent[\mu]{f} \defeq \E[\mu]{f\log f}-\E[\mu]{f} \log \E[\mu]{f}$;
\item Inner product: For all $f,g:\Omega \to \mathbb{R}$, $\inner{f}{g}_\mu \defeq \E[\mu]{f \cdot g}$.
\end{itemize}
In particular, we will use the convention $0 \log 0 = 0$ in the definition of entropy. We will sometimes omit the subscript $\mu$ when the distribution is clear from context. In our paper, we will mainly focus on the entropy functional and will consistently deal with conditional entropy. The conditional entropy is formally defined as
\[
\Ent[\mu]{f\mid \+F}=\E[\mu]{f\log f\mid \+F}-\E[\mu]{f\mid \+F}\log \E[\mu]{f\mid \+F},
\]
viewed as an $\+F$-measurable random variable. The following \emph{law of total entropy} will be heavily used throughout the paper:
\[
\Ent[\mu]{f}=\Ent[\mu]{\E[\mu]{f\mid \+F}}+\E[\mu]{\Ent[\mu]{f\mid \+F}}.
\]

Functional inequalities are fundamental tools for quantifying the rate at which a Markov chain converges to its stationary distribution. Specifically, the Poincar\'e inequality captures the decay of variance to stationarity, while the modified log-Sobolev inequality captures the typically faster decay rate of the relative entropy (Kullback-Leibler divergence).

\begin{definition}[Poincar\'e inequality and modified log-Sobolev inequality]
    Let $(X_t)_{t \ge 0}$ be a Markov chain over $\Omega$ with the transition matrix $P$ and stationary distribution $\mu$. \begin{enumerate}
        \item We say the \emph{Poincar\'e inequality} holds with constant $\lambda > 0$ if
        \begin{align*}
        \forall f:\Omega \to \mathbb{R},\quad \lambda \Var[\mu]{f} \le \+E_P(f,f),
        \end{align*}
        where $\+E_P(f,g) \defeq \inner{f}{(I-P)g}_\mu$ is the \emph{Dirichlet form}. 
        \item We say the \emph{modified log-Sobolev inequality} holds with constant $\rho > 0$ if
        \begin{align*}
            \forall f:\Omega \to \mathbb{R}_{\geq 0},\quad \rho \Ent[\mu]{f} \le \+E_P(f,\log f).
        \end{align*}
    \end{enumerate}
\end{definition}

By the Courant-Fischer theorem, the Poincar\'e inequality holds with constant $\lambda$ if and only if $\lambda \le 1-\lambda_2(P)$, where $\lambda_2(P)$ is the second largest eigenvalue of $P$, and $1-\lambda_2(P)$ is the \emph{spectral gap}. 

Suppose the eigenvalues of $P$ are non-negative. The \emph{relaxation time} $T_{\mathrm{rel}}$ is defined as the inverse of the spectral gap, specifically $T_{\mathrm{rel}} = \frac{1}{1-\lambda_2(P)}$. By~\cite[Theorem 12.4]{levin2017markov}, the mixing time of $P$ can be bounded as follows:
 \begin{align*}
    T_{\mathrm{mix}} \le T_{\mathrm{rel}} \cdot \log \frac{4}{\mu_{\min}}, \quad \text{where $\mu_{\min}\defeq\min_{x \in \Omega} \mu(x)$.}
 \end{align*}
With \emph{modified log-Sobolev inequality}, a better mixing time bound can be achieved~\cite{bobkov2006modified}:
\begin{equation}\label{eq:ent-decay-implies-mixing}
T_{\mathrm{mix}} \le \frac{1}{\rho} \tp{\log \log \frac{1}{\mu_{\min}} + 3}.
\end{equation}

\subsection{Localization Schemes}
\label{subsec:information-theoretic}
In this subsection, we introduce the notion of localization schemes, which is a framework for proving rapid mixing of Markov chains, introduced in~\cite{CE22}. In ~\cite{CE22}, localization schemes are defined as a mapping from a target measure to a probability-measure-valued stochastic process. Here we instead adopt the information-theoretic definition of localization schemes~\cite{ElAlaoui2022information}, which was also used in previous works~\cite{chen2025rapid,chen2025rapidrandom,chen2026edgetilting} for sampling.

\begin{definition}[Localization processes and localization schemes~\cite{CE22,chen2025rapid}]\label{def:localization-schemes}
Let $\mu$ be a distribution on a finite space $\Omega$. A \emph{localization process} for $\mu$ is formalized by an information space and a pair of continuous-time processes, defined as follows:
\begin{itemize}
    \item The \emph{information space} $(\mathsf{E}_t)_{t\in[0,1]}$ is a family of standard Borel spaces, where each $\mathsf{E}_t$ represents the space of possible observations available at time (or signal level) $t$, together with a measurable output map
    \[
    \phi:\mathsf E_1\to\Omega.
\]
Each element $\iota\in \mathsf{E}_t$ for $t\in [0,1]$ is associated with a probability distribution $\nu^{(t)}_{\iota}$ over $\Omega$. Formally, this space satisfies:
    \begin{itemize}
        \item (Initialization) $\mathsf{E}_0 = \{\perp\}$, where $\perp$ denotes the null-information state, and $\nu^{(0)}_{\perp}=\mu$. 
        \item (Localization) For each element in the information space $E_1$ at time $1$, the associated probability distribution collapses into a Dirac measure specified by the output map, i.e.,
\[
    \nu^{(1)}_\iota=\delta_{\phi(\iota)},
    \qquad \forall \iota\in\mathsf E_1.
\]
In the special case \(\mathsf E_1=\Omega\), it suffices to take the output map $\phi$ to be the identity map.
    \end{itemize}

    \item The \emph{denoising process} $(Y_t)_{t\in[0,1]}$ is a Markovian stochastic process taking values in \(\mathsf E_t\). The localized measure at time \(t\) is defined as
\[
    \nu_t\defeq \nu^{(t)}_{Y_t}.
\]
We require the following martingale property: for every \(A\subseteq\Omega\) and every \(0\le s\le t\le 1\),
\[
    \E{\nu^{(t)}_{Y_t}(A)\mid Y_s}
    =
    \nu^{(s)}_{Y_s}(A).
\]
Equivalently, \((\nu_t(A))_{t\in[0,1]}\) is a martingale with respect to the information carried by the observations \((Y_t)_{t\in[0,1]}\). We remark that for every $t\in [0,1]$ and every information state $\iota$ in the support of $Y_t$, the following holds as a consequence of the martingale property:
\[
\nu^{(t)}_{\iota}=\mathrm{Law}(\phi(Y_1)\mid Y_t=\iota).
\]
    
    \item The \emph{noising process} $(X_t)_{t\in[0,1]}$
    is the time reversal of the denoising process, meaning
    \[
        (X_t)_{t\in[0,1]}
        \overset{d}=
        (Y_{1-t})_{t\in[0,1]}.
    \]
\end{itemize}

Because these processes strictly determine one another via time-reversal, we conventionally identify the localization process entirely by its denoising component, $(Y_t)_{t\in [0,1]}$. This definition naturally extends to discrete time by replacing $[0,1]$ with $\{0,1,\ldots,T\}$.

Finally, a \emph{localization scheme} $\+L$ on $\Omega$ is a map assigning a localization process to each distribution $\mu$ on $\Omega$. 
\end{definition}

\begin{remark}[The information space]\label{rem:information-space}
A key distinction of our definition is the explicit formalization of the information space $(\mathsf{E}_t)_{t\in[0,1]}$. Prior sampling literature leveraging this information-theoretic framework~\cite{chen2025rapid,chen2025rapidrandom,chen2026edgetilting} primarily focuses on \emph{field denoising processes} for set systems. In those settings, the intermediate observation $Y_t$ is a partially revealed set that can be defined directly on $\Omega$, hence the definition of the information space (as well as the output map $\phi$) was not necessary.

In contrast, for the coordinate-by-coordinate localization process in~\cite{CE22}, the intermediate observation $Y_t$ takes values in a space that encodes both the subset of revealed coordinates and their specific configuration. Furthermore, in the context of the proximal sampler, the clean state might reside in $Y_1 \in \Omega \subseteq \{\pm 1\}^n$, whereas the intermediate observation is a Gaussian vector $Y_t\in \mathbb{R}^r$ for $0 < t < 1$. Thus, formally incorporating the information space into the definition of the scheme is mathematically essential to capture these broader classes of processes.
\end{remark}

Given a target distribution $\mu$ defined on a finite state space $\Omega$, a denoising process naturally induces a Markov chain with stationary distribution $\mu$.


\begin{definition}[Markov chain associated with a denoising process]\label{def:associated-dynamics}
Let $\mu$ be a distribution defined on a state space $\Omega$.  Let $(Y_t)_{t\in [0,1]}$ be a denoising process with respect to $\mu$. Fix any $\theta\in [0,1]$, the \emph{down-up} chain $P_{1\leftrightarrow \theta}$ induced by $(Y_t)_{t\in [0,1]}$ is a Markov chain on $\Omega$ defined as follows.
Given the current state $x_t\in \Omega$, the next state $x_{t+1}$ is generated by 
\begin{itemize}
    \item (downward step $D_{1\to \theta}$) given the current state $x_{t}\in\Omega$, an intermediate state $x'$ is generated by (note that $x'\in \mathsf{E}_{\theta}$)
    \[
    x'\sim \mathrm{Law}(Y_\theta\mid \phi(Y_1)=x_t);
    \]
    \item (upward step $U_{\theta\to 1}$) then sample the next state
    \[
    x_{t+1}\sim \mathrm{Law}(\phi(Y_1)\mid Y_\theta=x').
    \]
\end{itemize}
This chain is reversible with stationary distribution $\mu$.
\end{definition}


\subsubsection{Conservation of Entropy}

The reason one considers this formulation of localization scheme is that it transforms the analysis for the mixing time of Markov chains to the analysis of the noising-denoising stochastic process itself. We need to introduce the notion of approximate conservation of entropy.

\begin{definition}[Approximate Conservation of Entropy]\label{def:approx-conservation-var}
Given a denoising process $(Y_t)_{t \in [0,1]}$ and a parameter $\theta \in [0,1]$, we say that $(Y_t)_{t \in [0,1]}$ satisfies \emph{$\kappa$-approximate conservation of entropy up to time $\theta$} if for all functions $f:\Omega \to \mathbb{R}_{\geq 0}$,
\begin{align*}
\Ent[]{f(\phi(Y_1))} \le \kappa\cdot\E{\Ent[]{f(\phi(Y_1))\mid Y_{\theta}}}.
\end{align*}
\end{definition}

The following lemma, combined with \eqref{eq:ent-decay-implies-mixing}, shows that conservation of entropy throughout the localization scheme implies rapid mixing of the associated dynamics.

\begin{lemma}[{\cite[Lemma 3.11]{chen2025rapid}}]\label{lem:AC-var-to-var-decay}
   If the denoising process $(Y_t)_{t \in [0,1]}$ satisfies $\kappa$-approximate conservation of entropy up to time $\theta$, 
then the modified log-Sobolev constant of the induced down-up chain $P_{1 \leftrightarrow \theta}$ is at least $1/\kappa$.
\end{lemma}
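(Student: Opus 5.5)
The plan is to write the down-up chain $P=P_{1\leftrightarrow\theta}$ as a conditional expectation through the intermediate observation $Y_\theta$. Then bound the Dirichlet form $\mathcal{E}_P(f,\log f)$ from below by the expected conditional entropy given $Y_\theta$, using only Jensen's inequality and the law of total entropy. The hypothesis, $\kappa$-approximate conservation of entropy up to time $\theta$, then finishes the argument.

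Write $X\defeq\phi(Y_1)$, so $X\sim\mu$ by the initialization and martingale properties of \Cref{def:localization-schemes}. Consider the joint law of $(X,Y_\theta)$. The downward step $D_{1\to\theta}$ samples $Y_\theta$ given $X$, and the upward step $U_{\theta\to 1}$ samples $X$ given $Y_\theta$. Hence, for any $g:\Omega\to\mathbb{R}$ and $x$ in the support of $\mu$,
\[
Pg(x)=\E{\E{g(X)\mid Y_\theta}\mid X=x}.
\]
Consequently, for $f,h:\Omega\to\mathbb{R}$ we get the symmetric identity
\[
\inner{f}{Ph}_\mu=\E{f(X)\,\E{h(X)\mid Y_\theta}}=\E{\E{f(X)\mid Y_\theta}\,\E{h(X)\mid Y_\theta}}.
\]
This identity also makes the reversibility claimed in \Cref{def:associated-dynamics} transparent.

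Now fix $f:\Omega\to\mathbb{R}_{\geq 0}$ and set $g\defeq\E{f(X)\mid Y_\theta}$. By conditional Jensen, $\E{\log f(X)\mid Y_\theta}\le\log g$. Since $g\ge 0$, this gives
\[
\inner{f}{P\log f}_\mu=\E{g\,\E{\log f(X)\mid Y_\theta}}\le\E{g\log g}.
\]
Values where $f=0$ are handled by the convention $0\log 0=0$; if needed, one first works with $f+\epsilon$ and then lets $\epsilon\to 0$. Using $\E{g}=\E{f(X)}$, we obtain
\[
\mathcal{E}_P(f,\log f)=\E{f\log f}-\inner{f}{P\log f}_\mu\ge \E{f\log f}-\E{g\log g}=\Ent{f(X)}-\Ent{g}.
\]
By the law of total entropy with $\mathcal{F}=\sigma(Y_\theta)$, the right-hand side equals $\E{\Ent{f(X)\mid Y_\theta}}$. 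By \Cref{def:approx-conservation-var} this is at least $\kappa^{-1}\Ent[\mu]{f}$, which is exactly the claimed modified log-Sobolev bound $\rho\ge 1/\kappa$.

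The only step needing care is the Jensen step together with the measure-theoretic bookkeeping. One must justify that $\log f$ may take the value $-\infty$ on $\{f=0\}$, which the $\epsilon$-regularization handles. One must also check that the identity $P=D_{1\to\theta}U_{\theta\to1}$ correctly expresses $P$ as the composition of the two conditional expectations through the joint law of $(X,Y_\theta)$, including when $\mathsf{E}_\theta$ is a general Borel space, in which case regular conditional distributions exist. Everything else is a direct computation.
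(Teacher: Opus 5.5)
Your proof is correct. The paper gives no proof of this lemma; it quotes it from \cite{chen2025rapid}. Your argument is therefore a self-contained justification rather than a reconstruction of anything in the text.

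You express $P=P_{1\leftrightarrow\theta}$ as two conditional expectations composed through $Y_\theta$, giving $\inner{f}{P\log f}_\mu=\E{g\,\E{\log f(X)\mid Y_\theta}}$ with $g=\E{f(X)\mid Y_\theta}$. One conditional Jensen step then gives the sharper intermediate bound $\mathcal{E}_P(f,\log f)\ge \E{\Ent{f(X)\mid Y_\theta}}$, from which the lemma is immediate. Reversibility of the down-up chain also drops out of your symmetric identity.

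The other standard route first proves per-step entropy contraction, $\Ent[\mu]{Pf}\le \Ent{g}\le (1-\kappa^{-1})\Ent[\mu]{f}$, using $Pf(X)=\E{g\mid X}$ and the law of total entropy. It then applies the general inequality $\mathcal{E}_P(f,\log f)\ge \Ent[\mu]{f}-\Ent[\mu]{Pf}$, valid for any reversible $P$; this follows from Gibbs' inequality applied to the weights $\mu\cdot Pf$ and $\mu\cdot f$. Your argument avoids that second general fact and reaches the modified log-Sobolev bound in one step.

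The contraction route has one advantage worth noting. Entropy contraction is the form that directly controls the discrete-time chain, whereas the modified log-Sobolev constant natively governs the continuous-time version. Your representation $Pf(X)=\E{g\mid X}$ yields the contraction in one extra line as well.

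Your $\epsilon$-regularization for zeros of $f$ is adequate. In the symmetric form of the Dirichlet form each term is nonnegative and converges, possibly to $+\infty$, as $\epsilon\to 0$, while $\Ent[\mu]{f+\epsilon}\to\Ent[\mu]{f}$.
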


\newpage

\section{Colorwise Localization Scheme for the WSK dynamics}\label{sec:localization}

In this section, we will present a new (discrete) localization scheme for multi-spin systems, which we call the colorwise localization scheme, whose associated dynamics coincides with the WSK dynamics. We will show that under this interpretation of colorwise localization scheme, the ACTE (\Cref{def:color-entropy-tensorization}) and ACSE (\Cref{def:color-entropy-subadditivity}) criteria correspond to the entropy conservation of the localization process. Consequently, we will prove \Cref{thm:tensorization-of-entropy-implies-mixing,thm:subadditivity-implies-tensorization}.

Compared to the classical coordinate-by-coordinate localization scheme (whose associated dynamics is the Glauber dynamics or block dynamics), which reveals the color of a random coordinate at each step, the colorwise localization scheme randomly chooses a remaining color at each step and reveals all coordinates of that color simultaneously.

\begin{definition}[Colorwise localization scheme and associated Markov chain]\label{def:localization-color}
Let $q \geq 2$ and let $\mu$ be a distribution on a finite state space $\Omega \subseteq [q]^V$. For any configuration $\sigma \in \Omega$, let $S(\sigma) \defeq \{(v, \sigma_v) \mid v \in V\}$ denote its representation as a set of vertex-color pairs. 

The \emph{colorwise localization scheme} $\+L^{\mathrm{col}} = \+L^{\mathrm{col}}(\mu)$ is the discrete-time localization process $(Y_t)_{t=0}^q$ generated by the following denoising procedure:
\begin{enumerate}
    \item Sample a target configuration $\sigma \sim \mu$.
    \item Sample a permutation $p = (p_1, p_2, \dots, p_q)$ of the color set $[q]$ uniformly at random \emph{on the fly} (i.e., $p_t$ is revealed at time $t$).
    \item For each intermediate step $t \in \{0, 1, \dots, q\}$, the intermediate state $Y_t$ reveals the first $t$ colors of the permutation alongside the subset of vertices assigned those colors. Formally, $Y_t \in \mathsf{E}_t$ is defined as the tuple:
    \[
        Y_t = \left( (p_1, \dots, p_t), \{(v,c) \in S(\sigma) \mid p^{-1}(c) \leq t \} \right).
    \]
    The output map $\phi:\mathsf{E}_{q}\to \Omega$ is defined as $\phi(Y_q)=\sigma$.
\end{enumerate}

Note that in the above definition of $Y_t$, the information space $\mathsf{E}_t$ explicitly encodes both the partial configuration and the specific sequence of colors revealed so far. It is direct to verify that the procedure described above satisfies the three required properties (initialization, localization and martingale property) in \Cref{def:localization-schemes} and therefore is indeed a localization scheme.

For a fixed $\theta \in \{0, 1, \dots, q\}$, the associated Markov chain $P_{q\leftrightarrow \theta}$ is defined according to (the natural discrete-time variation of) \Cref{def:associated-dynamics}. 
\end{definition}

An important observation is as follows.

\begin{lemma}
    Let $\mu$ be the uniform distribution over proper $q$-colorings of a $q$-colorable graph $G$. The associated Markov chain $P_{q\leftrightarrow q-2}$ in \Cref{def:localization-color} coincides exactly with the \emph{WSK dynamics} in \Cref{def:WSK-dynamics}.
\end{lemma}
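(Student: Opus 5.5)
The plan is to unpack the two halves of the down-up chain $P_{q\leftrightarrow q-2}$ from \Cref{def:associated-dynamics} and check that their composition is exactly one step of \Cref{def:WSK-dynamics}.

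\textbf{Downward step.} Given the current coloring $\sigma$, we sample $Y_{q-2}$ conditioned on $\phi(Y_q)=\sigma$. In the construction of \Cref{def:localization-color}, the permutation $p$ is independent of $\sigma$. Hence conditioning on $\phi(Y_q)=\sigma$ leaves $p$ uniform, and $Y_{q-2}$ is determined by the ordered tuple $(p_1,\dots,p_{q-2})$ together with the revealed pairs $\{(v,\sigma_v):\sigma_v\notin\{p_{q-1},p_q\}\}$. The unordered pair $\{a,b\}=\{p_{q-1},p_q\}$ is therefore uniform over $\binom{[q]}{2}$, exactly as in the first step of the WSK dynamics. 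The revealed information is $\sigma$ restricted to $V\setminus W_{a,b}(\sigma)$, and the unrevealed set is precisely $W=W_{a,b}(\sigma)$.

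\textbf{Upward step.} Given $Y_{q-2}=\iota$, we must identify $\mathrm{Law}(\phi(Y_q)\mid Y_{q-2}=\iota)$. Here we use Bayes' rule. For a coloring $\tau$ of the uniform measure $\mu$, the joint probability that $\phi(Y_q)=\tau$ and $Y_{q-2}=\iota$ equals $\mu(\tau)\cdot\frac{1}{q!}\cdot 2$, since two full permutations extend the prefix, provided $\tau$ is consistent with $\iota$; otherwise it is $0$. Consistency means two things: $\tau$ agrees with $\iota$ on the revealed vertices, and no unrevealed vertex uses a color among $p_1,\dots,p_{q-2}$, i.e.\ every unrevealed vertex gets a color in $\{a,b\}$. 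This weight is constant over consistent proper colorings. So the posterior is uniform over proper colorings $\sigma'$ with $\sigma'_{V\setminus W}=\sigma_{V\setminus W}$ and $\sigma'_v\in\{a,b\}$ for $v\in W$. This is exactly the WSK update.

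\textbf{Finishing.} It remains to note the equivalent Kempe-component description. In such a $\sigma'$, every vertex of $W$ takes a color in $\{a,b\}$, and each neighbor outside $W$ already has a color different from both $a$ and $b$, so the only constraints are properness within $G[W]$. Each connected component of $G[W]$ is bipartite and has exactly two proper $\{a,b\}$-colorings, namely the current one and its swap. The uniform choice therefore factorizes into independent fair coin flips, one per component. Note that $W$ depends only on $\sigma$ and $\{a,b\}$, and that the chain's next state depends on $\iota$ only through $\sigma_{V\setminus W}$ and $\{a,b\}$, not through the ordering of $p_1,\dots,p_{q-2}$. So the composed kernel equals the WSK kernel. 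The only delicate point is the Bayes computation, in particular checking that the prefix ordering carries no extra weight; this follows from the independence of $p$ and $\sigma$.
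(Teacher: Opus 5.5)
Your proof is correct and follows the same route as the paper's. The downward step leaves a uniformly random unordered pair $\{a,b\}$ unrevealed, so the hidden set is exactly $W_{a,b}(\sigma)$; the upward step then resamples uniformly among proper colorings that agree with $\sigma$ off $W$ and use only $a,b$ on $W$, and this factorizes into independent fair swaps of the components of $G[W]$. Your explicit Bayes computation (the constant weight $2\mu(\tau)/q!$ on every consistent proper coloring) simply spells out a step the paper asserts without derivation.
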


\begin{proof}
    Note that in \Cref{def:localization-color} the permutation is chosen uniformly at random on-the-fly. Therefore at time $\theta=q-2$, only the first $q-2$ colors are revealed. Hence, in the downward step $D_{q\to \theta}$ from a current coloring $\sigma$, revealing $q-2$ colors is equivalent to choosing an unordered pair $\{a,b\}$ of unrevealed colors uniformly. The unrevealed vertex set is exactly
    \[
    W=W_{a,b}(\sigma)\defeq\{v\in V:\sigma_v\in\{a,b\}\}.
\]
In the upward step $U_{\theta\to q}$, one samples uniformly from proper colorings agreeing with $\sigma$ outside $W$ and using only $a,b$ on $W$. Since each component of $G[W]$ is properly two-colored, it has exactly two possible colorings, obtained by either retaining or swapping $a,b$. Different components are independent under the uniform conditional distribution. Thus one step of the associated dynamics $P_{q\leftrightarrow \theta}=D_{q\to \theta}U_{\theta\to q}$ is exactly the WSK update from \Cref{def:WSK-dynamics}.
\end{proof}

Through this interpretation of the WSK dynamics under the colorwise localization scheme, we are now ready to prove the two main theorems in the introduction. This is by noticing that the ACTE criterion corresponds to the entropy conservation of the localization process from time $0$ to $q-2$, while the ACSE criterion corresponds to the one-step entropy conservation of the localization process from time $t$ to $t+1$ for some $0\leq t<q-2$.

\begin{proof}[Proofs of \Cref{thm:tensorization-of-entropy-implies-mixing,thm:subadditivity-implies-tensorization}]
   We first prove \Cref{thm:tensorization-of-entropy-implies-mixing}. Consider the colorwise localization scheme $\+L^{\mathrm{col}} = \+L^{\mathrm{col}}(\mu)=(Y_t)_{t=0}^q$ in \Cref{def:localization-color}. Note that $\sigma\defeq \phi(Y_q)$ is distributed as $\mu$, and that $Y_{q-2}$ is distributed by uniformly revealing $q-2$ distinct color classes from $Y_q$. Therefore, for any $f:\Omega\to \=R^{\geq 0}$, it holds that
     \begin{equation}\label{eq:localization-transform}
     \Ent{f(\phi(Y_q)}=\Ent[\mu]{f},\quad \E{\Ent{f(\phi(Y_q))}\mid Y_{q-2}}=\frac{1}{\binom{q}{2}}\sum\limits_{\{a,b\}\in \binom{[q]}{2}}\E[\mu]{\Ent[\mu]{f\mid I_{[q]\setminus\{a,b\} }}}.
     \end{equation}
     Therefore, by combining  \Cref{def:color-entropy-tensorization} and \Cref{lem:AC-var-to-var-decay}, we have that the modified log-Sobolev constant of the WSK dynamics is at least $2/Cq(q-1)$.
     
     Finally, note that we have $\mu_{\min}\geq q^{-n}$ for uniform proper $q$-colorings, \Cref{thm:tensorization-of-entropy-implies-mixing} holds by \eqref{eq:ent-decay-implies-mixing}.




   We then prove \Cref{thm:subadditivity-implies-tensorization}. We claim that under the assumption of \Cref{thm:subadditivity-implies-tensorization}, for every $0\leq i\leq q-3$ and every test function $f:\Omega\to \=R^{\geq 0}$ it holds that
     \begin{equation}\label{eq:approximate-conservation-of-variance}
      \E{\Ent[][]{f(\phi(Y_q))\mid Y_{i+1}}} \geq  \frac{1}{(q-i)C(q-i)}\E{\Ent{f(\phi(Y_q))\mid Y_i}}.
     \end{equation}
     Then we have
     \[
   \E{\Ent[][]{f(\phi(Y_q))\mid Y_{q-2}}}\geq \prod\limits_{i=3}^{q} \frac{1}{iC(i)}\E{\Ent{f(\phi(Y_q))}\mid Y_0}=\prod\limits_{i=3}^{q} \frac{1}{iC(i)}\Ent{f(\phi(Y_q))}.
    \]
     Then \Cref{thm:subadditivity-implies-tensorization} holds by \eqref{eq:localization-transform}.
     
     It then suffices to show \eqref{eq:approximate-conservation-of-variance}.  We notice that at each time $0\leq i\leq q-2$, the induced distribution $\phi(Y_q)\mid Y_i$ over all remaining unspecified vertices $V'$ is the uniform $(q-i)$-coloring on the subgraph of $G$ induced by $V'$. Therefore for each $0\leq i\leq q-3$,
     we have
     \begin{align*}
        &  \E{\Ent[][]{f(\phi(Y_q))\mid Y_{i+1}}\mid Y_i}\\
        = & \frac{1}{q-i}\sum\limits_{c\in [q]\setminus \{p_1,\dots,p_i\}}\E{\Ent[][]{f(\phi(Y_q))\mid Y_{i+1}}\mid Y_i, p_{i+1}=c}\\
    (\text{ACSE of $\phi(Y_q)\mid Y_i$})\quad    \geq &\frac{1}{(q-i)C(q-i)} \Ent[][]{f(\phi(Y_q))\mid Y_{i}}.
     \end{align*}
     Averaging over all possible $Y_i$ gives \eqref{eq:approximate-conservation-of-variance}, thereby concluding the proof of \Cref{thm:subadditivity-implies-tensorization}.
\end{proof}

\section{Inductive Proofs Establishing Approximate Colorwise Tensorization of Entropy}\label{sec:induction}

In this section, we will show approximate colorwise tensorization of entropy on two specific types of graphs: chordal graphs and outerplanar graphs, therefore proving \Cref{thm:WSK-rapid-mixing-chordal,thm:WSK-rapid-mixing-outerplanar}. Our proofs will apply an inductive approach, building upon the properties of the graphs and the requirement for ACTE (\Cref{def:color-entropy-tensorization}). We start with chordal graphs.

\subsection{Approximate Colorwise Tensorization of Entropy for Chordal Graphs}

We will show that chordal graphs satisfy approximate colorwise tensorization of entropy with constant $1$. 

\begin{lemma}[ACTE on chordal graphs]\label{lem:ACTE-chordal}
 Let $G=(V,E)$ be a chordal graph and let $q\geq \max(3,\omega(G))$. Then $\mu_{G,q}$ satisfies approximate colorwise tensorization of entropy with constant $1$. 
\end{lemma}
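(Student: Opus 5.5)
The plan is to argue by induction on $|V|$, peeling off a simplicial vertex at each step. The base case $V=\emptyset$ is trivial. For the inductive step, let $v$ be a simplicial vertex of $G$, which exists by \Cref{fact:chordal-facts}\eqref{item:chordal-2}. Then $G-v$ is chordal by \Cref{fact:chordal-facts}\eqref{item:chordal-1}, with $\omega(G-v)\le\omega(G)\le q$, so the induction hypothesis applies to $\mu_{G-v,q}$.

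\textbf{Structural facts from simpliciality.} Let $d=|N_G(v)|\le q-1$. Since $N_G(v)$ is a clique, every proper coloring of $G-v$ uses exactly $d$ distinct colors on $N_G(v)$. Hence it extends to exactly $k\defeq q-d\ge 1$ proper colorings of $G$, and $k$ does not depend on the coloring. Two consequences follow:
\begin{itemize}
\item the marginal of $\mu_{G,q}$ on $V\setminus\{v\}$ is exactly $\mu_{G-v,q}$;
\item conditionally on $\sigma_{V\setminus\{v\}}$, the color $\sigma_v$ is uniform on the set $A(\sigma_{-v})$ of $k$ colors not used on $N_G(v)$.
\end{itemize}

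\textbf{Splitting the entropy.} By the law of total entropy with $g\defeq\E[\mu]{f\mid\sigma_{V\setminus\{v\}}}$,
\[
\Ent[\mu]{f}=\Ent[\mu_{G-v,q}]{g}+\E[\mu]{\Ent[\mu]{f\mid \sigma_{V\setminus\{v\}}}}.
\]
The induction hypothesis bounds the first term by $\sum_{\{a,b\}}\E{\Ent{g\mid I'_{[q]\setminus\{a,b\}}}}$, where $I'$ denotes color classes in $G-v$.

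\textbf{Matching the target pair by pair.} Fix a pair $\{a,b\}$ and write $R=[q]\setminus\{a,b\}$. I decompose the target term by total entropy with respect to the finer $\sigma$-algebra generated by $(I_R,\sigma_{V\setminus\{v\}})$:
\[
\E{\Ent{f\mid I_R}}=\E{\Ent{h_{ab}\mid I_R}}+\E{\Ent{f\mid I_R,\sigma_{V\setminus\{v\}}}},\qquad h_{ab}\defeq \E{f\mid I_R,\sigma_{V\setminus\{v\}}}.
\]
It then suffices to prove two inequalities, (i) and (ii) below.

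\textbf{(i) Vertex term.} Given $\sigma_{-v}$, the information $I_R$ reveals $\sigma_v$ if $\sigma_v\notin\{a,b\}$, and otherwise only that $\sigma_v\in\{a,b\}\cap A$. So (i) reduces to the one-site statement: for the uniform measure on a $k$-set $A$,
\[
\Ent{f}\le\sum_{\{a,b\}}\E{\Ent{f\mid \sigma_v \text{ if } \sigma_v\notin\{a,b\}}}.
\]
This is exactly the fact, noted after \Cref{def:color-entropy-tensorization}, that product measures satisfy $1$-ACTE; pairs meeting $A$ in at most one point contribute $0$. I would prove it by induction on $k$. The case $k=2$ is an equality. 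For larger $k$, split off one point $x$, writing $\Ent_A(f)$ as a two-point entropy $\{x\}$ vs.\ $A\setminus\{x\}$ plus $\frac{k-1}{k}\Ent_{A\setminus\{x\}}(f)$. Then average over $x$ and bound the two-point terms by convexity of the two-point entropy along pairs $\{x,y\}$.

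\textbf{(ii) Reduced term.} The inequality needed is
\[
\E{\Ent{h_{ab}\mid I_R}}\ \ge\ \E{\Ent{g\mid I'_R}}.
\]
This is the step I expect to be the main obstacle, since the direction is the ``wrong'' one for a naive data-processing argument. I would try the following. First, $g=\E{h_{ab}\mid\sigma_{V\setminus\{v\}}}$, and $I'_R$ is a function of $I_R$. Second, the decisive point is that, given $I'_R$, the extra information in $I_R$ (namely $\sigma_v$ when it lies in $R$, or the event $\sigma_v\in\{a,b\}$) carries conditional probabilities proportional to counts of available colors. These counts depend on $\sigma_{-v}$ only through which of $a,b$ appear on $N_G(v)$, and that is already determined by $I'_R$ together with $d$ and the clique structure of $N_G(v)$.

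If this holds, then conditioning additionally on the extra bit does not change the law of $\sigma_{V\setminus\{v\}}$ within each fiber of $I'_R$. Convexity of entropy (Jensen applied to the mixture over the extra bit) would then give $\E{\Ent{h_{ab}\mid I_R}}\ge\E{\Ent{\E{h_{ab}\mid \sigma_{-v},I'_R}\mid I'_R}}=\E{\Ent{g\mid I'_R}}$.

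Adding (i) and (ii) over all pairs $\{a,b\}$ closes the induction with constant $1$. If the fiber-independence claim fails for some pairs, for instance when exactly one of $a,b$ appears on $N_G(v)$, I would split according to $|\{a,b\}\cap A|\in\{0,1,2\}$ and verify the inequality in each case separately.
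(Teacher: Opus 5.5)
Your argument is correct, but it takes a different route from the paper's proof of this lemma.

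\textbf{How the paper argues.} The paper never inducts on ACTE for chordal graphs. It first proves the stronger one-color statement, ACSE with constant $1/(q-2)$ (\Cref{lem:ACSE-chordal}), by the same simplicial-vertex peeling. The singleton case there comes from the Gu--Polyanskiy inequality (\Cref{lem:wrong-color-entropy-bound}). It then gets ACTE from the colorwise local-to-global \Cref{thm:subadditivity-implies-tensorization}, using $\frac{2}{q(q-1)}\prod_{i=3}^{q}\frac{i}{i-2}=1$.

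\textbf{How yours compares.} Your direct induction on ACTE is essentially Case I of the paper's proof of \Cref{lem:ACTE-outerplanar}, specialized to chordal graphs: conditional independence of the new information at $v$ and $\sigma_{V\setminus\{v\}}$ given the old color classes, followed by convexity. The one difference is that you prove the one-site ACTE for the uniform measure on $A$ by an elementary induction on $k$, rather than by telescoping the singleton ACSE bound. Your route is more self-contained, since it needs neither the nonlinear log-Sobolev input nor the local-to-global theorem. The paper's route yields ACSE itself, which it needs anyway: the path comparison in the pinned-cycle lemma uses the one-color bound (\Cref{lem:ACSE-pinned-chordal}), and ACTE alone would not suffice there.

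\textbf{Two points to tighten.}

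In (i), Jensen only leaves you with two-point entropies under weights $(1/k,(k-1)/k)$. To close the recursion at exactly $2/k$ you need
\[
\mathrm{Ent}_{(1/k,(k-1)/k)}[h]\le\tfrac{2(k-1)}{k}\,\mathrm{Ent}_{(1/2,1/2)}[h].
\]
This follows from the density-ratio comparison $\mathrm{Ent}_\nu[h]\le\max_x\frac{\nu(x)}{\nu'(x)}\,\mathrm{Ent}_{\nu'}[h]$. To see that comparison, write $\mathrm{Ent}_\nu[h]=\inf_{c>0}\mathbb{E}_\nu[h\log(h/c)-h+c]$, whose integrand is nonnegative. With this bound, the coefficients add up as $\frac{4}{k^2}+\frac{2(k-2)}{k^2}=\frac{2}{k}$ with no slack, so this constant is exactly what is needed.

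In (ii), $I'_R$ does not tell you which of $a,b$ appears on $N_G(v)$ when exactly one does. It only tells you how many, because $N_G(v)$ carries $d$ distinct colors and $I'_R$ shows how many of them lie in $R$. That is enough, because the extra information in $I_R$ beyond $I'_R$ is only:
\begin{itemize}
\item the value of $\sigma_v$ if it lies in $R$, whose conditional law is visible from $I'_R$; and
\item otherwise, the bare event $\sigma_v\in\{a,b\}$, with conditional probability $(2-|\{a,b\}\cap\sigma(N_G(v))|)/k$.
\end{itemize}
So the conditional independence holds in every case, and your fallback case split is unnecessary.
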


Note that by \Cref{thm:tensorization-of-entropy-implies-mixing}, \Cref{lem:ACTE-chordal} implies the proof of \Cref{thm:WSK-rapid-mixing-chordal} (the $q=2$ case mixes trivially in one step), hence the proof of \Cref{cor:vertex-coloring-main} and \Cref{cor:edge-coloring-main}.

To show ACTE on chordal graphs, we will establish the following result of approximate colorwise subadditivity of entropy on chordal graphs.

\begin{lemma}[ACSE on chordal graphs]\label{lem:ACSE-chordal}
    Let $G=(V,E)$ be a chordal graph and let $q\geq \max(3,\omega(G))$. Then $\mu_{G,q}$ satisfies approximate colorwise subadditivity of entropy with constant $1/(q-2)$.
\end{lemma}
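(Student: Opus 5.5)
The plan is to induct on $|V|$ by peeling off a simplicial vertex, using the law of total entropy to split $\Ent[\mu]{f}$ into a part living on $G-v$ and a single-vertex part. Let $\mu=\mu_{G,q}$. By \Cref{fact:chordal-facts}, $G$ has a simplicial vertex $v$, and $G-v$ is chordal with $\omega(G-v)\le\omega(G)\le q$.

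\emph{Setup.} Let $K=N_G(v)$ and $d=|K|\le q-1$. Since $K$ is a clique, in every proper coloring of $G-v$ the vertex $v$ has exactly $m\defeq q-d\ge 1$ admissible colors. Hence the marginal of $\mu$ on $V\setminus\{v\}$ is $\mu_{G-v,q}$, and conditional on $\sigma_{-v}$ the color $\sigma_v$ is uniform on $[q]\setminus\sigma(K)$. Put $g\defeq\E[\mu]{f\mid\sigma_{-v}}$. The law of total entropy gives $\Ent[\mu]{f}=\Ent{g}+\E{\Ent{f\mid\sigma_{-v}}}$. By induction, $\Ent{g}\le\frac1{q-2}\sum_c\E{\Ent{g\mid I'_c}}$, where $I'_c$ is the color-$c$ class restricted to $V\setminus\{v\}$. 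The base case $V=\emptyset$ is trivial.

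\emph{Per-color bound.} Write $I_c=(I'_c,\mathbbm 1[\sigma_v=c])$. The key structural point is that, conditional on $I'_c$, the indicator $\mathbbm 1[\sigma_v=c]$ is independent of $\sigma_{-v}$. Indeed, whether $c$ appears on $K$ is read off from $I'_c$. If it does, the indicator is $0$; if not, it is $1$ with probability $1/m$ regardless of the rest of $\sigma_{-v}$.

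Set $h_c\defeq\E{f\mid\sigma_{-v},\mathbbm 1[\sigma_v=c]}$. Applying total entropy inside $\Ent{f\mid I_c}$ gives
\[
\E{\Ent{f\mid I_c}}=\E{\Ent{h_c\mid I_c}}+\E{\Ent{f\mid \sigma_{-v},\mathbbm 1[\sigma_v=c]}}.
\]
Given $I'_c$, the pair $(\sigma_{-v},\mathbbm 1[\sigma_v=c])$ is a product measure. Exact subadditivity of entropy for products, $\Ent{\E{h\mid X}}+\Ent{\E{h\mid Y}}\le\Ent{h}$, then yields $\E{\Ent{h_c\mid I_c}}\ge\E{\Ent{g\mid I'_c}}$.

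\emph{Summing over colors.} Sum the per-color bound over $c$. The first terms contribute at least $(q-2)\Ent{g}$ by induction. For the $d$ colors appearing on $K$, the indicator is trivial, so the second term equals $\E{\Ent{f\mid\sigma_{-v}}}$ in full. For the $m$ admissible colors, it remains to show a single-vertex statement: for $\nu$ uniform on $[m]$ and $\phi\ge0$,
\[
\sum_{c\in[m]}\E[\nu]{\Ent[\nu]{\phi\mid\mathbbm 1[x=c]}}\ \ge\ (m-2)\Ent[\nu]{\phi}.
\]
This single-vertex inequality is exactly ACSE with constant $1/(m-2)$ for one free vertex on $m$ colors. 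Granting it and applying it conditionally on $\sigma_{-v}$, the total coefficient on $\E{\Ent{f\mid\sigma_{-v}}}$ is $d+(m-2)=q-2$. Multiplying through by $\frac{1}{q-2}$ gives exactly the claimed ACSE constant $1/(q-2)$.

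\emph{Main obstacle.} The single-vertex inequality is trivial for $m\le 2$, since the right-hand side is nonpositive; note $m=1,2$ can occur even though $q\ge3$. For $m\ge 3$, rewrite it via $\E{\Ent{\phi\mid\mathbbm1_c}}=\Ent{\phi}-\Ent{\E{\phi\mid\mathbbm1_c}}$ as $\sum_c D(\mathrm{Bern}(p_c)\,\|\,\mathrm{Bern}(1/m))\le 2D(p\,\|\,u)$, where $p$ is the normalized density. Expanding, the terms $p_c\log(mp_c)$ sum to exactly $D(p\|u)$. What remains is $(m-1)D(p'\|u)\le D(p\|u)$ for $p'=u+(u-p)/(m-1)$. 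I would attack this by restricting to the line $t\mapsto u+t(p-u)$: show $(m-1)\,\gamma(-1/(m-1))\le\gamma(1)$ for the convex function $\gamma(t)=D(u+t(p-u)\|u)$. I would first try convexity plus an explicit comparison of the two branches, and fall back to a direct coordinatewise inequality if that fails.
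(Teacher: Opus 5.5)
\textbf{Verdict: there is a genuine gap.} Your reduction is correct and follows the paper's proof essentially step for step. You peel off a simplicial vertex $v$. You then use that $\mathbbm{1}[\sigma_v=c]$ is independent of $\sigma_{-v}$ given $I'_c$ to get $\E{\Ent{h_c\mid I_c}}\ge\E{\Ent{g\mid I'_c}}$; the paper calls this convexity of entropy, and your product-subadditivity phrasing is equivalent. Finally you count $d+(m-2)=q-2$ on the second term. Starting the induction at $V=\emptyset$ is also fine: it folds the paper's singleton base case into the inductive step with $d=0$, $m=q$. The gap is that the single-vertex inequality, $(m-1)D(\mathbf r\Vert u)\le D(\mathbf p\Vert u)$ with $r_i=(1-p_i)/(m-1)$, is never proved. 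It is the only genuinely non-trivial analytic input of the lemma; everything else is bookkeeping. It is needed for every $m\in\{3,\dots,q\}$, in particular $m=q=3$.

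\textbf{Why your sketched attacks do not close it.} Convexity of $\gamma$ together with $\gamma(0)=\gamma'(0)=0$ gives no relation between $\gamma(-1/(m-1))$ and $\gamma(1)$, so everything rests on the unspecified comparison of the two branches.

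The coordinatewise fallback provably fails for small $m$. Write $h_i=mp_i-1$ and $\Phi(x)=x\log x-x+1$. The inequality becomes $\sum_i\beta(h_i)\ge 0$ with $\beta(h)=\Phi(1+h)-(m-1)\Phi(1-h/(m-1))$. The only constraint is $\sum_i h_i=0$, and $\beta(0)=\beta'(0)=0$. Hence no affine correction is available, and a coordinatewise certificate must be $\beta\ge 0$ pointwise. But $\beta(m-1)=m\log m-2(m-1)$, which is negative for $m=3,4$ (e.g.\ $3\log 3-4<0$). At $\mathbf p=\delta_1$ the sum is positive only because the $\beta(-1)$ terms compensate. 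For $m\ge 5$ the pointwise bound does hold, since $\beta$ is convex on $[-1,\frac{m-2}{2}]$ and concave beyond, but that does not cover the cases you need.

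\textbf{How the paper closes the step.} It cites \Cref{lem:wrong-color-entropy-bound} (Proposition 32 of \cite{gu2023nonlinear}), which gives $D(\mathbf r\Vert u)\le\frac{\log m-\log(m-1)}{\log m}D(\mathbf p\Vert u)$. This is stronger than what you need, because $\log\frac{m}{m-1}\le\frac{1}{m-1}\le\frac{\log m}{m-1}$ for $m\ge 3$. It yields the singleton ACSE constant $\frac{\log m}{(m-1)\log(m-1)}<\frac{1}{m-2}$. Without this citation, or a genuinely global argument covering $m\in\{3,4\}$, your proof is incomplete exactly at its base.
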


Note that combining \Cref{thm:subadditivity-implies-tensorization} with \Cref{lem:ACSE-chordal} directly proves \Cref{lem:ACTE-chordal}.

To prove \Cref{lem:ACSE-chordal}, we leverage the property of chordal graphs to set up a simple induction bounding the entropy decay. We first prove ACSE on a singleton graph, which serves as the base case for our induction.

\begin{lemma}[ACSE on a singleton graph]\label{lem:bounded-entropy-decay-singleton}
    Let $G=(V,E)$ be a singleton graph and let $q\geq 3$. Then $\mu_{G,q}$ satisfies approximate colorwise subadditivity of entropy with constant
    \[
       \frac{\log q}{(q-1)\log {(q-1)}}.
    \]
\end{lemma}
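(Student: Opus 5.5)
The plan is to reduce the inequality to an explicit inequality about probability vectors on $[q]$, and then to minimize a separable function over the simplex. Since $G$ is a single vertex, $\mu=\mu_{G,q}$ is uniform on $[q]$, and a test function is a vector $f=(f_1,\dots,f_q)\in\=R_{\geq0}^q$. By homogeneity of both sides we may assume $f\not\equiv 0$ and normalize $p_c\defeq f_c/\sum_{c'}f_{c'}$, so that $p$ is a probability vector. For the left-hand side we have
\[
\Ent[\mu]{f}/\E[\mu]{f}=\log q-H(p).
\]
For the right-hand side, conditioning on $I_c$ has two outcomes. If $\sigma=c$, the conditional entropy is $0$. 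Otherwise $\sigma$ is uniform on $[q]\setminus\{c\}$, and the conditional entropy equals $m_c\tp{\log(q-1)-H(p^{(c)})}$, where $m_c=\sum_{c'\neq c}f_{c'}/(q-1)$ and $p^{(c)}$ is $p$ conditioned to avoid $c$. Multiplying by the probability $(q-1)/q$ of the second outcome and dividing by $\E[\mu]{f}$ gives
\[
\E[\mu]{\Ent[\mu]{f\mid I_c}}/\E[\mu]{f}=(1-p_c)\tp{\log(q-1)-H(p^{(c)})}.
\]

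Next I would apply the chain rule $H(p)=h(p_c)+(1-p_c)H(p^{(c)})$, where $h$ is the binary entropy, and sum over $c$. This gives
\[
\sum_c(1-p_c)H(p^{(c)})=qH(p)-\sum_c h(p_c),
\]
so, with $C=\frac{\log q}{(q-1)\log(q-1)}$, the claim becomes $F(p)\geq 0$ on the simplex, where
\[
F(p)\defeq C(q-1)\log(q-1)-\log q-(Cq-1)H(p)+C\sum_c h(p_c).
\]
Two checks show the constant is the right one. At a point mass $F=C(q-1)\log(q-1)-\log q=0$, which is exactly how $C$ is calibrated. At the uniform vector a direct computation gives $F=0$ as well. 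So both extremes are tight, and the proof must be a genuine optimization rather than a crude bound.

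The key structural point is that $F(p)=\mathrm{const}+\sum_c\psi(p_c)$ is separable, with
\[
\psi(x)=(Cq-1)x\log x-C\tp{x\log x+(1-x)\log(1-x)}.
\]
I will minimize $\sum_c\psi(p_c)$ subject to $\sum_c p_c=1$.
\begin{itemize}
\item At an interior minimizer the Lagrange condition $\psi'(p_c)=\lambda$ holds for all $c$. Computing $\psi''(x)=\frac{Cq-1}{x}-C\tp{\frac1x+\frac1{1-x}}$ shows that $\psi''$ changes sign at most once on $(0,1)$, so $\psi'$ is unimodal. Hence the coordinates of a minimizer take at most two distinct values.
\item Minimizers on the boundary (some $p_c=0$) reduce to the same analysis with fewer coordinates.
\end{itemize}
This reduces the problem to vectors with $k$ coordinates equal to $a$ and $q-k$ equal to $b$, where $ka+(q-k)b=1$; I expect the boundary reductions to land in this same family with $b=0$.

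The main obstacle is this final family of one-variable inequalities $g_k(a)\geq0$, indexed by $k$. I would first try to show that each $g_k$ has no interior minimum below $0$, by checking signs of derivatives. Here I would use that the endpoints are point masses or uniform vectors on subsets, where $F$ is computable in closed form; the uniform-on-$k$-colors values must be compared using monotonicity of $\frac{\log m}{(m-1)\log(m-1)}$ in $m$. If that direct calculus becomes unwieldy, my fallback is an induction on $q$. The idea is to interpret the inequality as a Han/Shearer-type statement: compare $\log q-H(p)$ with the averaged deficiencies $\log(q-1)-H(p^{(c)})$ via the chain rule, so that the worst case is forced to be the point mass, which is where the constant $C$ comes from.
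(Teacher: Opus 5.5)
Your reduction is correct. With $p_c=f_c/\sum_{c'}f_{c'}$, the lemma is equivalent to $F(p)\ge 0$ on the simplex. Via the law of total entropy, this is \emph{exactly} the Gu--Polyanskiy bound $D(\mathbf r\Vert\mathrm{Unif}_q)\le\frac{\log q-\log(q-1)}{\log q}D(\mathbf p\Vert\mathrm{Unif}_q)$ (\Cref{lem:wrong-color-entropy-bound}). The paper simply cites that bound after computing $\sum_c\mathrm{Ent}[\mathbb{E}[f\mid I_c]]=D(\mathbf p\Vert\mu)+(q-1)D(\mathbf r\Vert\mu)$. So you are in effect reproving that proposition, and your proposal stops exactly where its content lies. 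The inequalities $g_k(a)\ge0$ are never established, only announced (``I would first try\dots''). The induction/Shearer fallback contains no mechanism that would produce the constant $\frac{\log q}{(q-1)\log(q-1)}$. This is a genuine gap.

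There are also two smaller inaccuracies:
\begin{itemize}
\item $F=0$ at the uniform vector is not tightness. Both sides vanish there because $f$ is constant, and the local ratio is $\frac{q-1}{q(q-2)}<C$, so there is slack.
\item For uniform-on-$k$-colors vectors, the comparison you need is that $m-\frac{(m-1)\log(m-1)}{\log m}$ is nonincreasing. This is stronger than the monotonicity of $\frac{\log m}{(m-1)\log(m-1)}$ that you cite.
\end{itemize}

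The gap can be closed along your own lines, in three steps.

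(i) Use second-order conditions. At a local minimizer in the relative interior, at most one coordinate can exceed the inflection point $x^*=1-\frac{C}{Cq-1}$ of $\psi$: two such coordinates $p_i,p_j$ would make $s\mapsto F(p+s(e_i-e_j))$ have negative second derivative. Also $\psi'$ is strictly increasing on $(0,x^*]$. So, up to permuting colors, the Lagrange condition leaves only the uniform vector and $p_t=(1-t,\frac{t}{q-1},\dots,\frac{t}{q-1})$ with $0\le t<\frac{q-1}{q}$. Your $k$-indexed families are then unnecessary.

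(ii) Non-vertex boundary points are never minimizers. Write $\psi(x)=(Cq-1-C)x\log x-C(1-x)\log(1-x)$, where $Cq-1-C=C(q-1)-1>0$. This gives $\psi'(0^+)=-\infty$, so moving a little mass onto a zero coordinate lowers $F$. Vertices give $F=0$.

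(iii) Let $\beta=C(q-1)=\frac{\log q}{\log(q-1)}$ and let $h$ be the binary entropy. Then
\[
F(p_t)=\beta\, h\!\left(\frac{t}{q-1}\right)-(\beta-1)\,h(t)-t\left(\log\frac{q}{q-1}+\frac{\log q}{q-1}\right),
\]
and $\frac{d^2}{dt^2}F(p_t)$ has the sign of $t-t_0$, where $t_0=(q-1)-\beta(q-2)$. This $t_0$ lies in $(0,\frac{q-1}{q})$: the lower bound is $(q-1)\log(q-1)>(q-2)\log q$, and the upper bound is $C>\frac{q-1}{q(q-2)}$. We have $F(p_0)=F(p_{(q-1)/q})=0$, and $\frac{d}{dt}F(p_t)$ vanishes at $t=\frac{q-1}{q}$. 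Convexity on $[t_0,\frac{q-1}{q}]$ therefore makes $F(p_t)$ nonincreasing, hence nonnegative, there. Concavity on $[0,t_0]$ with nonnegative endpoint values handles the rest.

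With (i)--(iii), your route becomes a self-contained elementary proof of the special case of Gu--Polyanskiy that the paper imports. Without them, the proposal does not prove the lemma.
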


To prove \Cref{lem:bounded-entropy-decay-singleton}, we rely on the following lemma from \cite{gu2023nonlinear}.

\begin{lemma}[{\cite[Proposition 32]{gu2023nonlinear}}]\label{lem:wrong-color-entropy-bound}
Let $q\geq 3$. For any probability vector $\mathbf{p}=(p_1,p_2,\dots,p_q)$ over $[q]$, let $\mathbf{r}=(r_1,r_2,\dots,r_q)$ where $r_i=\frac{1-p_i}{q-1}$. Then
\[
D(\mathbf{r}\Vert \mathrm{Unif}_q)\leq \frac{\log q-\log{(q-1)}}{\log q}D(\mathbf{p}\Vert \mathrm{Unif}_q).
\]
\end{lemma}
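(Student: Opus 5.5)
The statement immediately above is \Cref{lem:wrong-color-entropy-bound}, which the paper imports from \cite{gu2023nonlinear}. Since it is a self-contained inequality about probability vectors, I would prove it directly. I set $D(\mathbf p)\defeq D(\mathbf p\Vert \mathrm{Unif}_q)=\log q-H(\mathbf p)$ and note that $\mathbf r=(J\mathbf p-\mathbf p)/(q-1)$, where $J\mathbf p$ denotes the all-ones vector, so $\mathbf r=K\mathbf p$ for the doubly stochastic ``move to a uniformly random \emph{other} color'' kernel $K=\frac{1}{q-1}(J-I)$. The claim is then a strong data processing inequality for $K$ with uniform input, with contraction coefficient $\eta=\frac{\log q-\log(q-1)}{\log q}=1-\frac{\log(q-1)}{\log q}$. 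Equivalently, I will show
\[
\log q - H(K\mathbf p)\le \eta\,\bigl(\log q - H(\mathbf p)\bigr),\quad\text{i.e.}\quad H(K\mathbf p)\ \ge\ \log(q-1)+\tfrac{\log(q-1)}{\log q}\,H(\mathbf p)\cdot\ldots
\]
More concretely, the target is $H(K\mathbf p)\ge (1-\eta)\log q+\eta H(\mathbf p)$.

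\textbf{Extremal cases.} The constant is tight at $\mathbf p=\delta_i$. There $H(\mathbf p)=0$ and $K\mathbf p$ is uniform on $q-1$ points, so $H(K\mathbf p)=\log(q-1)=(1-\eta)\log q$. At $\mathbf p=\mathrm{Unif}_q$ both sides vanish. The inequality therefore says that $H(K\mathbf p)$ lies above the chord joining these two endpoint values, as a function of $H(\mathbf p)$.

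\textbf{Main argument.} I will compare along straight lines. For any $\mathbf p$, write $\mathbf p$ as a mixture $\mathbf p=\sum_i \lambda_i \mathbf p^{(i)}$ of points on segments between vertices and the center. A cleaner route is to define $F(\mathbf p)\defeq H(K\mathbf p)-\eta H(\mathbf p)$ and show $F\ge (1-\eta)\log q$ on the simplex.

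I would first reduce to vectors with at most two distinct values, or handle the general case via symmetrization. $F$ is symmetric, but it is a difference of concave functions, so Schur-type arguments are needed. The step I would try first is a Lagrange/KKT analysis: at an interior critical point, $-\log((1-p_i)/(q-1))\cdot\frac{-1}{q-1}$ terms combine to give
\[
\frac{1}{q-1}\log r_i-\eta\log p_i=\text{const},
\]
and the function $p\mapsto \frac{1}{q-1}\log\frac{1-p}{q-1}-\eta\log p$ has at most two solutions per level. Hence the minimizer takes at most two values, $p_1=\dots=p_k=a$ and the rest equal to $b$, including boundary cases with some zeros. This reduces the problem to a one- or two-parameter calculus inequality. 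I would then verify it, using convexity in the parameter along the segment from the uniform distribution to a face center, and checking the endpoints (vertices, face centers, uniform).

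\textbf{Main obstacle.} The hardest part is this final two-value inequality, which must hold uniformly in $q\ge3$ and in $k$. I expect to handle it by showing that $t\mapsto F(\text{uniform}+t(\text{face center}-\text{uniform}))$ minus the linear interpolation is nonnegative. That in turn needs a careful second-derivative comparison between $\frac{1}{(q-1)^2}$-scaled terms and the $\eta$-scaled terms. If this becomes messy, the fallback is to cite \cite[Proposition 32]{gu2023nonlinear} directly, since the paper itself does.
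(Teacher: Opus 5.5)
\textbf{Verdict: genuine gap.} The paper does not prove this lemma; it imports it from \cite[Proposition 32]{gu2023nonlinear}, so your fallback is exactly the paper's route. As a self-contained proof, however, your proposal stops at its only hard step, and the method you propose for that step does not work.

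\textbf{The reduction needs two repairs.} First, for $F(\mathbf p)=H(K\mathbf p)-\eta H(\mathbf p)$ the stationarity condition is $\frac{1}{q-1}\log r_i+\eta\log p_i=\text{const}$, with a plus sign. With your minus sign the left side is strictly decreasing in $p_i$, which would force every interior critical point to be uniform. That is false: each uniform-to-vertex segment carries a non-uniform critical point. The two-value conclusion instead comes from strict concavity of $g(p)=\frac{1}{q-1}\log\frac{1-p}{q-1}+\eta\log p$. Second, KKT on a face allows three values $0,a,b$, so zeros cannot simply be folded into the two-value ansatz. Instead, note that no non-vertex boundary point is a minimizer: moving mass $\epsilon$ into a zero coordinate from a coordinate in $(0,1)$ changes $F$ by $\eta\epsilon\log\epsilon+O(\epsilon)<0$.

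\textbf{The gap: the one-parameter inequality.} This inequality is the whole content of the cited proposition, and your mechanism for it fails. Consider the segment $\mathbf p(t)=(1-t)\mathrm{Unif}_q+t\,\mathbf c_k$, where $\mathbf c_k$ is uniform on $k$ coordinates and the two coordinate values are $a=\frac{1-t}{q}+\frac tk$ and $b=\frac{1-t}{q}$. Along it, $F(0)=\log(q-1)$, $F'(0)=0$, and $F''(0)=\frac{q-k}{k}\bigl(\eta-\frac{1}{(q-1)^2}\bigr)>0$.
\begin{itemize}
\item For $k=1$ we also have $F(1)=\log(q-1)$, so $F$ lies strictly above its constant chord inside the segment. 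Hence $F$ is not convex there (indeed $F''(t)\to-\infty$ as $t\to1$). Your claim that ``$F$ minus its linear interpolation is nonnegative'' just restates the lemma on this segment.
\item For $k\ge2$ that claim is false near $t=0$. The chord has slope $F(\mathbf c_k)-\log(q-1)>0$, while $F'(0)=0$. For example, with $q=3$ and $k=2$, $F(\mathbf c_2)=(\frac32-\eta)\log 2>\log 2$.
\end{itemize}

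\textbf{What a correct finish would need.} You would have to track the sign changes of $F''$, or directly the sign of $F'(t)=\frac{q-k}{q}\bigl(g(a)-g(b)\bigr)$.
\begin{itemize}
\item This does settle $k=1$: $F''$ changes sign exactly once, from $+$ to $-$, and together with $F'(0)=0$ and $F(1)=F(0)$ that suffices.
\item It also settles $k\ge q/2$, where $F''>0$ throughout.
\item For some $2\le k<q/2$ it does not suffice. For example, with $q=1000$ and $k=2$, $F''(\frac12)<0$ while $F''>0$ near both ends, and $F'(t)\to+\infty$ as $t\to1$. Sign information on $F''$ plus endpoint values then cannot rule out an interior local minimum below $\log(q-1)$.
\end{itemize}
Closing those cases needs a further estimate uniform in $q$ and $k$, such as a direct sign analysis of $g(a)-g(b)$ along the whole segment. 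Your proposal does not supply it. As written, it reduces the lemma to an unproved inequality of the same difficulty, and its only complete route is the citation, which is what the paper does.
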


\begin{proof}[Proof of \Cref{lem:bounded-entropy-decay-singleton}]
    Let $\mu=\mu_{G,q}$. Because $G$ is a singleton, $\mu=\mathrm{Unif}_q$. Without loss of generality, we assume that $\E{f}=1$. Let $\mathbf{p}=(p_1,p_2,\dots,p_q)$ be a probability vector where $p_i=\frac{f(i)}{q}$; it then holds that $\Ent{f}=D(\mathbf{p}\Vert \mu)$. Also let $\mathbf{r}=(r_1,r_2,\dots,r_q)$ where $r_i=\frac{1-p_i}{q-1}$.

    Note that for each $c\in [q]$ we have
    \[
    \E{f\mid I_c}=\begin{cases}qp_c, & \sigma=c, \\ \frac{q(1-p_c)}{q-1}, & \sigma\neq c, \end{cases}
    \]
    and therefore
    \[
    \Ent{\E{f\mid I_c}}=p_c\log{(qp_c)}+(1-p_c)\log {\frac{q(1-p_c)}{q-1}}.
    \]
    Summing over all $c \in [q]$, we obtain
    \begin{equation}\label{eq:singleton-entropy-relationship}
    \sum\limits_{c\in [q]}\Ent{\E{f\mid I_c}}=D(\mathbf{p}\Vert \mu)+(q-1)D(\mathbf{r}\Vert \mu)\leq\frac{q\log q-(q-1)\log {(q-1)}}{\log q}\Ent{f},
    \end{equation}
    where the last inequality follows from \Cref{lem:wrong-color-entropy-bound} and the fact that $\Ent{f}=D(\mathbf{p}\Vert \mu)$.
    
    By the law of total entropy, we have that for each $c\in [q]$,
    \[
    \Ent{f}=\Ent{\E{f\mid I_c}}+\E{\Ent{f\mid I_c}}.
    \]
    Summing this over all colors yields
    \[
    q\Ent{f}=\sum\limits_{c\in [q]}\Ent{\E{f\mid I_c}}+\sum\limits_{c\in [q]}\E{\Ent{f\mid I_c}}.
    \]
    Combining this with \eqref{eq:singleton-entropy-relationship}, we obtain
    \[
    \Ent{f}\leq \frac{\log q}{(q-1)\log {(q-1)}}\sum\limits_{c\in [q]}\E{\Ent{f\mid I_c}},
    \]
    concluding the proof.
\end{proof}

We are now ready to prove \Cref{lem:ACSE-chordal}, hence concluding the proof of \Cref{thm:WSK-rapid-mixing-chordal}. 

\begin{proof}[Proof of \Cref{lem:ACSE-chordal}]
      Let $\mu=\mu_{G,q}$. We proceed by induction on the size of the graph $|V|$. 
      
      \textbf{Base Case ($|V|=1$):} Note that for each $q\geq 3$,
    \[
    (q-1)\log {(q-1)}-(q-2)\log {q}=\log {q}-(q-1)\log {\frac{q}{q-1}}>\log {q}-1>0.
    \]
    Therefore, the bound $\frac{ \log q}{(q-1)\log(q-1)} < \frac{1}{q-2}$ holds, and the base case follows directly from \Cref{lem:bounded-entropy-decay-singleton}. 
    
    \textbf{Inductive Step ($|V| > 1$):} Because $G$ is chordal, there exists a simplicial vertex $v\in V$. We choose one such $v$ arbitrarily. Note that the induced subgraph $H=G\setminus \{v\}$ is also chordal and $q\geq \omega(H)$. Moreover, since the neighborhood of $v$ is a clique, so its colors are distinct, and the number of available colors for $v$ is $q-|N(v)|$, independent of the coloring. Therefore, every proper coloring of $H$ can be extended to a proper coloring of $G$ in exactly the same number of ways. Therefore, the distribution $\mu_{H,q}$ is exactly the marginal distribution of $\mu$ on $V\setminus \{v\}$. Recall that for each $\sigma\in \Omega_{G,q}$ and $c\in [q]$, we define the indicator configuration $I_c(\sigma)=\{u\in V\mid \sigma_u=c\}$.
    We additionally define the partial indicators: 
    \[
    I_c^{0}(\sigma)\defeq \{u\in V\setminus \{v\}\mid \sigma_u=c\}, \quad B_c(\sigma)\defeq \one{\sigma_v=c}, 
    \]
    so that $I_c$ encodes the exact same information as the pair $(I_c^{0},B_c)$. 

    Applying the law of total entropy, we have for each $c\in [q]$:
    \begin{equation}\label{eq:chordal-induction-law-of-total-entropy}
         \E{\Ent{f\mid I_c}}=\E{\Ent{\E{f\mid \sigma_{V\setminus \{v\}},B_c}\mid I_c}}+\E{\Ent{f\mid \sigma_{V\setminus \{v\}},B_c}},
    \end{equation}
    where in the final term we dropped the conditioning on $I_c$ because the full partial configuration $\sigma_{V\setminus \{v\}}$ together with $B_c$ strictly determines $I_c$.

    We bound the sum over all colors for the two terms on the RHS of \eqref{eq:chordal-induction-law-of-total-entropy} separately. Let $\bar{f}\defeq \E{f\mid \sigma_{V\setminus \{v\}}}$ be the function averaged over $v$. For the first term, note that for each $c\in [q]$:
    \[
    \E{\Ent{\E{f\mid \sigma_{V\setminus \{v\}},B_c}\mid I_c}}=\E{\Ent{\E{f\mid \sigma_{V\setminus \{v\}},B_c}\mid I_{c}^0,B_c}}\geq \E{\Ent{\bar{f}\mid I_c^0}},
    \]
    where the final inequality uses the convexity of entropy and the fact that $B_c$ is independent of $\sigma_{V\setminus \{v\}}$ conditioned on $I^0_c$. Summing over all colors and applying the induction hypothesis to $H$, we have:
    \begin{equation}\label{eq:chordal-induction-first-part}
    \sum\limits_{c\in [q]} \E{\Ent{\E{f\mid \sigma_{V\setminus \{v\}},B_c}\mid I_c}}\geq \E{\sum\limits_{c\in [q]}\Ent{\bar{f}\mid I_c^0}}\geq (q-2)\Ent{\bar{f}}.
    \end{equation}
    
    We now focus on the second term. Because $v$ is simplicial, any valid coloring $\sigma\in \Omega_{G,q}$ must assign distinct colors to the neighborhood $N(v)$.  We claim that:
    \begin{equation}\label{eq:chordal-induction-averaging}
        \sum\limits_{c\in [q]}\E{\Ent{f\mid \sigma_{V\setminus \{v\}}, B_c}\mid \sigma_{V\setminus \{v\}}}\geq (q-2)\Ent{f\mid \sigma_{V\setminus \{v\}}}.
    \end{equation}
    Let $T$ denote the set of colors in $[q]$ \emph{not present} in $\sigma_{N(v)}$. Thus $|T|=q-|N(v)|$, and because $q\geq \omega(G)$, we are guaranteed $|T|\geq 1$. When $|T|= 1$, $f$ is completely determined by $\sigma_{V\setminus \{v\}}$ and \eqref{eq:chordal-induction-averaging} trivially holds. We thus assume $|T|=s\geq 2$. For colors $c\notin T$, we have $B_c=0$ deterministically conditioned on $\sigma_{V\setminus \{v\}}$, and therefore: 
    \begin{equation}\label{eq:chordal-induction-color-present}
    \Ent{f\mid \sigma_{V\setminus \{v\}}, B_c}=\Ent{f\mid \sigma_{V\setminus \{v\}}}, \quad\forall c\notin T.
       \end{equation}
    For colors $c\in T$, conditioning on $\sigma_{V\setminus \{v\}}$ leaves $\sigma_v$ uniformly distributed over $T$. By \Cref{lem:bounded-entropy-decay-singleton} (and noting $(s-1)\log {(s-1)}>(s-2)\log s$ for $s \geq 3$), we have for the case when $s\geq 3$: 
     \begin{equation}\label{eq:chordal-induction-color-not-present}
\sum\limits_{c\in T}\E{\Ent{f\mid \sigma_{V\setminus \{v\}}, B_c}\mid \sigma_{V\setminus \{v\}}}\geq (s-2)\Ent{f\mid \sigma_{V\setminus \{v\}}}.
     \end{equation}
     Note that \eqref{eq:chordal-induction-color-not-present} trivially holds for the case $s=2$.
     
     Combining \eqref{eq:chordal-induction-color-present} and \eqref{eq:chordal-induction-color-not-present} proves \eqref{eq:chordal-induction-averaging}. Averaging over all configurations $\sigma_{V\setminus \{v\}}$, we obtain:
      \begin{equation}\label{eq:chordal-induction-second-part}
     \sum\limits_{c\in [q]}\E{\Ent{f\mid \sigma_{V\setminus \{v\}}, B_c}}\geq (q-2)\E{\Ent{f\mid \sigma_{V\setminus \{v\}}}}.
     \end{equation}

     Finally, combining \eqref{eq:chordal-induction-law-of-total-entropy}, \eqref{eq:chordal-induction-first-part}, and \eqref{eq:chordal-induction-second-part} yields:
     \[
     \sum\limits_{c\in [q]}\E{\Ent{f\mid I_c}}\geq (q-2)\Ent{\bar{f}}+(q-2)\E{\Ent{f\mid \sigma_{V\setminus \{v\}}}}=(q-2)\Ent{f},
     \]
     where the last equality follows directly from the law of total entropy. This concludes the inductive step and the proof.
\end{proof}

\subsection{Approximate Colorwise Tensorization of Entropy for Outerplanar Graphs}

We will show that outerplanar graphs satisfy approximate colorwise tensorization of entropy with a larger constant $1/(1-\log 2)$.

\begin{lemma}[ACTE on outerplanar graphs]\label{lem:ACTE-outerplanar}
 
 Let $G=(V,E)$ be an outerplanar graph and let $q\geq 3$.  Then $\mu=\mu_{G,q}$ satisfies approximate colorwise tensorization of entropy with constant $1/(1-\log 2)$.  
\end{lemma}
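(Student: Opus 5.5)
The plan is to mirror the chordal argument, but to run the induction along the structural decomposition of \Cref{cor:structural-decomposition} rather than along simplicial vertices. We aim at ACTE directly instead of passing through \Cref{thm:subadditivity-implies-tensorization}, which would multiply the constants over $i=3,\dots,q$. Concretely, set $\Phi_q(f)\defeq\sum_{\{a,b\}}\E{\Ent{f\mid I_{[q]\setminus\{a,b\}}}}$. We prove by induction on $m$ that $\Phi_q(f)\ge(1-\log 2)\Ent{f}$ for $\mu_{G_m,q}$, and in fact for every graph built by the three gluing operations.

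\textbf{Easy gluings.} Operations \eqref{item:outplanar-1} and \eqref{item:outplanar-2} add a vertex $v$ that is simplicial in $G_i$, with $|N(v)|\le 1$. Hence $\mu_{G_{i-1},q}$ is the marginal of $\mu_{G_i,q}$, and the chordal computation applies. We split each term with the law of total entropy as in \eqref{eq:chordal-induction-law-of-total-entropy}, now with $I_{[q]\setminus\{a,b\}}$ in place of $I_c$. The first piece is bounded below by convexity and the induction hypothesis on $\bar f=\E{f\mid\sigma_{V\setminus\{v\}}}$. The second piece needs a pointwise inequality: for $\sigma_v$ uniform on a set $T$ of size $s\ge q-1$, summing over pairs the conditional entropy given which colours outside $\{a,b\}$ $v$ takes must give at least $(1-\log 2)\Ent{f\mid\sigma_{V\setminus\{v\}}}$. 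That is a one-vertex ACTE statement, so it is checked directly (it holds even with constant $1$).

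\textbf{Cycle gluing.} Operation \eqref{item:outplanar-3} is the real work. Here $J_i$ is a cycle glued along an edge $\{x,y\}$, and the new vertices $P=V(J_i)\setminus\{x,y\}$ form a path whose endpoints are adjacent to $x$ and $y$. Because the colourings of $P$ compatible with $(\sigma_x,\sigma_y)$ are equinumerous for every pair of distinct colours, colour symmetry again makes $\mu_{G_{i-1},q}$ the exact marginal. The same decomposition then reduces the step to a \emph{pinned-path} statement. For the uniform colouring of the path $P$ with its endpoints forbidden from $\sigma_x$ and $\sigma_y$ respectively, we need
\[
\sum_{\{a,b\}}\E{\Ent{f\mid\sigma_{V\setminus P},\,I_{[q]\setminus\{a,b\}}\cap P}}\ \ge\ (1-\log 2)\Ent{f\mid\sigma_{V\setminus P}}.
\]
Equivalently, this is ACTE for a cycle in which two adjacent vertices are pinned.

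\textbf{Main obstacle.} The main obstacle is this pinned-cycle inequality, because pinning breaks the simplicial structure. The first attempt will be the comparison trick announced in the overview. Delete a uniformly random edge of the cycle: the uniform colouring of a cycle is a mixture over the deleted edge of path colourings conditioned on that edge being proper. Paths are chordal, so they enjoy ACTE constant $1$ by \Cref{lem:ACTE-chordal}, including under boundary pinnings via the simplicial-vertex argument. Transferring entropy and conditional entropy across the conditioning event costs a multiplicative loss. We expect to bound that loss by a two-colour computation, in which a single Kempe pair on one edge behaves like a fair coin, producing the factor $1-\log 2$. Should this fail to give a $q$-independent constant, the fallback is to do the induction along the path directly, carrying the two pinned colours as extra state and verifying the small base cases by hand.
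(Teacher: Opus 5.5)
Your reduction matches the paper's. You induct along \Cref{cor:structural-decomposition} and handle the vertex and pendant-edge gluings by the simplicial splitting plus one-vertex ACTE. You then reduce the cycle gluing, through the exact-marginal property, to ACTE with constant $1/(1-\log 2)$ for a cycle with a pinned edge (\Cref{lem:q-minus-two-ACSE-cycle}).

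The gap is that this pinned-cycle inequality, which carries all the content of the lemma, is only announced, and the mechanism you describe would not deliver it. Path ACTE with constant $1$ is equivalent to $\sum_S\Ent{\E{F\mid I_S}}\le(\binom{q}{2}-1)\Ent{F}$. Inflating this by the factor $1/(1-\log 2)$ you aim for gives something weaker than the trivial bound $\binom{q}{2}\Ent{f}$ for every $q\ge 3$. So the transfer cannot be a uniform multiplicative loss; the defect has to be additive and localized. Three ingredients are missing:
\begin{itemize}
\item[(i)] Normalize $\E[\mu]{f}=1$ and extend $f$ to path colourings by $F_e=1+\one{A_e}(f-1)$. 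Then $\Ent[\mu_e]{F_e}=\mu_e(A_e)\Ent[\mu]{f}$ and $\E[\mu_e]{F_e\mid I_S}=1+w_{e,S}(\E[\mu]{f\mid I_S}-1)$, with $w_{e,S}=\mu_e(A_e\mid I_S)$.
\item[(ii)] Use the pointwise inequality $\Phi(1+t(x-1))\ge\Phi(1-t)\Phi(x)$ (\Cref{claim:functional}). With $\psi(t)=\Phi(1-t)/t$, it turns the path bound into $\sum_S\E[\mu]{\psi(w_{e,S})\Phi(\E[\mu]{f\mid I_S})}\le(\binom{q}{2}-1)\Ent[\mu]{f}$.
\item[(iii)] Bound the residual $\sum_S\E[\mu]{(1-\bar\psi_S)\Phi(\E[\mu]{f\mid I_S})}$, where $\bar\psi_S$ is the average of $\psi(w_{e,S})$ over unpinned edges $e$.
\end{itemize}

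The paper carries this out at the one-colour level (\Cref{lem:refined-ACSE-cycle}), where the residual is only bounded by $(q-2)\log 2\,\Ent{f}$. To avoid a product of constants over $i=3,\dots,q$, it separates empty from nonempty revealed classes and inducts on $q$. A nonempty class cuts the cycle into pinned paths with ACTE constant $1$; an empty class leaves a pinned cycle with $q-1$ colours.

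Your fair-coin heuristic can instead close the argument directly at the ACTE level. Take $S=[q]\setminus\{a,b\}$. Then $w_{e,S}=1$ unless both endpoints of $e$ are uncovered, which means $e$ is an $\{a,b\}$-edge. In that case there are two possibilities:
\begin{itemize}
\item No vertex is revealed. The pinned path is then determined, so $w_{e,S}=1$.
\item Some revealed vertex separates the two endpoints. The adjacent pinned vertices lie in at most one component, so at least one endpoint is an independent fair coin on $\{a,b\}$. This gives $w_{e,S}=1/2$, and $1-\psi(1/2)=\log 2$.
\end{itemize}
Each unpinned edge is an $\{a,b\}$-edge for exactly one pair, so $\sum_S(1-\bar\psi_S)\le\log 2$ pointwise. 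Jensen then yields $\sum_S\Ent[\mu]{\E[\mu]{f\mid I_S}}\le(\binom{q}{2}-1+\log 2)\Ent[\mu]{f}$, which is exactly the pinned-cycle ACTE with constant $1/(1-\log 2)$ and needs no induction on $q$. Until you supply this computation, or actually carry out your fallback, the proposal does not prove the lemma.
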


Note that by combining with \Cref{thm:tensorization-of-entropy-implies-mixing}, \Cref{lem:ACTE-outerplanar} immediately proves \Cref{thm:WSK-rapid-mixing-outerplanar}.

To establish ACTE on outerplanar graphs, we maintain an inductive approach but modify the reduction strategy applied to chordal graphs. Specifically, instead of eliminating a simplicial vertex, each inductive step consists of removing an ear associated with a leaf face of the planar embedding. To this end, the basic distribution we need to consider is the uniform coloring on a cycle with two vertices pinned. We first establish ACSE for such types of distributions.

\begin{lemma}[ACSE on pinned cycle graphs]\label{lem:ACSE-cycle}
    Let $G=(V,E)$ be a cycle graph with $|V|\geq 3$ and let $q\geq 3$. Fix an edge $e_{\star}=\{u_{\star},v_{\star}\}\in E$ and distinct colors $\xi,\zeta\in [q]$. Let $\mu$ be the uniform measure on proper colorings $\sigma$ of $G$ satisfying $\sigma_{u_{\star}}=\xi$ and $\sigma_{v_{\star}}=\zeta$. Then $\mu$ satisfies approximate colorwise subadditivity of entropy with constant 
    \[
        \frac{1}{(q-2)(1-\log 2)}.
    \]
\end{lemma}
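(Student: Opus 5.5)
Removing the pinned edge $e_\star$ turns the cycle into a path $P$ from $u_\star$ to $v_\star$ with both endpoints pinned. The plan is to prove ACSE for the pinned path by induction and then transfer it to $\mu$ via the comparison argument described in the technical overview.

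\textbf{Step 1: the pinned path.} Write the path as $u_\star=w_0,w_1,\dots,w_m=v_\star$, with $w_0$ pinned to $\xi$ and $w_m$ pinned to $\zeta$. We induct on $m$ by peeling off $w_1$, in the style of the proof of \Cref{lem:ACSE-chordal}. The obstruction is that $w_1$ has neighbours $w_0$ (pinned) and $w_2$ (free), so conditioning on $\sigma_{V\setminus\{w_1\}}$ leaves $w_1$ uniform on $[q]\setminus\{\xi,\sigma_{w_2}\}$. That set has size $q-1$ or $q-2$, so $w_1$ is not simplicial in the required sense. Moreover, the marginal on $V\setminus\{w_1\}$ is no longer uniform: it is tilted by the weight $q-1-\one{\sigma_{w_2}\neq\xi}$.

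To handle this, we instead compare with a different measure. Remove a uniformly random edge $e$ of the cycle other than $e_\star$, keeping the pins. The result is a pinned forest in which each component is a path with at most one pinned endpoint. This is chordal-like after the pin: a path pinned at one end can be built by adding vertices each of which has a single earlier neighbour, so each added vertex is simplicial and has exactly $q-1$ allowed colours. For such a path, the argument of \Cref{lem:ACSE-chordal} goes through verbatim and gives ACSE constant $1/(q-2)$. The only change is that the pinned vertex contributes zero entropy, and the term-by-term bound \eqref{eq:chordal-induction-averaging} with $s=q-1$ still yields $q-2$.

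\textbf{Step 2: comparison between the cycle and the paths.} Let $\nu_e$ denote the uniform pinned colouring of $G-e$, and set $\nu=\frac{1}{|E|-1}\sum_e\nu_e$. Then $\mu=\nu(\cdot\mid \sigma_{e}\text{ proper})$. For a path of length $L$, the probability that the two endpoints of $e$ differ is $1-\Theta(q^{-1})$ uniformly, by the transfer-matrix eigenvalues $q-1$ and $-1$ of $J-I$. Hence $\mu$ and each $\nu_e$ have bounded density ratio.

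Bounded density ratios alone lose a multiplicative factor, which is where I expect $1/(1-\log 2)$ to come from. I would try to write
\[
\Ent[\mu]{f}\le \E[e]{\Ent[\nu_e]{f\,\one{\text{proper}}\cdot Z_e}}
\]
plus a correction. The point is that, given $G-e$, conditioning on properness at $e$ is a single binary event. The entropy it can create is at most that of a Bernoulli variable, which is bounded by $\log 2$ times the relevant mass. Rearranging should then give
\[
(1-\log 2)\,\Ent[\mu]{f}\le \frac{1}{q-2}\sum_c\E[\mu]{\Ent[\mu]{f\mid I_c}}.
\]
A further step is needed to pass from $\E[\nu_e]{\Ent{\cdot\mid I_c}}$ back to $\mu$-conditional entropies. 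For this I would use that $I_c$ determines whether the colour $c$ appears on both endpoints of $e$, together with convexity as in \eqref{eq:chordal-induction-first-part}.

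\textbf{Main obstacle.} The main difficulty is Step 2: controlling the entropy lost when conditioning the mixture of paths on properness of the removed edge, with a loss of exactly $1-\log 2$ and independent of $q$ and $|V|$. The averaging over the random edge $e$ should be what makes the bad event cheap. For each colouring, only one edge is relevant, and the indicator $\one{\sigma_{e}\text{ improper}}$ is measurable with respect to a single colour class $I_c$ (namely $c=\sigma$ at the endpoints). Its entropy contribution can therefore be charged to the term $\E{\Ent{f\mid I_c}}$.
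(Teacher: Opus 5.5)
\textbf{There is a genuine gap: Step 2, the core of the lemma, is missing, and the mechanism you sketch for it cannot produce a multiplicative constant.}

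Your framework matches the paper's. You delete a uniformly random unpinned edge $e$, observe that $\mu=\mu_e(\cdot\mid A_e)$ with $A_e$ the event that the endpoints of $e$ get distinct colors, and apply pinned-chordal ACSE with constant $1/(q-2)$ to the resulting path. Your Step 1 is fine.

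The problem is Step 2. Normalize $\E[\mu]{f}=1$ and transfer $f$ to the path by extending it by zero, $F=f\one{A_e}/\mu_e(A_e)$. Writing $w_{e,c}(I_c)=\mu_e(A_e\mid I_c)$, a direct computation gives
$\Ent[\mu_e]{F}=\Ent[\mu]{f}+\log\frac{1}{\mu_e(A_e)}$ and
$\E[\mu_e]{\Ent[\mu_e]{F\mid I_c}}=\E[\mu]{\Ent[\mu]{f\mid I_c}}+\E[\mu]{f\log\frac{1}{w_{e,c}(I_c)}}$.
So path ACSE only yields
\[
(q-2)\Ent[\mu]{f}\le\sum_{c}\E[\mu]{\Ent[\mu]{f\mid I_c}}+\E[\mu]{f\,h_e},\qquad h_e=\sum_c\log\frac{1}{w_{e,c}(I_c)}-(q-2)\log\frac{1}{\mu_e(A_e)}.
\]
This is exactly your ``a binary event costs at most $\log 2$'' bound, and its error term is linear in $f$. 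The function $h_e$ is not constant. For instance, with $q=3$, $|V|=4$ and $e$ the edge opposite $e_\star$, one gets $\E[\mu]{h_e}=\frac13\log\frac{27}{16}>0$. So the error does not vanish as $f$ tends to a constant, while $\Ent[\mu]{f}$ does, and no rearrangement turns it into $(\log 2)\Ent[\mu]{f}$. Bounded density ratios do not rescue this either. First, $d\mu/d\mu_e$ vanishes off $A_e$. Second, the relevant quantity is the conditional probability $\mu_e(A_e\mid I_c)$ for each color class, not the unconditional $\mu_e(A_e)$.

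The paper supplies four ingredients you do not have.
\begin{enumerate}
\item \emph{Mean extension.} It extends $f$ by its mean, $F_e=1+\one{A_e}(f-1)$. Then $\Ent[\mu_e]{F_e}=\mu_e(A_e)\Ent[\mu]{f}$ exactly and $\E[\mu_e]{F_e\mid I_c}=1+w_{e,c}(I_c)(\E[\mu]{f\mid I_c}-1)$, so the comparison is homogeneous. It then uses path ACSE in its entropy-loss form $\sum_c\Ent[\mu_e]{\E[\mu_e]{F\mid I_c}}\le 2\Ent[\mu_e]{F}$.
\item \emph{A pointwise inequality.} With $\Phi(x)=x\log x-x+1$, one has $\Phi(1+t(x-1))\ge\Phi(1-t)\Phi(x)$. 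This converts the path bound into $\sum_c\E[\mu]{\psi(w_{e,c}(I_c))\Phi(\E[\mu]{f\mid I_c})}\le 2\Ent[\mu]{f}$, where $\psi(t)=\Phi(1-t)/t$ is increasing.
\item \emph{A uniform bound on the conditional weights.} One has $w_{e,c}(I_c)=1$ when $I_c$ meets $e$, and $w_{e,c}(I_c)\ge 1/2$ otherwise. For nonempty $I_c$, deleting it leaves an endpoint of $e$ in an unpinned path component, uniform on $q-1$ colors. The case $I_c=\emptyset$ is a transfer-matrix computation. The constant then comes from $\psi(1/2)=1-\log 2$.
\item \emph{A counting fact.} Exactly two colors meet $e$; the paper averages this as $\sum_c\alpha_c(I_c)=2(|V|-1)$. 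Together with Jensen, this bounds the residual $\sum_c\E[\mu]{(1-\psi(w_{e,c}))\Phi(\E[\mu]{f\mid I_c})}$ by $(q-2)\log 2\,\Ent[\mu]{f}$. Hence $\sum_c\Ent[\mu]{\E[\mu]{f\mid I_c}}\le(2+(q-2)\log 2)\Ent[\mu]{f}$, which is the lemma.
\end{enumerate}

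Your closing heuristic also points the wrong way. The colors whose classes meet $e$ cost nothing, since $w=1$ for them. The loss comes from the $q-2$ colors whose classes miss $e$, and what is needed there is the lower bound $\mu_e(A_e\mid I_c)\ge 1/2$.
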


Before trying to establish a proof for \Cref{lem:ACSE-cycle}, we note that directly applying \Cref{thm:subadditivity-implies-tensorization} for the ACSE constant in \Cref{lem:ACSE-cycle} will lead to an ACTE constant that has super-polynomial dependency in $q$. Hence, we will prove the following stronger version of \Cref{lem:ACSE-cycle}, which will turn out to lead to a proof for a polynomial ACTE constant.

\begin{lemma}[refined one-color reveal bound on pinned cycle graphs]\label{lem:refined-ACSE-cycle}
    Let $G=(V,E)$ be a cycle graph with $|V|\geq 3$ and let $q\geq 3$. Fix an edge $e_{\star}=\{u_{\star},v_{\star}\}\in E$ and distinct colors $\xi,\zeta\in [q]$. Let $\mu$ be the uniform measure on proper colorings $\sigma$ of $G$ satisfying $\sigma_{u_{\star}}=\xi$ and $\sigma_{v_{\star}}=\zeta$. Then for any $f:\Omega\to \=R^{\geq 0}$ it holds that
    \begin{equation}\label{eq:refined-entropy-conservation-cycle-constant}
        \Ent[\mu]{f}\leq \frac{1}{(q-2)(1-\log 2)}(H_{+}(f)+(1-\log 2)H_{\emptyset}(f)),
    \end{equation}
    where 
    \[
   H_{\emptyset}(f)\defeq \sum\limits_{c\in [q]}\E[\mu]{\one{I_c=\emptyset}\Ent[\mu]{{f\mid I_c}}},\quad H_{+}(f)\defeq \sum\limits_{c\in [q]}\E[\mu]{\one{I_c\neq \emptyset}\Ent[\mu]{{f\mid I_c}}}.
    \]
\end{lemma}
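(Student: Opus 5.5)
The plan is to prove \eqref{eq:refined-entropy-conservation-cycle-constant} by a comparison with pinned \emph{paths}, as suggested in the technical overview. The reason is that paths are chordal, so the simplicial-vertex induction of \Cref{lem:ACSE-chordal} is available for them, while the cycle itself resists direct induction.

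\textbf{Step 1 (refined inequality for pinned paths).} Fix a non-pinned edge $e=\{x,y\}\neq e_\star$ of the cycle. Let $\nu_e$ be the uniform measure on proper colorings of the path $G-e$ with $\sigma_{u_\star}=\xi$ and $\sigma_{v_\star}=\zeta$. I would first show that for every $g\ge 0$,
\[
\Ent[\nu_e]{g}\le \frac{1}{q-2}\sum_{c\in[q]}\E[\nu_e]{\Ent[\nu_e]{g\mid I_c}}.
\]
The proof is an induction that removes endpoint vertices of the path (these are simplicial) one at a time. It mirrors the proof of \Cref{lem:ACSE-chordal}: the law of total entropy splits each $I_c$ into $(I_c^0,B_c)$, and the singleton bound of \Cref{lem:bounded-entropy-decay-singleton} handles the $s=q-1$ free colors at the removed leaf. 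The base case is the pinned edge $e_\star$, where $\mu$ is a point mass and all entropies vanish. Each removed vertex has exactly one neighbour in the remaining graph, so uniformity of the marginal is preserved exactly as in the chordal case. Along the way I would record that revealing an empty color class carries a separate, weaker share of the entropy.

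\textbf{Step 2 (comparison cycle $\leftrightarrow$ path).} Let $A_e=\{\sigma_x\ne\sigma_y\}$, so that $\mu=\nu_e(\cdot\mid A_e)$. Extend $f$ by $0$ off $A_e$ and apply Step 1 to $g=f\mathbf 1_{A_e}$. The identity
\[
\Ent[\nu_e]{f\mathbf 1_{A_e}}=\nu_e(A_e)\Bigl(\Ent[\mu]{f}+\E[\mu]{f}\log\tfrac{1}{\nu_e(A_e)}\Bigr)
\]
holds, together with its conditional analogue given $I_c$, in which $\nu_e(A_e\mid I_c)$ appears. Subtracting the two gives
\[
(q-2)\,\Ent[\mu]{f}\le \sum_c\E[\mu]{\Ent[\mu]{f\mid I_c}} +\sum_c \E[\mu]{f\log\frac{\nu_e(A_e)}{\nu_e(A_e\mid I_c)}}-\text{(a nonnegative term)}.
\]
I would then average over a uniformly random edge $e\neq e_\star$. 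This averaging should let one use that, for a typical edge, the constraint $\sigma_x\ne\sigma_y$ is only weakly correlated with any single color class.

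\textbf{Step 3 (bounding the correction, the main obstacle).} The correction terms $\E[\mu]{f\log(\nu_e(A_e)/\nu_e(A_e\mid I_c))}$ must be absorbed into $\tfrac{\log 2}{1-\log 2}H_+(f)$. Rescaling the inequality by $1/(q-2)$ shows that this is the only source of loss relative to the target \eqref{eq:refined-entropy-conservation-cycle-constant}. My first attempt would use a bound $\nu_e(A_e\mid I_c)\ge\tfrac12$, which gives a factor $\log 2$. The next task is to show that this loss is controlled by $\log 2$ times $\Ent[\mu]{f}$, since the entropy cost of revealing $I_c$ can be charged against it. Rearranging $(q-2)\Ent[\mu]{f}\le\sum_c(\cdots)+(q-2)\log 2\,\Ent[\mu]{f}$ then yields the factor $1/(1-\log 2)$.

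The key distinction is between empty and nonempty color classes. When $I_c=\emptyset$, revealing $c$ only forbids a color, and then $\nu_e(A_e\mid I_c)$ stays comparable to $\nu_e(A_e)$, so no loss is incurred. This is why $H_\emptyset$ enters with the better weight $1$ (that is, $(1-\log 2)$ after rescaling). I expect the delicate part to be this per-color bookkeeping, and in particular proving $\nu_e(A_e\mid I_c)\ge 1/2$ uniformly when $q=3$.
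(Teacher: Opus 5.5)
\textbf{Steps 2--3 have a genuine gap.} Your Step 1 is fine; it is exactly the pinned-chordal ACSE. Steps 2--3 cannot work, however, and the cause is extending $f$ by $0$ off $A_e$. Write $p_e=\nu_e(A_e)$ and $w_{e,c}=\nu_e(A_e\mid I_c)$. Applying Step 1 to $f\mathbf{1}_{A_e}$ and dividing by $p_e$ gives exactly
\[
(q-2)\Ent[\mu]{f}\le\sum_{c\in[q]}\E[\mu]{\Ent[\mu]{f\mid I_c}}+\E[\mu]{f\,R_e},\qquad R_e=\sum_{c\in[q]}\log\frac{p_e}{w_{e,c}(I_c)}+2\log\frac{1}{p_e}.
\]
So the leftover term $2\E[\mu]{f}\log(1/p_e)\ge 0$ enters with a plus sign; it is not a subtracted term.

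The more serious problem is that $f\mapsto\E[\mu]{fR_e}$ is linear. By contrast, $\Ent[\mu]{f}$ and $H_+(f)$ vanish at constants and are quadratic near them. Hence the correction can be bounded by $\log 2\,\Ent[\mu]{f}$ or by $\frac{\log 2}{1-\log 2}H_+(f)$ only if the edge-averaged $R_e$ vanishes identically on $\Omega$. It does not. Take the $4$-cycle $u_\star v_\star ab$ with $q=3$, $\xi=1$, $\zeta=2$ and $e=\{a,b\}$. Then $(\sigma_a,\sigma_b)\in\{(1,2),(1,3),(3,2)\}$ and $p_e=\frac34$. On these three colorings $R_e$ takes the values $-\log\frac43,\log\frac32,\log\frac32$, so $\E[\mu]{R_e}=\frac13\log\frac{27}{16}>0$. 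Thus $f\equiv 1$ already breaks Step 3.

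Averaging over $e$ does not help, because Step 1 with $f\equiv 1$ forces $\E[\mu]{R_{e'}}\ge 0$ for every $e'$. The bound $w_{e,c}\ge\frac12$ controls the correction by a multiple of $\E[\mu]{f}$, never by a multiple of $\Ent[\mu]{f}$.

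\textbf{What is missing is the paper's mechanism.}
\begin{enumerate}
\item Normalize $\E[\mu]{f}=1$ and extend $f$ by its mean, $F_e=1+\mathbf{1}_{A_e}(f-1)$. Then $\Ent[\nu_e]{F_e}=p_e\Ent[\mu]{f}$ and $\E[\nu_e]{F_e\mid I_c}=1+w_{e,c}(\E[\mu]{f\mid I_c}-1)$. Use Step 1 in its entropy-loss form $\sum_c\Ent[\nu_e]{\E[\nu_e]{F\mid I_c}}\le 2\Ent[\nu_e]{F}$.
\item Let $\Phi(x)=x\log x-x+1$. The convexity inequality $\Phi(1+t(x-1))\ge\Phi(1-t)\Phi(x)$ shows that the path bound controls only the fraction $\psi(w_{e,c})=\Phi(1-w_{e,c})/w_{e,c}$ of each loss $\Phi(\E[\mu]{f\mid I_c})$. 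Since $\psi(\frac12)=1-\log 2$, the uncontrolled fraction is where $\log 2$ comes from.
\item The bookkeeping is then the reverse of your heuristic. An unpinned edge incident to a nonempty $I_c$ has $w_{e,c}=1$ and costs nothing.
\item Let $\alpha_c(I_c)$ be the number of unpinned edges incident to $I_c$. These numbers sum to $2(|V|-1)$, and this is what brings the residual down to $(q-2)\log 2\,\Ent[\mu]{f}$. Bounding every color's residual by $\log 2$ instead gives $q\log 2$, which is useless for $q\le 6$ since $q(1-\log 2)<2$.
\item An empty class has no incident edges and can pay the full $\log 2$. The better weight on $H_\emptyset$ comes from bounding that residual with the law of total entropy $\Phi(\E[\mu]{f\mid I_c})+\Ent[\mu]{f\mid I_c}=\E[\mu]{\Phi(f)\mid I_c}$ instead of Jensen. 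This frees exactly $\log 2\cdot\Ent[\mu]{f\mid I_c}$ on $\{I_c=\emptyset\}$.
\end{enumerate}
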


We remark that \Cref{lem:refined-ACSE-cycle} is indeed stronger than \Cref{lem:ACSE-cycle}. This is because we have
\[
\sum\limits_{c\in [q]}\E[\mu]{\Ent[\mu]{f\mid I_c}}=H_{\emptyset}(f)+H_{+}(f)\geq H_{+}(f)+(1-\log 2)H_{\emptyset}(f).
\]
Therefore it suffices to prove \Cref{lem:refined-ACSE-cycle}.

Note that for pinned cycle graphs the proof does not follow directly from induction. Our proof is based on a comparison with ACSE on chordal graphs when we remove one unpinned edge uniformly at random. We still need a pinned version of chordal ACSE, which follows directly from the same inductive proof of \Cref{lem:ACSE-chordal}.

\begin{lemma}[ACSE on chordal graphs with pinned cliques]\label{lem:ACSE-pinned-chordal}
    Let $G=(V,E)$ be a chordal graph and let $q\geq \max(3,\omega(G))$. Let $K\subseteq V$ be a clique with a fixed proper coloring, and let $\mu$ be the uniform distribution on proper $q$-colorings of $G$ extending that coloring. Then $\mu$ satisfies approximate colorwise subadditivity of entropy with constant $1/(q-2)$.
\end{lemma}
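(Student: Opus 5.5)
We follow the induction in the proof of \Cref{lem:ACSE-chordal}, now on the number of vertices outside the pinned clique $K$. The point is that every step there only used two facts: removing a simplicial vertex preserves uniformity of the marginal, and the conditional law of the removed vertex given the rest is uniform over its available colors. Both survive the pinning of $K$.

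\textbf{Base case ($V=K$).} Then $\mu$ is a point mass, so $\Ent[\mu]{f}=0$ and the inequality is trivial.

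\textbf{Inductive step ($V\supsetneq K$).} By \Cref{fact:chordal-facts}\eqref{item:chordal-3}, applied to the proper complete induced subgraph $G[K]$, there is a simplicial vertex $v\notin K$. Let $H=G-v$. Then $H$ is chordal by \Cref{fact:chordal-facts}\eqref{item:chordal-1}, contains $K$, and satisfies $\omega(H)\le q$.

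The pinned coloring of $K$ is proper and $q\ge\omega(G)$. Every proper coloring of $H$ extending the pinning therefore extends to $v$ in exactly $q-|N(v)|\ge 1$ ways, since $N(v)$ is a clique and so carries $|N(v)|$ distinct colors. Two consequences follow:
\begin{itemize}
\item the marginal of $\mu$ on $V\setminus\{v\}$ is the uniform pinned measure on $H$, so the induction hypothesis applies to it;
\item conditioned on $\sigma_{V\setminus\{v\}}$, the color $\sigma_v$ is uniform over the set $T$ of colors absent from $N(v)$, with $|T|=q-|N(v)|$.
\end{itemize}
In particular $\mu$ is nonempty at every stage.

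We then decompose exactly as in \eqref{eq:chordal-induction-law-of-total-entropy}, using $I_c\leftrightarrow(I_c^0,B_c)$ and the law of total entropy:
\begin{itemize}
\item For the first term, convexity and the conditional independence of $B_c$ and $\sigma_{V\setminus\{v\}}$ given $I_c^0$ reduce it to $\sum_c\E{\Ent{\bar f\mid I_c^0}}$. The induction hypothesis on the pinned measure on $H$ bounds this below by $(q-2)\Ent{\bar f}$.
\item For the second term, the argument is pointwise in $\sigma_{V\setminus\{v\}}$ and never referred to the pinning. Colors $c\notin T$ contribute the full conditional entropy. Colors $c\in T$ contribute at least $(s-2)$ times it, by \Cref{lem:bounded-entropy-decay-singleton} when $s=|T|\ge3$ and trivially when $s\le 2$.
\end{itemize}
Summing the two parts and applying the law of total entropy gives $(q-2)\Ent[\mu]{f}$.

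There is no real obstacle. The only points to check are that a simplicial vertex outside $K$ exists, which is exactly \Cref{fact:chordal-facts}\eqref{item:chordal-3}, and that the pinned conditional laws keep the required uniformity, which is the extension-count argument above.
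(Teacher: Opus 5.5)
Your proposal is correct and is essentially the paper's proof: the paper also uses \Cref{fact:chordal-facts}-(\ref{item:chordal-3}) to find an unpinned simplicial vertex, notes that removing it preserves the uniform pinned marginal, and then reruns the induction of \Cref{lem:ACSE-chordal}. Your write-up spells out what the paper leaves implicit: the point-mass base case $V=K$, the extension count $q-|N(v)|\ge 1$, and the observation that $\sigma_v$ remains uniform over $T$ with $|T|$ constant under the pinning, which is what both the conditional-independence step and the singleton bound require.
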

\begin{proof}
    Note that by \Cref{fact:chordal-facts}-(\ref{item:chordal-3}), when not all vertices have color fixed, there exists at least one uncolored simplicial vertex. Also, removing one uncolored simplicial vertex leaves the induced distribution on the remaining graph the same even in the presence of a pinned clique. Therefore, one can apply the same induction as in \Cref{lem:ACSE-chordal} to prove the lemma.
\end{proof}

We are now ready to prove \Cref{lem:ACSE-cycle,lem:refined-ACSE-cycle}.
\begin{proof}[Proof of \Cref{lem:ACSE-cycle,lem:refined-ACSE-cycle}]
Let $n=|V|$. Without loss of generality, we assume that $\E[\mu]{f}=1$. By the law of total entropy, it holds that
\begin{equation}\label{eq:entropy-loss}
H_\emptyset(f)+H_{+}(f)=q\Ent{f}-L(f),
\end{equation}
where 
\[
L(f)\defeq\sum\limits_{c\in [q]}\Ent[\mu]{\E[\mu]{f\mid I_c}}
\]
denotes the sum of entropy loss after revealing one color.

We first try to establish an upper bound on the entropy loss term in \eqref{eq:entropy-loss}, therefore giving a lower bound on $H_\emptyset(f)+H_{+}(f)$, and later see we can refine such method to also upper bound $H_{+}(f)+(1-\log 2)H_{\emptyset}(f)$.  We achieve this by comparing the cycle graph with a path graph obtained by uniformly removing a random edge. We will separate the entropy loss into two parts: the entropy loss bounded by the path graph comparison, and the residual entropy loss due to the cyclic constraint.

Consider the path graph $G_e=(V,E\setminus \{e\})$ obtained from $G$ by removing an arbitrary unpinned edge $e=\{u,v\}\in E'\defeq E\setminus \{e_{\star}\}$. Let $\Omega_e$ be the set of proper colorings $\sigma$ of $G_e$ satisfying $\sigma_{u_{\star}}=\xi$ and $\sigma_{v_{\star}}=\zeta$ and $\mu_e$ be the uniform measure on $\Omega_e$. For each $e=\{u,v\}\in E'$, let $A_{e}=\{\sigma_u\neq\sigma_v\}$ be the event for whether the two endpoints of $G_e$ have distinct colors. It holds that $\mu=\mu_e(\cdot \mid A_e)$ for each $e\in E'$. Furthermore, because paths are chordal graphs, \Cref{lem:ACSE-pinned-chordal} and the definition of ACSE (\Cref{def:color-entropy-subadditivity}) imply that for every $e\in E'$ and every $F:\Omega_{e}\to \mathbb{R}^{\geq 0}$,
\begin{equation}\label{eq:path-entropy-conservation}
    \sum_{c\in [q]}\E[\mu_e]{\Ent[\mu_e]{F\mid I_c}}\geq (q-2)\Ent[\mu_e]{F}.
\end{equation}
By the law of total entropy, \eqref{eq:path-entropy-conservation} is equivalent to:
\begin{equation}\label{eq:path-bounded-entropy-loss}
   \sum_{c\in [q]}\Ent[\mu_e]{\E[\mu_e]{F\mid I_c}}\leq 2\Ent[\mu_e]{F}.
\end{equation}

Observe that $\Omega\subseteq \Omega_e$ for each $e\in E'$. We extend the function $f$ on $\Omega$ to a function $F_e\defeq 1+\one{A_e}(f-1)$ on the larger state space $\Omega_e$. Intuitively, this assigns the mean value of $1$ to any invalid cycle configuration, ensuring these extra states contribute exactly zero to the entropy. Formally:
\begin{equation}
    F_e(\sigma)\defeq \begin{cases} f(\sigma) & A_e(\sigma)=1, \\ 1 & A_e(\sigma)=0. \end{cases}
\end{equation}
Because $\E[\mu]{f}=1$, we preserve the mean $\E[\mu_e]{F_e}=1$ for each $e\in E'$, yielding
\begin{equation}\label{eq:entropy-transform}
    \Ent[\mu_e]{F_e}=\mu_e(A_e)\Ent[\mu]{f}.
\end{equation}

For each edge $e\in E'$ and each color $c\in [q]$, define the weight $w_{e,c}(I_c)\defeq \mu_e(A_e\mid I_c)$. Then we can write:
\begin{equation}\label{eq:path-case-decomposition}
\E[\mu_e]{F_e\mid I_c}=1+w_{e,c}(I_c)\tp{\E[\mu]{f\mid I_c}-1}.
\end{equation}
Let $\Phi(t)\defeq t\log t-t+1$ be the entropy function, with the continuous extension $\Phi(0)=1$. Under this definition, $\Ent[\nu]{g}=\E[\nu]{\Phi(g)}$ for any distribution $\nu$ and any $g$ satisfying $\E[\nu]{g}=1$. We also introduce a helper function $\psi:[0,1]\to [0,1]$, formally defined as:
\begin{equation}\label{eq:definition-helper}
\psi(t)\defeq \begin{cases}0 & t=0, \\\frac{t+(1-t)\log (1-t)}{t} & 0<t<1, \\ 1 & t=1. \end{cases}
\end{equation}
Notice that $t\psi(t)=\Phi(1-t)$ for all $t\in [0,1]$.

We make the following claim, which will be verified after the end of the proof.
\begin{claim}\label{claim:functional}
For any $0\leq t\leq 1$ and $x\geq 0$,
\[
    \Phi(1+t(x-1))\geq \Phi(1-t)\Phi(x).
\]
\end{claim}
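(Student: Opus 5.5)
The plan is a one-variable calculus argument in $x$ for each fixed $t$. The endpoints $t\in\{0,1\}$ are immediate. For $t=0$ both sides vanish, since $\Phi(1)=0$. For $t=1$ the claim reads $\Phi(x)\ge\Phi(0)\Phi(x)=\Phi(x)$. So fix $0<t<1$, set $a\defeq\Phi(1-t)=(1-t)\log(1-t)+t\ge 0$, and define
\[
g(x)\defeq \Phi\bigl(1+t(x-1)\bigr)-a\,\Phi(x),\qquad x\ge 0 .
\]
The goal is $g\ge 0$ on $[0,\infty)$. Note that $1+t(x-1)\ge 1-t>0$, so $g$ is smooth on $(0,\infty)$ and continuous at $0$.

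\textbf{Step 1: zeros of $g$.} Using $\Phi(1)=0$, $\Phi(0)=1$ and $\Phi'(s)=\log s$, I will check that
\[
g(0)=\Phi(1-t)-a=0,\qquad g(1)=0,\qquad g'(x)=t\log\bigl(1+t(x-1)\bigr)-a\log x,
\]
so in particular $g'(1)=0$.

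\textbf{Step 2: sign pattern of $g''$.} Differentiating once more,
\[
g''(x)=\frac{t^2}{1+t(x-1)}-\frac{a}{x}.
\]
Its sign on $x>0$ equals the sign of the linear function
\[
\ell(x)=t^2x-a(1-t+tx)=t(t-a)\,x-(1-t)a .
\]
The key inequality is $a\le t^2$. It follows from $\log(1-t)\le -t$, which gives $a\le (1-t)(-t)+t=t^2$.

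This inequality yields two facts. First, $t-a\ge t-t^2\ge 0$, so $\ell$ is nondecreasing. Second, $\ell(1)=t^2-a\ge 0$. Hence there is an $x_0\in[0,1]$ such that $g$ is concave on $[0,x_0]$ and convex on $[x_0,\infty)$.

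\textbf{Step 3: conclude.}
\begin{itemize}
\item On $[x_0,\infty)$, $g$ is convex and $1$ lies in this interval. A convex function lies above its tangent line at $1$, which here is $g(1)+g'(1)(x-1)=0$. So $g\ge 0$ on $[x_0,\infty)$.
\item In particular $g(x_0)\ge 0$.
\item On $[0,x_0]$, $g$ is concave with $g(0)=0$ and $g(x_0)\ge 0$. A concave function lies above the chord joining its endpoint values, so $g\ge 0$ there as well.
\end{itemize}
Together these give $g\ge 0$ on all of $[0,\infty)$, which is the claim.

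The main obstacle is Step 2. One must see that $g''$ changes sign at most once, from negative to positive, and that the crossover point lies at or before $x=1$. Both facts come down to the single elementary estimate $\Phi(1-t)\le t^2$.
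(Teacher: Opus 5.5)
Your proof is correct and follows the paper's argument: it uses the same auxiliary function $g$ and the same linear numerator of $g''$, and the key estimate $\Phi(1-t)\le t^2$ in both places it to put the inflection point in $[0,1]$. The only difference is in the last step, and it is cosmetic: you finish with the tangent line at $1$ on the convex part and the chord on the concave part, whereas the paper tracks the sign changes of $g'$ using $g'(0^+)=+\infty$ and $g'(1)=0$.
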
   Combining \eqref{eq:path-case-decomposition} and \Cref{claim:functional}, we obtain for any $e\in E'$ and $c\in [q]$,
\[
\Phi(\E[\mu_e]{F_e\mid I_c})\geq w_{e,c}\psi(w_{e,c})\Phi(\E[\mu]{f\mid I_c}).
\]
Further combining  with \eqref{eq:entropy-transform} gives
\begin{equation}\label{eq:entropy-help-function}
    \Ent[\mu_e]{\E[\mu_e]{F_e\mid I_c}}\geq \mu_e(A_e)\E[\mu]{\psi(w_{e,c}(I_c))\Phi(\E[\mu]{f\mid I_c})}.
\end{equation}
Applying the path entropy bound \eqref{eq:path-bounded-entropy-loss} together with \eqref{eq:entropy-transform}, we find that for any $e\in E'$,
\begin{equation}\label{eq:tilted-entropy-change}
\sum_{c\in [q]}\E[\mu]{\psi(w_{e,c}(I_c))\Phi(\E[\mu]{f\mid I_c})}\leq 2\Ent[\mu]{f}.
\end{equation}
Define $\bar{\psi}_c(I_c)\defeq \frac{1}{n-1}\sum_{e\in E'}\psi(w_{e,c}(I_c))$ as the average over all unpinned edges. Averaging \eqref{eq:tilted-entropy-change} yields
\begin{equation}\label{eq:tilted-entropy-change-average}
\sum_{c\in [q]}\E[\mu]{\bar{\psi}_c(I_c)\Phi(\E[\mu]{f\mid I_c})}\leq 2\Ent[\mu]{f}.
\end{equation}

Note that \eqref{eq:tilted-entropy-change-average} isolates the entropy loss bounded by the path graph comparison. We can decompose the total entropy loss as follows:
\begin{equation}\label{eq:entropy-loss-separation}
\begin{aligned}
L(f)=&\sum_{c\in [q]}\Ent[\mu]{\E[\mu]{f\mid I_c}}\\
=&\sum_{c\in [q]}\E[\mu]{\Phi(\E[\mu]{f\mid I_c})}\\
=&\underbrace{\sum_{c\in [q]}\E[\mu]{\bar{\psi}_c(I_c)\Phi(\E[\mu]{f\mid I_c})}}_{\text{path comparison term}} + \underbrace{\sum_{c\in [q]}\E[\mu]{(1-\bar{\psi}_c(I_c))\Phi(\E[\mu]{f\mid I_c})}}_{\text{residual term}}.
\end{aligned}
\end{equation}

It remains to bound the residual term by analyzing $\bar{\psi}_c(I_c)$. Let 
    \[
    \alpha_c(I_c)\defeq 2|I_c|-\one{c=\xi\text{ or }c=\zeta} 
    \]
    be the number of unpinned edges incident to $I_c$. We make the following claim, which will also be verified after the end of the proof.

\begin{claim}\label{claim:estimate}
It holds that
\[
1-\bar{\psi}_c(I_c)\leq \tp{1-\frac{\alpha_c(I_c)}{n-1}}\log 2.
\]
\end{claim}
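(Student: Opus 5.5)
The plan is to bound each summand of $\bar\psi_c(I_c)=\frac{1}{n-1}\sum_{e\in E'}\psi(w_{e,c}(I_c))$ separately and then average. I will split the $n-1$ unpinned edges into the $\alpha_c(I_c)$ edges incident to $I_c$ and the remaining $n-1-\alpha_c(I_c)$ edges. The goal is to show $1-\psi(w_{e,c})=0$ for the first group and $1-\psi(w_{e,c})\le\log 2$ for the second; averaging then gives the claim. Here $I_c$ ranges over the support of $\mu$. Since $\Omega\subseteq\Omega_e$, the conditioning $\mu_e(\cdot\mid I_c)$ defining $w_{e,c}$ is well defined.

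\textbf{Analytic facts about $\psi$.} First I will check that $1-\psi(t)=-\frac{(1-t)\log(1-t)}{t}$ on $(0,1)$, with $1-\psi(1)=0$ and $1-\psi(0)=1$. This function is nonincreasing on $[0,1]$, which follows from a routine derivative computation or from concavity of $t\mapsto -(1-t)\log(1-t)$ together with its vanishing at $t=0$. Its value at $t=1/2$ is exactly $\log 2$. Hence it suffices to prove two things: $w_{e,c}(I_c)=1$ for edges incident to $I_c$, and $w_{e,c}(I_c)\ge 1/2$ for all other unpinned edges.

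\textbf{Edges incident to $I_c$.} If $e=\{u,v\}$ has $u\in I_c$, then $v\notin I_c$ because $I_c$ is an independent set of the cycle. Conditioned on $I_c$ under $\mu_e$, we have $\sigma_u=c\ne\sigma_v$, so $A_e$ holds surely and $w_{e,c}=1$. Since $I_c$ is independent in the cycle, the number of such unpinned edges is $2|I_c|$ minus one exactly when $c\in\{\xi,\zeta\}$ (the pinned edge $e_\star$ is then incident to $I_c$). This count is $\alpha_c(I_c)$.

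\textbf{Edges not incident to $I_c$.} This is the main step: I need $\mu_e(\sigma_u=\sigma_v\mid I_c)\le 1/2$. Given $I_c$, the law on $V\setminus I_c$ is uniform proper $(q-1)$-colorings of the path $G_e-I_c$ with the pins at $u_\star,v_\star$; this graph is a disjoint union of path segments. Note that $u,v$ are the two endpoints of the path $G_e$.

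\emph{Case $I_c\ne\emptyset$.} Then $I_c$ lies on the $u$--$v$ path in $G_e$ and separates $u$ from $v$. The pinned vertices are adjacent, so they lie in a single segment. Hence at least one of $u,v$, say $u$, lies in an unpinned segment. By color symmetry, $\sigma_u$ is then uniform over $q-1\ge2$ colors and independent of $\sigma_v$, giving $\Pr{\sigma_u=\sigma_v}\le 1/(q-1)\le1/2$.

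\emph{Case $I_c=\emptyset$.} Then $c\notin\{\xi,\zeta\}$, and the two sides of the pinned edge are independent paths hanging off $u_\star$ and $v_\star$, colored with $q'=q-1$ colors. If $q'=2$, everything is determined by the pins. Since $I_c=\emptyset$ is in the support of $\mu$, this forced coloring has $\sigma_u\ne\sigma_v$, so $w=1$. If $q'\ge3$, I will show by induction along the path that the endpoint of a side with $k\ge1$ free vertices satisfies $\max_x\Pr{\sigma=x}\le 1/2$. The first free vertex is uniform on $q'-1\ge2$ colors, and each later step $p'_x=(1-p_x)/(q'-1)$ preserves the bound. Since $n\ge3$, at least one side has $k\ge1$. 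Therefore $\Pr{\sigma_u=\sigma_v}=\sum_x p_xr_x\le\max_x p_x\le 1/2$.

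Averaging the bounds over $E'$ then yields $1-\bar\psi_c(I_c)\le\bigl(1-\frac{\alpha_c(I_c)}{n-1}\bigr)\log 2$. I expect the main obstacle to be the case analysis in the third step: making sure the only segment that can carry pinned vertices is handled correctly, and handling the degenerate $q=3$, $I_c=\emptyset$ situation through the support of $\mu$.
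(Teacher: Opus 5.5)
Your proof is correct and follows the same skeleton as the paper's: $w_{e,c}(I_c)=1$ on the $\alpha_c(I_c)$ unpinned edges incident to $I_c$, and $w_{e,c}(I_c)\ge 1/2$ on the remaining unpinned edges. Monotonicity of $\psi$ with $1-\psi(1/2)=\log 2$, followed by averaging over $E'$, then gives the claim. Your treatment of the case $I_c\neq\emptyset$ is the paper's argument: $I_c$ separates the endpoints $u,v$ of $G_e$, at most one segment carries pins, and an unpinned endpoint is uniform on $q-1$ colors.

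The one genuine difference is the case $I_c=\emptyset$.
\begin{itemize}
\item The paper discards the pins by color symmetry and computes the probability exactly: $w_{e,c}(\emptyset)=h_{q,n}=\frac{q-2}{q-1}\bigl(1-(-\tfrac{1}{q-2})^{n-1}\bigr)$. It then checks the parity of $n$ separately for $q=3$ and $q\ge 4$.
\item You keep the pins and use that the two sides of $e_\star$ are independent given the pins. You then bound every atom of the endpoint marginal on a side containing a free vertex by $1/2$, via the one-step recursion $p'_x=(1-p_x)/(q'-1)$. For $q=3$ you use the support argument in place of the parity computation.
\end{itemize}
The paper's computation gives the exact value of $w_{e,c}(\emptyset)$, and shows for instance that $1/2$ is attained when $q=4$, $n=3$. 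Your argument avoids any explicit formula and the pin-removal symmetry step. It would also go through unchanged for any pinned structure in which one endpoint is separated from the pinned vertex by at least one free vertex.
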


Note that Jensen's inequality guarantees $\Phi(\E[\mu]{f\mid I_c})\leq \E[\mu]{\Phi(f)\mid I_c}$. Note that we also have $\sum_{c\in [q]}\alpha_c(I_c)=2(n-1)$, we have by \Cref{claim:estimate} that
\begin{equation}\label{eq:main-term-upper-bound}
    \sum_{c\in [q]}\E[\mu]{\tp{1-\frac{\alpha_c(I_c)}{n-1}}\Phi(\E[\mu]{f\mid I_c}) }\leq \E[\mu]{\Phi(f)\sum_{c\in [q]}\tp{1-\frac{\alpha_c(I_c)}{n-1}} }= (q-2)\Ent[\mu]{f}.
\end{equation}

Finally, substituting \eqref{eq:tilted-entropy-change-average} and \eqref{eq:main-term-upper-bound} into the decomposition \eqref{eq:entropy-loss-separation} gives the upper bound on the entropy loss:
\[
\sum_{c\in [q]}\Ent[\mu]{\E[\mu]{{f\mid I_c}}}\leq \tp{2+(q-2)\log 2 }\Ent[\mu]{f},
\]
and applying the law of total entropy gives the lower bound on the entropy conservation:
\[
\sum_{c\in [q]}\E[\mu]{\Ent[\mu]{{f\mid I_c}}}\geq (q-2)\tp{1-\log 2}\Ent[\mu]{f}.
\]
This immediately concludes the proof of \Cref{lem:ACSE-cycle}.

To prove \Cref{lem:refined-ACSE-cycle}, we claim a sharper estimate that for every $c\in [q]$,
\begin{equation}\label{eq:cycle-sharper-estimate}
    (1 - \bar\psi_c(I_c)) \Phi(\E{f \mid I_c})+(\log 2)\cdot\one{I_c=\emptyset}\cdot\Ent{f\mid I_c}\leq (\log 2)\tp{1-\frac{\alpha_c(I_c)}{n-1}}\E{\Phi(f)\mid I_c}.
\end{equation}
We then verify \eqref{eq:cycle-sharper-estimate}.  For the case when $I_c=\emptyset$, we have
\begin{align*}
&(1 - \bar\psi_c(I_c)) \Phi(\E{f \mid I_c})+(\log 2)\Ent{f\mid I_c}\\
(\text{by \Cref{claim:estimate}})\quad\leq &(\log 2)\tp{\Phi(\E{f \mid I_c})+\E{\Ent{f\mid I_c}}}\\
(\text{law of total entropy})\quad =& (\log 2)\E{\Phi(f)\mid I_c}.
\end{align*}
For the case when $I_c\neq \emptyset$, we have
\begin{align*}
&(1 - \bar\psi_c(I_c)) \Phi(\E{f \mid I_c})\\
(\text{by \Cref{claim:estimate}})\quad\leq &(\log 2)\tp{1-\frac{\alpha_c(I_c)}{n-1}}\Phi(\E{f \mid I_c})\\
(\text{law of total entropy})\quad \leq& (\log 2)\tp{1-\frac{\alpha_c(I_c)}{n-1}}\E{\Phi(f)\mid I_c},
\end{align*}
therefore \eqref{eq:cycle-sharper-estimate} holds in both cases.

Combining \eqref{eq:tilted-entropy-change-average}, \eqref{eq:entropy-loss-separation} and \eqref{eq:cycle-sharper-estimate}, we have that
\begin{align*}
L(f)+(\log 2)H_\emptyset(f)\leq &2\Ent{f}+\log 2\sum\limits_{c\in [q]}\E{\tp{1-\frac{\alpha_c(I_c)}{n-1}}\E{\Phi(f)\mid I_c}}\\
=&2\Ent{f}+\log 2\E{\Phi(f)\sum\limits_{c\in [q]}\tp{1-\frac{\alpha_c(I_c)}{n-1}}}\\
=&\tp{2+(q-2)\log 2}\Ent{f}.
\end{align*}
Combining with \eqref{eq:entropy-loss}, we obtain
\begin{align*}
H_{+}(f)+(1-\log 2)H_{\emptyset}(f)=&H_{+}(f)+H_{\emptyset}(f)-(\log 2)H_{\emptyset}(f)\\
=&q\Ent[\mu]{f}-L(f)-(\log 2)H_{\emptyset}(f)\\
\geq &\tp{q-2-(q-2)\log 2}\Ent[\mu]{f}\\
= & (1-\log 2)(q-2)\Ent[\mu]{f}.
\end{align*}
This gives \eqref{eq:refined-entropy-conservation-cycle-constant}, hence completing the proof of \Cref{lem:refined-ACSE-cycle}.
\end{proof}

\begin{proof}[Proof of \Cref{claim:functional}]
 The boundary cases $t=0$ and $t=1$ are trivial. For $t \in (0,1)$, define $g(x)\defeq\Phi(1+t(x-1))- \Phi(1-t)\Phi(x)$. Differentiating with respect to $x$ yields
\[
g'(x)=t\log(1-t+tx)-\Phi(1-t)\log x,
\]
and a second derivative of
\begin{equation}\label{eq:second-derivative}
g''(x)=\frac{t^2}{1-t+tx}-\frac{\Phi(1-t)}{x}=\frac{t(t-\Phi(1-t))x-\Phi(1-t)(1-t)}{x(1-t+tx)}.
\end{equation}
The denominator of \eqref{eq:second-derivative} is strictly positive for $x > 0$. The numerator is an affine linear function of $x$ with a negative intercept ($-\Phi(1-t)(1-t)<0$). Note that we have:
\[
t^2-\Phi(1-t)=(1-t)\tp{(1-t)-1-\log{(1-t)}}>0.
\]
Because $\Phi(1-t)<t^2< t$, the slope is strictly positive. Consequently, $g''(x)$ crosses zero at exactly one point $x_0 > 0$. Evaluating the numerator at $x=1$ gives $t^2-\Phi(1-t)>0$, guaranteeing this root of $g''(x)$ lies in $x_0\in (0,1)$. Therefore $g'(x)$ decreases in $(0,x_0)$ and increases in $(x_0,+\infty)$. Note that $\lim\limits_{x\to 0^{+}}g'(x)=+\infty$ and $g'(1)=0$, therefore we have $g'(x_0)<0$ and that $g'(x)$ has only one zero before $x_0$, becomes negative before reaching $1$, and turns positive again for $x>1$. Because $g(0)=g(1)=0$, the function $g(x)$ must be globally non-negative on $x \ge 0$. This completes the proof of \Cref{claim:functional}.
\end{proof}

\begin{proof}[Proof of \Cref{claim:estimate}]
    
We split into two cases, the empty color case ($I_c=\emptyset$) and the nonempty color case ($I_c\neq \emptyset$).

If $I_c\neq \emptyset$, then $I_c$ is an independent set of $G$, and exactly $2\abs{I_c}$ edges (possibly including the pinned edge $e_{\star}$) are incident to $I_c$. If the removed edge $e\in E'$ is incident to $I_c$, then $A_e$ is uniquely determined by $I_c$, giving $w_{e,c}(I_c)=1$. Conversely, suppose $e$ is not incident to $I_c$. The endpoints of $e$ are the two endpoints of the path $G_e$. Removing the nonempty set $I_c$ separates them into different components. Specifically,
\begin{itemize}
    \item if $c\in [q]\setminus \{\xi,\zeta\}$, both pinned vertices survive and remain in the same component because the edge $e_{\star}$ is present;
    \item if $c=\xi$ or $c=\zeta$, one pinned vertex is removed and only the other may survive.
\end{itemize}
Thus at least one endpoint of $e$ belongs to a component with no pinning. Conditional on $I_c$,  the components are independent, and an unpinned component is a uniformly colored path with color set $[q]\setminus \{c\}$. Its endpoint is therefore uniform on those $q-1$ colors. Consequently, we have $w_{e,c}(I_c)=\frac{q-2}{q-1}\geq \frac{1}{2}$.
    
    Differentiating gives $\psi'(t)=\frac{-\log(1-t)-t}{t^2}>0$, therefore $\psi(t)$ is strictly increasing on $[0,1)$, hence we have $\psi(w_{e,c})\geq \psi\tp{\frac{1}{2}}=1-\log 2$. Consequently we have for each $c\in [q]$:
\begin{equation}\label{eq:psi-calculation-nonempty}
\bar{\psi}_c(I_c)\geq \frac{\alpha_c(I_c)}{n-1}\psi(1)+\tp{1-\frac{\alpha_c(I_c)}{n-1}}\psi\tp{\frac{1}{2}}=1- \tp{1-\frac{\alpha_c(I_c)}{n-1}}\log 2 , \quad I_c\neq \emptyset.
\end{equation}

Now consider the case $I_c=\emptyset$. This event has positive probability only if $c\in [q]\setminus \{\xi,\zeta\}$. Under $\mu_e(\cdot\mid I_c=\emptyset)$, the deleted-edge path is distributed as a uniformly random proper coloring of a path with color set $[q]\setminus \{c\}$ subject to the pinning on $e_{\star}$. Note that to calculate $w_{e,c}$, by color symmetry we can ignore the pinning on $e_{\star}$. In this case, each $w_{e,c}$ equals the probability that the two endpoints of a length-$(n-1)$ proper $(q-1)$-coloring path are distinct, which is:
\begin{equation}\label{eq:psi-calculation-empty}
\bar{\psi}_c(\emptyset)=\psi(h_{q,n}),\quad \text{where }h_{q,n}\defeq\tp{\frac{q-2}{q-1}\tp{1-\tp{-\frac{1}{q-2}}^{n-1}}}.
\end{equation}

Note that when $q=3$, we have $h_{3,n}=\one{n \text{ even}}$, therefore positive probability of $I_c=\emptyset$ requires $n$ even, and then $h_{3,n}=1$. When $q\geq 4$, we have when $n$ is even:
\[
h_{q,n}=\frac{q-2}{q-1}\tp{1+(q-2)^{-(n-1)}}\geq \frac{q-2}{q-1}>\frac{1}{2}.
\]
And when $n$ is odd:
\[
h_{q,n}=\frac{q-2}{q-1}\tp{1-(q-2)^{-(n-1)}}\geq \frac{q-2}{q-1}\tp{1-(q-2)^{-2}}=\frac{q-3}{q-2}\geq\frac{1}{2}.
\]

 Therefore by comparing the lower bounds in \eqref{eq:psi-calculation-nonempty} and \eqref{eq:psi-calculation-empty}, we obtain the proof of \Cref{claim:estimate}.
\end{proof}

We mentioned that the pinned-cycle lemma (\Cref{lem:ACSE-cycle}) has a worse ACSE constant compared to chordal graphs and that directly plugging \Cref{lem:ACSE-cycle} into \Cref{thm:subadditivity-implies-tensorization} would lead to a super-polynomial dependency with respect to $q$ in the ACTE constant and the mixing time of the WSK dynamics. However, note that after revealing one nonempty color class, the pinned cycle breaks down into (pinned) chains, this allows us to prove the following lemma that directly bounds the ACTE constant through the strengthened \Cref{lem:refined-ACSE-cycle}. Note that this can be viewed as a certain kind of average-case local-to-global argument in spirit~\cite{anari2022optimal}.

\begin{lemma}[ACTE on a pinned cycle]\label{lem:q-minus-two-ACSE-cycle}
 Let $G=(V,E)$ be a cycle graph with $|V|\geq 3$ and let $q\geq 3$. Fix an edge $e_{\star}=\{u_{\star},v_{\star}\}\in E$ and distinct colors $\xi,\zeta\in [q]$. Let $\mu$ be the uniform measure on proper colorings $\sigma$ of $G$ satisfying $\sigma_{u_{\star}}=\xi$ and $\sigma_{v_{\star}}=\zeta$. Then $\mu$ satisfies approximate colorwise tensorization of entropy with constant $1/(1-\log 2)$. 
\end{lemma}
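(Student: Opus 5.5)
The plan is to run the colorwise localization process $(Y_t)_{t=0}^q$ of \Cref{def:localization-color} for $\mu$ and control entropy in two stages: the first step $Y_0\to Y_1$ with \Cref{lem:refined-ACSE-cycle}, and the remaining steps $Y_1\to Y_{q-2}$ with the chordal bounds (or with induction on $q$). By \eqref{eq:localization-transform}, the claim is equivalent to
\[
\E{\Ent{f\mid Y_{q-2}}}\ \geq\ \frac{2(1-\log 2)}{q(q-1)}\,\Ent[\mu]{f}.
\]
We argue by induction on $q$.

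\textbf{Base case $q=3$.} Here $Y_{q-2}=Y_1$, and $\sum_{\{a,b\}}\E{\Ent{f\mid I_{[3]\setminus\{a,b\}}}}=\sum_c \E{\Ent{f\mid I_c}}$. So ACTE with constant $1/(1-\log 2)$ is exactly \Cref{lem:ACSE-cycle} with $q-2=1$.

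\textbf{Inductive step $q\ge 4$.} Condition on $Y_1$, that is, on the first revealed color $c=p_1$ and on $I_c$. The residual law on $V\setminus I_c$ is uniform over proper $(q-1)$-colorings (colors $[q]\setminus\{c\}$) with the inherited pinning. We split into two cases.

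\emph{Case $I_c\neq\emptyset$.} Removing the independent set $I_c$ breaks the cycle into paths. If $c\notin\{\xi,\zeta\}$, then $u_\star,v_\star$ survive together in one path, and the pinned set $\{u_\star,v_\star\}$ is a clique. If $c\in\{\xi,\zeta\}$, only one pinned vertex survives, which is a pinned single-vertex clique. In either case the residual graph is a chordal forest with a pinned clique, with clique number $\le 2\le q-1$. Since induced subgraphs of such a pinned chordal graph are again of the same type, \Cref{lem:ACSE-pinned-chordal} gives ACSE with constant $1/(i-2)$ for every $i$-color residual along the process. Feeding this into the mechanism of \Cref{thm:subadditivity-implies-tensorization}, which is exactly the step-by-step bound \eqref{eq:approximate-conservation-of-variance}, yields ACTE with constant $1$ for $q-1$ colors:
\[
\E{\Ent{f\mid Y_{q-2}}\mid Y_1}\ \geq\ \frac{2}{(q-1)(q-2)}\,\Ent{f\mid Y_1}.
\]

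\emph{Case $I_c=\emptyset$.} The residual is the same pinned cycle with $q-1\ge 3$ colors. The induction hypothesis gives the same inequality with an extra factor $(1-\log 2)$.

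\textbf{Combining the two stages.} Average over $Y_1$. Since $p_1$ is uniform on $[q]$,
\[
\E{\Ent{f\mid Y_{q-2}}}\ \geq\ \frac{2}{(q-1)(q-2)}\cdot\frac1q\Big(H_+(f)+(1-\log 2)H_\emptyset(f)\Big).
\]
By \Cref{lem:refined-ACSE-cycle}, the bracket is at least $(q-2)(1-\log 2)\Ent[\mu]{f}$. This gives exactly $\frac{2(1-\log 2)}{q(q-1)}\Ent[\mu]{f}$, as required. This bookkeeping is the reason the refined form is needed: the cost $(1-\log 2)$ is paid only once, on the empty-class branch.

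\textbf{Main obstacle.} The delicate step is justifying that after revealing a nonempty class, the conditional measure falls within the hypotheses of \Cref{lem:ACSE-pinned-chordal} at every later stage of the localization. This covers a disjoint union of paths in which the pinned vertices form a single clique lying in one component, together with the degenerate cases. Once this is in place, the constant-$1$ ACTE for the residual $(q-1)$-coloring follows by the same telescoping product as in the proof of \Cref{thm:subadditivity-implies-tensorization}:
\[
\prod_{i=3}^{q-1}\frac{i-2}{i}=\frac{2}{(q-1)(q-2)}.
\]
I would first check the induction of \Cref{lem:ACSE-chordal} directly for forests whose pin lies in one tree: a simplicial uncolored vertex exists by \Cref{fact:chordal-facts}, and removing it preserves uniformity.
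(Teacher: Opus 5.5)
Your proposal is correct and follows the paper's proof: induction on $q$, with the base case $q=3$ taken from \Cref{lem:ACSE-cycle}, the induction hypothesis used on the $I_c=\emptyset$ branch, constant-$1$ pinned-chordal ACTE used on the $I_c\neq\emptyset$ branch, and \Cref{lem:refined-ACSE-cycle} combining the two branches. The only difference is presentation: you average over $Y_1$ in the colorwise localization process, whereas the paper uses the equivalent counting identity $K_\emptyset(f)+K_+(f)=(q-2)\sum_{S\in\binom{[q]}{q-2}}\E[\mu]{\Ent[\mu]{f\mid I_S}}$.
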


\begin{proof}
From \Cref{def:color-entropy-tensorization}, we need to prove for any $f: \Omega \to \mathbb{R}^{\ge 0}$ that
\begin{equation}\label{eq:entropy-conservation-cycle-all}
    \Ent[\mu]{f} \le \frac{1}{1-\log 2} \sum\limits_{S \in \binom{[q]}{q-2}} \E[\mu]{\Ent[\mu]{f \mid I_{S}}}.
\end{equation}

We now proceed to prove \eqref{eq:entropy-conservation-cycle-all} by induction on $q$. The base case $q=3$ directly follows from \Cref{lem:ACSE-cycle}.

For the inductive step $q \ge 4$. We recall the notations from \Cref{lem:refined-ACSE-cycle} that
\begin{align*}
    H_{\emptyset}(f)\defeq \sum\limits_{c\in [q]}\E[\mu]{\one{I_c=\emptyset}\Ent[\mu]{{f\mid I_c}}},\quad H_{+}(f)\defeq \sum\limits_{c\in [q]}\E[\mu]{\one{I_c\neq \emptyset}\Ent[\mu]{{f\mid I_c}}}.
\end{align*}
We additionally define the following quantities:
\[
    K_\emptyset(f) \defeq \sum\limits_{c\in [q]}\E[\mu]{\one{I_c=\emptyset}\sum\limits_{R\in\binom{[q]\setminus \{c\}}{q-3}}\Ent[\mu]{f\mid I_{R\cup \{c\}}}},
\]
\[
   K_+(f) \defeq \sum\limits_{c\in [q]}\E[\mu]{\one{I_c\neq\emptyset}\sum\limits_{R\in\binom{[q]\setminus \{c\}}{q-3}}\Ent[\mu]{f\mid I_{R\cup \{c\}}}}.
\]
Note that each fixed $S\in \binom{[q]}{q-2}$ appears exactly $q-2$ times as $\{c\}\cup R$, once for each $c\in S$, therefore we have
\begin{equation}\label{eq:counting-identity}
    K_\emptyset(f)+K_+(f)=(q-2)\sum\limits_{S\in \binom{[q]}{q-2}}\E[\mu]{\Ent[\mu]{f\mid I_S}}.
\end{equation}

We now lower bound $K_\emptyset(f)$ and $K_+(f)$.

First, consider the case $I_c=\emptyset$. On such event, we have $c$ cannot be $\xi$ or $\zeta$. The conditional measure is exactly the pinned-cycle coloring measure with color set $[q]\setminus \{c\}$, hence with $q-1$ colors. Applying the induction hypothesis gives
\[
\sum\limits_{R\in \binom{[q]\setminus \{c\}}{q-3}}\E[\mu]{\Ent[\mu]{f\mid I_c,I_R}\mid I_c}\geq (1-\log 2)\Ent[\mu]{f\mid I_c}
\]
on the event $I_c=\emptyset$. Summing over all available $c$ gives
\begin{equation}\label{eq:empty-branch}
    K_\emptyset(f) \ge (1-\log 2)H_\emptyset(f).
\end{equation}

Then, consider the case $I_c\neq \emptyset$. After conditioning on such a color class and deleting those vertices, the cycle breaks into a disjoint union of paths. If both pinned vertices remain, they are still joined through the pinned edge $e^{\star}$; if one pinned color equals $c$, then only one pinned vertex remains. Hence the surviving pinned vertices form a clique of size at most two. 

Therefore the conditional measure on the remaining vertices is a pinned chordal coloring measure with $q-1$ available colors. Iterating the pinned-chordal ACSE lemma (\Cref{lem:ACSE-pinned-chordal}), or equivalently using its ACTE consequence with constant $1$ gives
\[
\sum\limits_{R\in \binom{[q]\setminus \{c\}}{q-3}}\E[\mu]{\Ent[\mu]{f\mid I_c,I_R}\mid I_c}\geq \Ent[\mu]{f\mid I_c}
\]
on the event $I_c\neq \emptyset$. Summing over all $c$ gives
\begin{equation}\label{eq:nonempty-branch}
    K_+(f) \ge H_+(f).
\end{equation}

Combining \eqref{eq:empty-branch}, \eqref{eq:nonempty-branch} and \Cref{lem:refined-ACSE-cycle} we have:
\[
      K_{\emptyset}(f)+K_{+}(f)\geq (1-\log 2)(q-2)\Ent[\mu]{f}.
\]

Further combining with \eqref{eq:counting-identity} and normalizing gives \eqref{eq:entropy-conservation-cycle-all}, concluding the proof of the lemma.
\end{proof}

Using the ACTE on pinned cycles established in \Cref{lem:q-minus-two-ACSE-cycle}, we are able to apply an inductive method to establish ACTE on outerplanar graphs, therefore proving \Cref{lem:ACTE-outerplanar}.

\begin{proof}[Proof of \Cref{lem:ACTE-outerplanar}]

      We construct a sequence 
       \[
    \emptyset = G_0 \subset G_1 \subset \dots \subset G_m = G
    \]
    according to \Cref{cor:structural-decomposition}. We claim that ACTE with constant $1/(1-\log 2)$ holds for all $\mu_i={\mu_{G_i,q}}$ satisfying $1\leq i\leq m$.
     
    \textbf{Base Case ($i=1$):} By \Cref{cor:structural-decomposition} and that $G_0=\emptyset$, $G_1$ can only be a singleton, which is chordal. Iterating chordal ACSE in \Cref{lem:ACSE-chordal} through $q-2$ color reveals yields
$$\frac{1}{\binom{q}{2}}\sum\limits_{\{a,b\}\in \binom{[q]}{2}}\E{\Ent{f\mid I_{[q]\setminus \{a,b\}}}}\geq \tp{\prod\limits_{r=3}^{q}\frac{r-2}{r}}\Ent{f}=\frac{2}{q(q-1)}\Ent{f}.$$
Multiplying both sides by $\binom{q}{2} = \frac{q(q-1)}{2}$, we obtain $\sum_{S \in \binom{[q]}{q-2}} \E{\Ent{f \mid I_S}} \geq \Ent{f}$, therefore the lemma holds for the base case.

\textbf{Inductive Step ($i > 1$):} Let $G_i=G_{i-1}\cup J_i$ and $H_i=G_{i-1}\cap J_i$. By \Cref{cor:structural-decomposition}, one of the following two cases must hold:

\textbf{Case I} (\Cref{cor:structural-decomposition}-(\ref{item:outplanar-1}),(\ref{item:outplanar-2})):  $J_i$ is a singleton vertex and $H_i=\emptyset$, or $J_i$ is a single edge and $H_i$ consists of a single vertex. In this case, $V(G_i)\setminus V(G_{i-1})=\{v\}$ for some simplicial vertex $v\in V(G_i)$. 

Every proper coloring of $G_{i-1}$ can be extended to a proper coloring of $G_i$ in exactly the same number of ways. Therefore, the distribution $\mu_{G_{i-1},q}$ is exactly the marginal distribution of $\mu$ on $V(G_i)\setminus \{v\}$. 

For each $\sigma \in \Omega_{G_i,q}$ and each subset of colors $S \in \binom{[q]}{q-2}$, the information for revealing all vertices colored $S$ is $I_S(\sigma) = \tp{u \in V(G_i) \mid \sigma_u =c }_{c\in S}$. We define the partial information: 
$$I_S^{0}(\sigma) \defeq \tp{\{u \in V(G_i) \setminus \{v\} \mid \sigma_u =c\} }_{c\in S}, \quad B_S(\sigma) \defeq \tp{ \one{\sigma_v =c} }_{c\in S}.$$
The pair $(I_S^0, B_S)$ encodes the exact same information as $I_S$. Applying the law of total entropy, we have for each $S\in \binom{[q]}{q-2}$:
\begin{equation}\label{eq:chordal-induction-law-of-total-entropy-2}
    \E{\Ent{f\mid I_S}} = \E{\Ent{\E{f\mid \sigma_{V\setminus \{v\}},B_S}\mid I_S}} + \E{\Ent{f\mid \sigma_{V\setminus \{v\}},B_S}},
\end{equation}
where in the final term we dropped the conditioning on $I_S$ because the partial configuration $\sigma_{V\setminus \{v\}}$ together with $B_S$ strictly determines $I_S$.

We bound the sum over all $S \in \binom{[q]}{q-2}$ for the two terms on the RHS of \eqref{eq:chordal-induction-law-of-total-entropy-2} separately. Let $\bar{f}\defeq \E{f\mid \sigma_{V\setminus \{v\}}}$ be the function averaged over $v$. For the first term, note that:
$$\E{\Ent{\E{f\mid \sigma_{V\setminus \{v\}},B_S}\mid I_S}} = \E{\Ent{\E{f\mid \sigma_{V\setminus \{v\}},B_S}\mid I_{S}^0,B_S}} \geq \E{\Ent{\bar{f}\mid I_S^0}},$$
where the inequality uses the convexity of entropy and the fact that $B_S$ is independent of $\sigma_{V\setminus \{v\}}$ conditioned on $I^0_S$. Summing over all subsets $S$ and applying the induction hypothesis to $G_{i-1}$, we have:
\begin{equation}\label{eq:chordal-induction-first-part-2}
    \sum\limits_{S\in \binom{[q]}{q-2}} \E{\Ent{\E{f\mid \sigma_{V\setminus \{v\}},B_S}\mid I_S}} \geq \E{\sum\limits_{S\in \binom{[q]}{q-2}}\Ent{\bar{f}\mid I_S^0}} \geq (1 - \log 2)\Ent{\bar{f}}.
\end{equation}

We now focus on the second term. Because $v$ is simplicial, conditioning on $\sigma_{V\setminus \{v\}}$ leaves $\sigma_v$ uniformly distributed over a set of available colors $T$, where $|T| \geq 1$.   we are guaranteed $1\leq |T|\leq q$.  For a set $S\in \binom{[q]}{q-2}$ of $q-2$ colors, there are additionally two cases:
\begin{itemize}
    \item $\abs{T\setminus S}\leq 1$, in this case $f$ is completely determined by $\sigma_{V\setminus \{v\}}$ and $B_S$, hence in this case we have
    \[
    \E{\Ent{f\mid \sigma_{V\setminus \{v\}}, B_S}\mid \sigma_{V\setminus \{v\}}}=0.
    \]
    \item $\abs{T\setminus S}=2$. In this case we let $|T|=s$. We first assume $s>2$. Note that the ACSE bound for a singleton (\Cref{lem:bounded-entropy-decay-singleton}) gives an ACSE constant smaller than $\frac{1}{s-2}$, therefore a telescopic product over $s-2$ colors yields an ACTE constant of at most $1$. This gives 
    \begin{equation}\label{eq:outerplanar-induction-case-2}
        \sum\limits_{R\in \binom{T}{s-2}}\E{\Ent{f\mid \sigma_{V\setminus \{v\}}, B_R}\mid \sigma_{V\setminus \{v\}}} \geq \Ent{f\mid \sigma_{V\setminus \{v\}}}.
    \end{equation}
    Also note that \eqref{eq:outerplanar-induction-case-2} follows trivially when $s=2$.
\end{itemize}
   Combining the two cases above, we obtain that
\begin{equation}\label{eq:chordal-induction-averaging-2}
    \sum\limits_{S\in \binom{[q]}{q-2}}\E{\Ent{f\mid \sigma_{V\setminus \{v\}}, B_S}\mid \sigma_{V\setminus \{v\}}} \geq \Ent{f\mid \sigma_{V\setminus \{v\}}}.
\end{equation}
Averaging over all configurations $\sigma_{V\setminus \{v\}}$ and noting that $1 \geq 1 - \log 2$, we obtain:
\begin{equation}\label{eq:chordal-induction-second-part-2}
    \sum\limits_{S\in \binom{[q]}{q-2}}\E{\Ent{f\mid \sigma_{V\setminus \{v\}}, B_S}} \geq (1 - \log 2)\E{\Ent{f\mid \sigma_{V\setminus \{v\}}}}.
\end{equation}

Combining \eqref{eq:chordal-induction-law-of-total-entropy-2}, \eqref{eq:chordal-induction-first-part-2}, and \eqref{eq:chordal-induction-second-part-2} yields:
$$\sum\limits_{S\in \binom{[q]}{q-2}}\E{\Ent{f\mid I_S}} \geq (1 - \log 2)\Ent{\bar{f}} + (1 - \log 2)\E{\Ent{f\mid \sigma_{V\setminus \{v\}}}} = (1 - \log 2)\Ent{f},$$
where the last equality follows directly from the law of total entropy. This concludes Case I.

\textbf{Case II} (\Cref{cor:structural-decomposition}-(\ref{item:outplanar-3})): $J_i$ is a cycle $(v_1, v_2, \dots, v_k)$ with $k \geq 3$ such that $\mathrm{deg}(v_i) = 2$ for all $1 < i < k$, and $H_i$ consists of the single edge $\{v_1, v_k\}$.
  Let $W=\{v_i\}_{i=2}^{k-1}$. The induced subgraph $G_{i-1} = G_i \setminus W$ is also outerplanar. Every proper coloring of $G_{i-1}$ can be extended to a proper coloring of $G_i$ in exactly the same number of ways. Therefore, the distribution $\mu_{G_{i-1},q}$ is exactly the marginal distribution of $\mu$ on $V(G_i) \setminus W$.

Similarly to Case I, we define the partial information: 
$$I_S^{0}(\sigma) \defeq \tp{\{u \in V(G_i) \setminus W \mid \sigma_u =c \} }_{c\in S}, \quad B_S(\sigma) \defeq \tp{u \in W \mid \sigma_u =c}_{c\in S}.$$
Applying the law of total entropy, we have for each $S\in \binom{[q]}{q-2}$:
\begin{equation}\label{eq:outerplanar-induction-law-of-total-entropy}
     \E{\Ent{f\mid I_S}} = \E{\Ent{\E{f\mid \sigma_{V\setminus W},B_S}\mid I_S}} + \E{\Ent{f\mid \sigma_{V\setminus W},B_S}},
\end{equation}
where the conditioning on $I_S$ is dropped in the final term because $\sigma_{V\setminus W}$ and $B_S$ strictly determine $I_S$.

Let $\bar{f}\defeq \E{f\mid \sigma_{V\setminus W}}$. Summing the first term over all choices of $S$ and applying the induction hypothesis to $G_{i-1}$ yields:
\begin{equation}\label{eq:outerplanar-induction-first-part}
    \sum\limits_{S\in \binom{[q]}{q-2}} \E{\Ent{\E{f\mid \sigma_{V\setminus W},B_S}\mid I_S}} \geq \E{\sum\limits_{S\in \binom{[q]}{q-2}}\Ent{\bar{f}\mid I_S^0}} \geq (1-\log 2)\Ent{\bar{f}}.
\end{equation}

For the second term, because $\{v_1,v_k\}\in E(G_i)$, any valid coloring $\sigma\in \Omega_{G_i,q}$ must assign distinct colors to $v_1$ and $v_k$. Applying the ACTE for pinned cycles (\Cref{lem:q-minus-two-ACSE-cycle}) gives:
\begin{equation}\label{eq:outerplanar-induction-averaging}
    \sum\limits_{S\in \binom{[q]}{q-2}}\E{\Ent{f\mid \sigma_{V\setminus W}, B_S}\mid \sigma_{V\setminus W}} \geq (1-\log 2)\Ent{f\mid \sigma_{V\setminus W}}.
\end{equation}
Averaging over all configurations $\sigma_{V\setminus W}$, we obtain:
\begin{equation}\label{eq:outerplanar-induction-second-part}
    \sum\limits_{S\in \binom{[q]}{q-2}}\E{\Ent{f\mid \sigma_{V\setminus W}, B_S}} \geq (1-\log 2)\E{\Ent{f\mid \sigma_{V\setminus W}}}.
\end{equation}

Finally, combining \eqref{eq:outerplanar-induction-law-of-total-entropy}, \eqref{eq:outerplanar-induction-first-part}, and \eqref{eq:outerplanar-induction-second-part} yields:
$$\sum\limits_{S\in \binom{[q]}{q-2}}\E{\Ent{f\mid I_S}} \geq (1-\log 2)\Ent{\bar{f}} + (1-\log 2)\E{\Ent{f\mid \sigma_{V\setminus W}}} = (1-\log 2)\Ent{f}.$$
This concludes the inductive step and the proof.
\end{proof}

\section{A Lower Bound on the Mixing Time of the WSK Dynamics}\label{sec:lower-bound}

In this section we prove \Cref{thm:WSK-rapid-mixing-lower}, a lower bound on the mixing time of the WSK dynamics. Our proof is based on a simple distinguishing-statistic argument. We need the following standard lemma for establishing lower bounds for Markov chain mixing times. 

\begin{lemma}[{\cite[Proposition 7.8]{levin2017markov}}]\label{lem:distinguishing-statistics}
    Let $\mu$ and $\nu$ be two probability distributions on $\Omega$, and let $f$
be a real-valued function on $\Omega$. If
\[
\abs{\E[\mu]{f}-\E[\nu]{f}}\geq r\sigma
\]
where $\sigma^2=\frac{\Var[\mu]{f}+\Var[\nu]{f}}{2}$, then
\[
\DTV{\mu}{\nu}\geq 1-\frac{4}{4+r^2}.
\]
\end{lemma}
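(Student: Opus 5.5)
This lemma is Proposition 7.8 of Levin--Peres, so it could simply be cited. A self-contained argument is short, and I would give it via the Chebyshev inequality applied under each measure. Write $m_\mu\defeq\E[\mu]{f}$ and $m_\nu\defeq\E[\nu]{f}$, and assume without loss of generality that $m_\mu\le m_\nu$, so that $m_\nu-m_\mu\ge r\sigma$. Let $\sigma_\mu^2\defeq\Var[\mu]{f}$ and $\sigma_\nu^2\defeq\Var[\nu]{f}$. I would split at a threshold $\tau$ between the two means and take the event $A\defeq\{f>\tau\}$, which is expected to be likely under $\nu$ and unlikely under $\mu$. Then
\[
\DTV{\mu}{\nu}\ge \nu(A)-\mu(A).
\]

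To bound the two tail probabilities I would use the one-sided Cantelli inequality $\Pr{X-\E{X}\ge t}\le \Var{X}/(\Var{X}+t^2)$, and not plain Chebyshev; this is what produces the shape $4/(4+r^2)$. Put $a\defeq\tau-m_\mu$ and $b\defeq m_\nu-\tau$, so that $a+b=m_\nu-m_\mu\ge r\sigma$. Then
\[
\mu(A)\le\frac{\sigma_\mu^2}{\sigma_\mu^2+a^2},\qquad \nu(A^c)\le\frac{\sigma_\nu^2}{\sigma_\nu^2+b^2}.
\]
Since $\nu(A)-\mu(A)=1-\mu(A)-\nu(A^c)$, it suffices to choose $\tau$ so that the sum of these two bounds is at most $4/(4+r^2)$.

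The main obstacle is choosing $\tau$ with unequal variances. The natural first choice is the midpoint, $a=b=r\sigma/2$. The sum of the bounds is then $g(\sigma_\mu^2)+g(\sigma_\nu^2)$ with $g(x)=x/(x+r^2\sigma^2/4)$, and $g$ is concave and increasing. Jensen's inequality under the constraint $\sigma_\mu^2+\sigma_\nu^2=2\sigma^2$ gives
\[
g(\sigma_\mu^2)+g(\sigma_\nu^2)\le 2g(\sigma^2)=\frac{8}{4+r^2}.
\]
This is off by a factor of $2$, so the midpoint split only yields $\DTV{\mu}{\nu}\ge 1-8/(4+r^2)$.

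To get the sharp constant I would instead follow Levin--Peres and compare against the mixture $\eta\defeq(\mu+\nu)/2$. The indicator $\mathbbm{1}_A$ has mean difference $\nu(A)-\mu(A)$. The key step is to use Cauchy--Schwarz on $f-(m_\mu+m_\nu)/2$ against the likelihood-ratio difference $(d\nu-d\mu)/d\eta$, whose absolute value is bounded by $2$ and whose $\eta$-mean of the absolute value equals $2\DTV{\mu}{\nu}$. Writing $d\defeq\DTV{\mu}{\nu}$, this gives
\[
(m_\nu-m_\mu)^2\le 4d\Bigl(\sigma^2+\tfrac{(m_\nu-m_\mu)^2}{4}\Bigr),
\]
because $\Var[\eta]{f}=\sigma^2+(m_\nu-m_\mu)^2/4$. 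Substituting $m_\nu-m_\mu\ge r\sigma$ and rearranging yields
\[
d\ge \frac{r^2}{4+r^2}=1-\frac{4}{4+r^2},
\]
which is the claimed bound. Verifying the Cauchy--Schwarz step, including the factor $4$ coming from $|d\nu-d\mu|/d\eta\le 2$, is the part I expect to require the most care.
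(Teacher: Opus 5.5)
Your final argument (Cauchy--Schwarz under the mixture $\eta=(\mu+\nu)/2$, applied to $f-(m_\mu+m_\nu)/2$ and $(d\nu-d\mu)/d\eta$, with $\Var[\eta]{f}=\sigma^2+(m_\nu-m_\mu)^2/4$ and $\E[\eta]{((d\nu-d\mu)/d\eta)^2}\le 4\DTV{\mu}{\nu}$) is correct. It is exactly the proof of Levin--Peres Proposition~7.8, which the paper only cites; working on $\Omega$ rather than on the pushforward to $f(\Omega)$ changes nothing, and the Cantelli detour and the event $A$ can be dropped.

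The one point to make explicit is that the final division by $\sigma^2$ needs $\sigma>0$. If $\sigma=0<|m_\nu-m_\mu|$, your inequality gives $\DTV{\mu}{\nu}=1$ directly. The fully degenerate case $\sigma=0=m_\nu-m_\mu$ must be excluded, since there the hypothesis holds for every $r$ and the conclusion fails, e.g., for $\mu=\nu$.
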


We are now ready to prove the lower bound for the WSK dynamics. The distinguishing statistic we choose is simple: we track the number of pairs of adjacent vertices with a particular color pair, and estimate how it deviates from its mean.

\begin{proof}[Proof of \Cref{thm:WSK-rapid-mixing-lower}]
Let $G=(V,E)$ where $E=\{\{i,i+1\}\mid 1\leq i<n\}$. Let $\Omega=\Omega_{G,q}$ and $\mu=\mu_{G,q}$ and $P$ denote the transition matrix of the WSK dynamics. Set the starting configuration $X_0$ of the WSK dynamics as 
\[
\sigma_0=(1,2,1,2,\dots).
\]
We will show that there exists a constant $C=C(q)$ such that for sufficiently large $n$ and
\[
t=\left\lfloor \frac{1}{2} \log_2 (n-1)-\frac{1}{2}\log_2\log_2(n-1)-C \right\rfloor
\]
it holds that 
\begin{equation}\label{eq:total-variation-distance-lb}
\DTV{P^t(\sigma_0,\cdot)}{\mu}>\frac{1}{4},
\end{equation}
which proves the theorem.

The distinguishing statistic we choose depends on the fraction of edges with a particular pair of colors. Formally, for each unordered pair $\{a,b\}\in \binom{[q]}{2}$, define
\[
Y_{\{a,b\}}(\sigma)\defeq\frac{1}{n-1}\left|\set{1\leq i\leq n-1\mid \{\sigma_i,\sigma_{i+1}\}=\{a,b\} }\right|, \quad  \forall \sigma \in \Omega.
\]
For the stationary distribution $\mu$, by color symmetry we have $\E[\mu]{Y_{\{a,b\}}}=\frac{1}{\binom{q}{2}}$ for each $\{a,b\}\in \binom{[q]}{2}$.

We then analyze the evolution of $Y_{\{a,b\}}$ in the WSK dynamics starting from $X_0=\sigma_0$. Note that initially we have that $Y_{\{1,2\}}(\sigma_0)=1$ and $Y_{\{a,b\}}(\sigma_0)=0$ for all $\{a,b\}\neq \{1,2\}$. 

For each time $t\geq 0$, let $A_t\in \binom{[q]}{2}$ denote the color pair chosen by the WSK dynamics at time $t$. We also let $\+A_t=(A_1,A_2,\dots,A_t)$ be the sequence of color pairs generated up to time $t$. 
A simple observation by the transition of the WSK dynamics is that $2^t \E{Y_{\{a,b\}}(X_t)\mid \+A_t}\in \=Z$ for each $\{a,b\}\in \binom{[q]}{2}$. While $\binom{q}{2}$ is never a power of two for $q\geq 3$, we have that for each $\{a,b\}\in \binom{[q]}{2}$ and each $t\geq 0$,
\begin{equation}\label{eq:statistics-bias}
\abs{ \E{Y_{\{a,b\}}(X_t)\mid \+A_t}-\frac{1}{\binom{q}{2}}}\geq \delta_t, \quad \text{where } \delta_t\defeq \frac{1}{2^t\binom{q}{2}}.
\end{equation}

We establish the following concentration bound for the distinguishing statistic we chose under the evolution of the WSK dynamics, which will be proved at the end of the section.
\begin{lemma}[Concentration bound for the WSK dynamics]\label{lem:concentration-1}
It holds that
\[
\Var{Y_{\{1,2\}}(X_t)\mid \+A_t}\leq \frac{t}{n-1}.
\]
\end{lemma}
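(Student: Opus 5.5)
The plan is to condition on the colour-pair sequence $\+A_t$ throughout, so that the only remaining randomness is the collection of independent fair coins that decide, at each step, which components are swapped. I would then apply the martingale (Doob) variance decomposition step by step. Let $\+F_s$ be the $\sigma$-algebra generated by $\+A_t$ and the coins of the first $s$ steps, so that $X_s$ is $\+F_s$-measurable. Then
\[
\Var{Y_{\{1,2\}}(X_t)\mid \+A_t}=\sum_{s=1}^{t}\E{\Var{M_s\mid \+F_{s-1}}\mid \+A_t},\qquad M_s\defeq \E{Y_{\{1,2\}}(X_t)\mid \+F_s}.
\]
It therefore suffices to show that every one-step conditional variance is at most $1/(n-1)$.

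\textbf{Step 1: edge pairs evolve as independent-looking Markov chains.} For each edge $\{i,i+1\}$, I claim that the ordered pair $(\sigma_i,\sigma_{i+1})$ evolves as a Markov chain on ordered pairs of distinct colours, regardless of the rest of the configuration. Consider a step with pair $\{a,b\}$, and check the three cases.
\begin{itemize}
\item If the edge has colours $(a,b)$ or $(b,a)$, both endpoints lie in the same component. The edge is then reversed with probability $1/2$.
\item If exactly one endpoint has a colour in $\{a,b\}$, that endpoint's colour is swapped between $a$ and $b$ with probability $1/2$.
\item If neither endpoint has a colour in $\{a,b\}$, the edge is unchanged.
\end{itemize}
By linearity of expectation, $M_s=\frac{1}{n-1}\sum_{i}\varphi_s(\sigma_i,\sigma_{i+1})$ with $\sigma=X_s$. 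Here $\varphi_s(x,y)\in[0,1]$ is the probability that this pair chain, run along the remaining fixed colour pairs $A_{s+1},\dots,A_t$, ends in $\{1,2\}$. The pair chain commutes with the reversal $(x,y)\mapsto(y,x)$, and the target event is reversal-invariant, so $\varphi_s(x,y)=\varphi_s(y,x)$.

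\textbf{Step 2: one-step variance bound.} Fix $X_{s-1}$ and let $\{a,b\}=A_s$. Then $M_s$ is a function of the independent coins of the components $C_1,\dots,C_K$ of $G[W_{a,b}(X_{s-1})]$. The edges inside a component keep their unordered pair under a swap, so by the symmetry of $\varphi_s$ they do not affect $M_s$. An edge cannot join two distinct components, since its two endpoints would then both be coloured in $\{a,b\}$ and hence lie in the same component. Therefore every edge that varies is a boundary edge of exactly one component. Writing $b_k\le 2$ for the number of boundary edges of $C_k$ on the path, $M_s$ is a sum of independent terms. The term for $C_k$ takes two values differing by at most $b_k/(n-1)$, so its variance is at most $b_k^2/(4(n-1)^2)$. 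Since $\sum_k b_k\le n-1$ and $b_k\le 2$,
\[
\Var{M_s\mid \+F_{s-1}}\le \sum_k\frac{b_k^2}{4(n-1)^2}\le \frac{2\sum_k b_k}{4(n-1)^2}\le\frac{1}{2(n-1)}.
\]
Summing over the $t$ steps then gives the bound $t/(n-1)$, with room to spare.

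The main point I expect to need care is Step 1. One must check that the conditional law of each edge's future unordered pair depends only on its current ordered pair, which is what makes $M_s$ an additive edge functional. One must also check that this functional is reversal-symmetric, since that symmetry is what removes the potentially large contribution of long components. The rest is bookkeeping.
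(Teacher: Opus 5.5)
Your argument is correct, but it runs the variance decomposition in the opposite direction from the paper. The paper peels off the \emph{last} step. By the law of total variance given $X_{t-1}$, the fresh noise of step $t$ is $\frac14\sum_i\Delta_i^2\le\frac{1}{n-1}$, since only boundary edges of a component change type and there are at most two per component. For the propagated term, the paper observes that $\mathbb{E}[Y_\rho(X_t)\mid X_{t-1},\mathcal{A}_t]$ equals either $Y_\rho(X_{t-1})$ or $\frac12\bigl(Y_{\{a,c\}}+Y_{\{b,c\}}\bigr)(X_{t-1})$, whose variance is at most the larger of the two variances. This closes the induction $\mathcal{M}_t\le\mathcal{M}_{t-1}+\frac{1}{n-1}$ for $\mathcal{M}_t=\max_\rho\mathrm{Var}[Y_\rho(X_t)\mid\mathcal{A}_t]$.

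You instead peel off steps from the front with the Doob martingale, which forces you to identify every intermediate conditional expectation $M_s$. Your lumpability check does this: the one-step law of $(\sigma_i,\sigma_{i+1})$ depends on the full state only through that pair. Together with the reversal symmetry, this gives $M_s=\sum_\rho\varphi_s(\rho)Y_\rho(X_s)$, which is the multi-step iterate of the paper's one-step observation. The edge-pair processes are far from independent across edges of one component, but you only use each edge's marginal law and linearity, which is all that is needed.

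Each route has its advantages. The paper's stays inside the finite family $\{Y_\rho\}$ and only needs the one-step transition of edge types. Its cost is tracking a maximum over all $\binom{q}{2}$ pairs and invoking $\mathrm{Var}[\frac12(U+V)]\le\max\{\mathrm{Var}[U],\mathrm{Var}[V]\}$. Yours has no induction on a maximum and applies verbatim to any additive statistic with bounded symmetric edge weights. It also combines $b_k\le 2$ with $\sum_k b_k\le n-1$, rather than using $m\le n-1$ and $|\Delta_i|\le\frac{2}{n-1}$, so it yields the slightly sharper bound $\frac{t}{2(n-1)}$.
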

Then by \Cref{lem:concentration-1}, Chebyshev's inequality and combining with \eqref{eq:statistics-bias}, we obtain:
\[
\Pr{\abs{Y_{\{1,2\}}(X_t)-\frac{1}{\binom{q}{2}}}\geq \frac{\delta_t}{2}\mid \+A_t}\geq 1-\frac{4t}{(n-1)\delta_t^2}.
\]
Further averaging over $\+A_t$ gives:
\begin{equation}\label{eq:prob-bound-1}
    \Pr{\abs{Y_{\{1,2\}}(X_t)-\frac{1}{\binom{q}{2}}}\geq \frac{\delta_t}{2}}\geq 1-\frac{4t}{(n-1)\delta_t^2}.
    \end{equation}

Now we give the concentration bound for the distinguishing statistic we chose under the stationary distribution, which will also be proved at the end of the section through a routine calculation.
\begin{lemma}[Concentration bound for the stationary distribution]\label{lem:concentration-2}
    There exists some constant $C_0=C_0(q)$ such that
    \[
    \Var[\mu]{Y_{\{1,2\}}}\leq \frac{C_0}{n-1}.
    \]
\end{lemma}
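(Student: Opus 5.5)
We have to bound $\Var[\mu]{Y_{\{1,2\}}}$ under the uniform proper $q$-coloring of the path. The plan is to write $Y_{\{1,2\}}$ as a sum of edge indicators and use exponential decay of correlations along the path.

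Under $\mu$, the sequence $(\sigma_1,\dots,\sigma_n)$ is a stationary Markov chain. It starts from the uniform distribution on $[q]$, and each $\sigma_{i+1}$ is uniform on $[q]\setminus\{\sigma_i\}$. Call this transition matrix $Q = \frac{1}{q-1}(J-I)$. Its eigenvalues are $1$ and $-\frac{1}{q-1}$, the latter with multiplicity $q-1$. Hence for any functions $g,h$ of a state and $k\ge 0$,
\[
\abs{\Cov(g(\sigma_i),h(\sigma_{i+k}))}\le (q-1)^{-k}\|g\|_\infty\|h\|_\infty .
\]
For $q=3$ this ratio equals $1/2$, so the bound still decays geometrically. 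This spectral fact is the only ingredient needed.

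Next I write $Y_{\{1,2\}}=\frac{1}{n-1}\sum_{i=1}^{n-1}Z_i$ with $Z_i=\one{\{\sigma_i,\sigma_{i+1}\}=\{1,2\}}$. Then
\[
\Var[\mu]{Y_{\{1,2\}}}=\frac{1}{(n-1)^2}\sum_{i,j}\Cov(Z_i,Z_j),
\]
and the claim reduces to showing $\sum_j\abs{\Cov(Z_i,Z_j)}\le C_0$ uniformly in $i$. For $\abs{i-j}\le 1$, I use the trivial bound $1$. For $j\ge i+2$, $Z_i$ depends on $(\sigma_i,\sigma_{i+1})$ and $Z_j$ on $(\sigma_j,\sigma_{j+1})$. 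By the Markov property, conditioning on $\sigma_{i+1}$ and $\sigma_j$ gives
\[
\Cov(Z_i,Z_j)=\Cov\bigl(g(\sigma_{i+1}),h(\sigma_j)\bigr),
\]
where $g(x)=\E{Z_i\mid\sigma_{i+1}=x}$ and $h(y)=\E{Z_j\mid\sigma_j=y}$. To justify this, I use that the reversed chain is again $Q$ with uniform stationary distribution, so $Z_i$ and $Z_j$ are conditionally independent given $(\sigma_{i+1},\sigma_j)$. Both $g$ and $h$ are bounded by $1$, so $\abs{\Cov(Z_i,Z_j)}\le (q-1)^{-(j-i-1)}$. Summing the geometric series gives $C_0 = 3+\frac{2}{q-2}$, which works for all $q\ge 3$.

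The only step requiring care is the covariance decay bound. I will verify it directly. Write $g=\bar g+g_0$, where $\bar g$ is the uniform mean and $g_0$ has mean zero. Then
\[
\Cov(g(\sigma_a),h(\sigma_b))=\E[\mathrm{unif}]{g_0\cdot Q^{b-a}h}.
\]
Since $Q^k g_0=(-\frac{1}{q-1})^k g_0$ on mean-zero functions and $Q$ is symmetric, this equals $(-\frac{1}{q-1})^{b-a}\E[\mathrm{unif}]{g_0 h_0}$, which is bounded in absolute value by $(q-1)^{-(b-a)}$. Everything else is routine summation.
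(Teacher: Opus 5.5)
Your proof is correct and follows the same route as the paper. Both write $Y_{\{1,2\}}$ as the average of the edge indicators $Z_i$, show that $\mathrm{Cov}_\mu(Z_i,Z_{i+d})$ decays geometrically with ratio $1/(q-1)$, and sum the series; the only difference is how the decay is obtained. The paper computes the covariance exactly, $\mathrm{Cov}_\mu(Z_i,Z_{i+d})=\frac{2(q-2)}{q^2(q-1)^2}\bigl(-\frac{1}{q-1}\bigr)^{d-1}$, from the conditional probability that $Z_{i+d}=1$ given $Z_i=1$, whereas you bound it via the spectrum of $Q$ and the Markov-property reduction to single-site functions, which is less explicit but avoids that calculation.
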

Again by \Cref{lem:concentration-2} and Chebyshev's inequality we have that
\begin{equation}\label{eq:prob-bound-2}
    \Pr[\mu]{\abs{Y_{\{1,2\}}-\frac{1}{\binom{q}{2}}}\geq \frac{\delta_t}{2}}\leq \frac{4C_0}{(n-1)\delta_t^2}.
    \end{equation}

To apply \Cref{lem:distinguishing-statistics} to prove \eqref{eq:total-variation-distance-lb}, we will choose the function $f$ as follows:
\begin{equation}\label{eq:definition-f}
    E_t\defeq \set{\abs{Y_{\{1,2\}}-\frac{1}{\binom{q}{2}}}\geq \frac{\delta_t}{2}},\quad f\defeq \one{E_t}.
\end{equation}

    Note that we can choose $C$ large enough such that for all sufficiently large $n$,
    \begin{equation}\label{eq:constant-choice}
        \frac{4t}{(n-1)\delta_t^2}\leq \frac{1}{8},\quad \frac{4C_0}{(n-1)\delta_t^2}\leq \frac{1}{8}.
    \end{equation}
    Let $\nu_t$ denote the distribution $P^t(\sigma_0,\cdot)$. Then by combining \eqref{eq:prob-bound-1}, \eqref{eq:prob-bound-2},  \eqref{eq:definition-f} and \eqref{eq:constant-choice} we obtain
    \begin{equation}
    \E[\nu_t]{f}\geq \frac{7}{8}, \quad \E[\mu]{f}\leq \frac{1}{8}, \quad \abs{\E[\nu_t]{f}-\E[\mu]{f}}\geq \frac{3}{4}.    
    \end{equation}

    Since $f$ is an indicator function, we also have 
    \[
    \Var[\nu_t]{f},\Var[\mu]{f}\leq \frac{1}{8}\times\tp{\frac{7}{8}}^2+\frac{7}{8}\times\tp{\frac{1}{8}}^2=\frac{7}{64}.    
    \]

    Therefore plugging into \Cref{lem:distinguishing-statistics} we can derive \eqref{eq:total-variation-distance-lb}, thereby proving the theorem. 
\end{proof}

We now conclude the section by proving \Cref{lem:concentration-1} and \Cref{lem:concentration-2}.
\begin{proof}[Proof of \Cref{lem:concentration-1}]
We claim the following stronger one-step bound that for any $\rho\in \binom{[q]}{2}$ and any $t\geq 1$ it holds that
\begin{equation}\label{eq:vect-concentration-stronger}
\Var{Y_{\rho}(X_t)\mid X_{t-1},\+A_t}\leq \frac{1}{n-1}.
\end{equation}
Before proving \eqref{eq:vect-concentration-stronger}, we show that \eqref{eq:vect-concentration-stronger} already implies the lemma. For each $t\geq 0$, define
\[
\+M_t\defeq \max\limits_{\{a,b\}\in \binom{[q]}{2}} \Var{Y_{\{a,b\}}(X_t)\mid \+A_t}.
\]
We claim that \eqref{eq:vect-concentration-stronger} implies that
\begin{equation}\label{eq:inductive-argument}
\+M_t\leq \+M_{t-1}+ \frac{1}{n-1},
\end{equation}

To prove the claim, by the law of total variance we have that for each $\rho\in \binom{[q]}{2}$, it holds that
\begin{equation}\label{eq:total-law-variance}
\Var{Y_{\rho}(X_t)\mid \+A_t}=\E{\Var{Y_{\rho}(X_t)\mid X_{t-1},\+A_t}\mid \+A_t}+\Var{\E{Y_{\rho}\mid X_{t-1},\+A_t}\mid \+A_t}.
\end{equation}

Now that the first term on the RHS of \eqref{eq:total-law-variance} is at most $\frac{1}{n-1}$ by \eqref{eq:vect-concentration-stronger}. It remains to control the second term. Let the color pair chosen at time $t$ be $A_t=\{a,b\}$. For a fixed previous coloring $X_{t-1}=\sigma$, the conditional expectation of $Y_{\rho}(X_t)$ falls into the following cases:
\begin{itemize}
    \item $\rho=\{a,b\}$ or $\rho\cap \{a,b\}=\emptyset$, then $Y_{\rho}(X_t)$ is unchanged, i.e., 
    \[
    \E{Y_{\rho}\mid X_{t-1}=\sigma,\+A_t}=Y_{\rho}(\sigma), \quad \Var{\E{Y_{\rho}\mid X_{t-1},\+A_t}\mid \+A_t}= \Var{Y_{\rho}\mid \+A_{t-1}};
    \]
    \item $\rho=\{a,c\}$ for some $c\neq b$, then the WSK dynamics at time $t$ swaps boundary edge types $\{a,c\}$ and $\{b,c\}$ with probability $\frac{1}{2}$. Hence
    \[
    \E{Y_{\rho}\mid X_{t-1}=\sigma,\+A_t}=\frac{Y_{\{a,c\}}(\sigma)+Y_{\{b,c\}}(\sigma)}{2},
    \]
    \begin{align*}
     \Var{\E{Y_{\rho}\mid X_{t-1},\+A_t}\mid \+A_t}=& \Var{\frac{Y_{\{a,c\}}+Y_{\{b,c\}}}{2}\mid \+A_{t-1}}\\
     \leq &\max\{\Var{Y_{\{a,c\}}\mid \+A_{t-1}},\Var{Y_{\{b,c\}}\mid \+A_{t-1}}\}.
    \end{align*}
    The case when $\rho=\{b,c\}$ for some $c\neq a$ is analogous.
\end{itemize}
 Therefore we can conclude that for any $\rho\in \binom{[q]}{2}$, 
\[
\Var{\E{Y_{\rho}\mid X_{t-1},\+A_t}\mid \+A_t}\leq \+M_{t-1},
\]
combining with \eqref{eq:vect-concentration-stronger} and \eqref{eq:total-law-variance} proves \eqref{eq:inductive-argument}. Then the  lemma directly follows from \eqref{eq:inductive-argument} by a simple inductive argument.


 It remains to prove \eqref{eq:vect-concentration-stronger}. Fix some $\rho\in \binom{[q]}{2}$. Given $X_{t-1}$ and $A_t$, the vertices whose colors lie in $A_t$ form disjoint path components. Let these components be $C_1,\dots,C_m$. For a component $C_i$, let $\eta_i\in \{0,1\}$ be the indicator whether it is flipped in the WSK move at $t$, then by the WSK transition,
\[
\Pr{\eta_{i}=0}=\Pr{\eta_{i}=1}=\frac{1}{2}, \quad \forall 1\leq i\leq m.
\]
For each $1\leq i\leq m$, let $\Delta_i\in \=R$ be the change in $Y_{\rho}$ caused by flipping $C_i$ alone. Then those independent simultaneous flips add linearly:
\[
Y_{\rho}(X_t)=Y_{\rho}(X_{t-1})+\sum\limits_{1\leq i\leq m}\eta_i\Delta_{i}, \quad \E{Y_{\rho}(X_t)\mid X_{t-1},\+A_t}=Y_{\rho}(X_{t-1})+\frac{1}{2}\sum\limits_{1\leq i\leq m}\Delta_{i}.
\]
Hence we have that
\[
\Var{ Y_{\rho}(X_t) \mid X_{t-1},\+A_t}=\frac{1}{4}\sum\limits_{1\leq i\leq m}\tp{\Delta_{i}}^2.
\]
Note that flipping some component $C$ can only change the type of at most two boundary edges, therefore we have $\abs{\Delta_{i}}\leq \frac{2}{(n-1)}$ for each $1\leq i\leq m$. Also note that we have $m\leq n-1$, and therefore \eqref{eq:vect-concentration-stronger} follows, concluding the proof of the lemma.

\end{proof}

\begin{proof}[Proof of \Cref{lem:concentration-2}]
 For each $1\leq i<n$, let $Z_i\defeq \one{\{\sigma_i,\sigma_{i+1}\}=\{1,2\}}$. We also write $Y_{\{1,2\}}$ simply as $Y$. Note that we have 
 \[
 Y=\frac{1}{n-1}\sum\limits_{i=1}^{n-1}Z_i, \quad \E[\mu]{Y}=p\defeq\frac{1}{\binom{q}{2}}.
 \]
 We need to explore the covariance structure of $Z_i$ for the variance of $Y$. Note that by routine calculation we have for each $1\leq i<n, d\geq 1,i+d<n$:
 \[
 \Pr{Z_{i+d}=1\mid Z_i=1}=\frac{2+(q-2)\tp{-\frac{1}{q-1}}^{d-1}}{q(q-1)}
 \]
 and that
 \[
 \Cov_{\mu}\tp{Z_i,Z_{i+d}}=p(\Pr{Z_{i+d}=1\mid Z_i=1}-p)=\frac{2(q-2)}{q^2(q-1)^2}\tp{-\frac{1}{q-1}}^{d-1}.
 \]
 Therefore we have that
 \begin{align*}
\Var[\mu]{Y}=&\frac{1}{(n-1)^2}\Var[\mu]{\sum\limits_{i=1}^{n-1}Z_i}\\
=&\frac{1}{(n-1)^2}\tp{(n-1)p(1-p)+2\sum\limits_{d=1}^{n-1}(n-1-d)\Cov_{\mu}(Z_i,Z_{i+d})}\\
=&\frac{1}{(n-1)^2}\tp{(n-1)p(1-p)+\frac{4(q-2)}{q^2(q-1)^2}\sum\limits_{d=1}^{n-1}(n-1-d)\tp{-\frac{1}{q-1}}^{d-1}}\\
=& \frac{p(1-p)+\frac{4(q-2)}{q^3(q-1)}}{n-1}-\frac{4(q-2)}{q^4}\frac{1-\tp{-\frac{1}{q-1}}^{n-1}}{(n-1)^2}\\
\leq & \frac{C_0(q)}{n-1},
 \end{align*}
 concluding the proof of the lemma.
\end{proof}

\section{Conclusions and Open Problems}\label{sec:conclusions}

In this paper, we develop new tools for studying the mixing time of the WSK dynamics for uniformly sampling proper $q$-colorings. We introduce two new criteria, approximate colorwise tensorization of entropy (ACTE) and approximate colorwise subadditivity of entropy (ACSE), for controlling the modified log-Sobolev constant of the WSK dynamics. We also develop new inductive approaches for establishing such criteria on specific types of graphs. As concrete applications, we established optimal mixing time bounds for the WSK dynamics on chordal and outerplanar graphs. We leave the following open problems and directions for future work:
 
 \begin{itemize}
 
 \item Our current inductive method for establishing ACTE and ACSE relies heavily on specific structural constraints of the underlying graphs. A natural next step is to relax these requirements to accommodate more general graph classes. Intuitively, this extension requires generalizing the colorwise localization scheme to identify alternative ``entropy-preserving'' substructures beyond simplicial vertices and pinned cycles that can effectively decouple spatial dependencies during the induction step.
 
 \item While the WSK dynamics is ubiquitous in statistical physics and empirically exhibits rapid mixing across diverse topologies, rigorous upper bounds remain sparse. Notably, numerical experiments~\cite{sokal2001personal} support the conjecture that for $q=3$, the WSK dynamics achieves an $O(1)$ mixing time on the $N \times N$ torus for even $N$. Establishing a mixing time upper bound via our new framework for such graphs is a compelling open problem.
 
 \item We have established that the WSK dynamics achieves optimal mixing in regimes where Glauber dynamics becomes non-ergodic (e.g., edge colorings on trees). For edge coloring on general graphs, Glauber dynamics faces a strict irreducibility barrier below $q = 2\Delta$, whereas WSK dynamics remains irreducible for all $q \geq \Delta + 2$~\cite{narboni2023vizing}. Whether we can show rapid mixing within this broader regime for general edge colorings also remains a major open problem.
 
 \end{itemize}

\ifdoubleblind
\else
\section*{Acknowledgements}
We thank Chihao Zhang for helpful discussions. Chunyang Wang thanks Xinyuan Zhang for helpful comments on an earlier version of the manuscript.
Yuichi Yoshida is supported by JSPS KAKENHI Grant Number 24K02903.
\fi

\bibliographystyle{alpha}
\bibliography{references} 

\newcommand{\etalchar}[1]{$^{#1}$}
\begin{thebibliography}{ALO{\etalchar{+}}21b}

\bibitem[ABC{\etalchar{+}}21]{alon2021mixing}
Noga Alon, Raimundo Briceño, Nishant Chandgotia, Alexander Magazinov, and
  Yinon Spinka.
\newblock Mixing properties of colourings of the {$\mathbb{Z}^d$} lattice.
\newblock {\em Comb. Probab. Comput.}, 30(3):360--373, 2021.

\bibitem[AJK{\etalchar{+}}22]{anari2022entropic}
Nima Anari, Vishesh Jain, Frederic Koehler, Huy~Tuan Pham, and Thuy-Duong
  Vuong.
\newblock Entropic independence: Optimal mixing of down-up random walks.
\newblock In {\em STOC}, pages 1418--1430. ACM, 2022.

\bibitem[AL20]{alev2020improved}
Vedat~Levi Alev and Lap~Chi Lau.
\newblock Improved analysis of higher order random walks and applications.
\newblock In {\em STOC}, pages 1198--1211. ACM, 2020.

\bibitem[ALO20]{anari2020spectral}
Nima Anari, Kuikui Liu, and Shayan {Oveis Gharan}.
\newblock Spectral independence in high-dimensional expanders and applications
  to the hardcore model.
\newblock In {\em FOCS}, pages 1319--1330. IEEE, 2020.

\bibitem[ALO21a]{abdolazimi2022matrix}
Dorna Abdolazimi, Kuikui Liu, and Shayan {Oveis Gharan}.
\newblock A matrix trickle-down theorem on simplicial complexes and
  applications to sampling colorings.
\newblock In {\em FOCS}, pages 161--172. IEEE, 2021.

\bibitem[ALO{\etalchar{+}}21b]{anari2021logconcave}
Nima Anari, Kuikui Liu, Shayan {Oveis Gharan}, Cynthia Vinzant, and Thuy-Duong
  Vuong.
\newblock Log-concave polynomials {IV}: approximate exchange, tight mixing
  times, and near-optimal sampling of forests.
\newblock In {\em STOC}, pages 408--420. ACM, 2021.

\bibitem[ALOV19]{anari2019logconcave}
Nima Anari, Kuikui Liu, Shayan {Oveis Gharan}, and Cynthia Vinzant.
\newblock Log-concave polynomials {II}: high-dimensional walks and an {FPRAS}
  for counting bases of a matroid.
\newblock In {\em STOC}, pages 1--12. ACM, 2019.

\bibitem[ALV22]{anari2022optimal}
Nima Anari, Yang~P. Liu, and Thuy-Duong Vuong.
\newblock Optimal sublinear sampling of spanning trees and determinantal point
  processes via average-case entropic independence.
\newblock In {\em FOCS}, pages 123--134. IEEE, 2022.

\bibitem[BCC{\etalchar{+}}22]{blanca2022mixing}
Antonio Blanca, Pietro Caputo, Zongchen Chen, Daniel Parisi, Daniel
  Štefankovič, and Eric Vigoda.
\newblock On mixing of {M}arkov chains: Coupling, spectral independence, and
  entropy factorization.
\newblock In {\em SODA}, pages 3670--3692. SIAM, 2022.

\bibitem[BCLM11]{barthe2011correlation}
Franck Barthe, Dario {Cordero-Erausquin}, Michel Ledoux, and Bernard Maurey.
\newblock Correlation and {Brascamp--Lieb} inequalities for {M}arkov
  semigroups.
\newblock {\em Int. Math. Res. Not.}, 2011(10):2177--2216, 2011.

\bibitem[BCP{\etalchar{+}}21]{blanca2021entropy}
Antonio Blanca, Pietro Caputo, Daniela Parisi, Alistair Sinclair, and Eric
  Vigoda.
\newblock Entropy decay in the {Swendsen--Wang} dynamics on {$\mathbb{Z}^d$}.
\newblock In {\em STOC}, pages 1551--1564. ACM, 2021.

\bibitem[BD97]{bubley97pathcoupling}
Russ Bubley and Martin Dyer.
\newblock Path coupling: A technique for proving rapid mixing in {M}arkov
  chains.
\newblock In {\em FOCS}, pages 223--231. IEEE, 1997.

\bibitem[BDPR21]{bencs2021zero}
Ferenc Bencs, Ewan Davies, Viresh Patel, and Guus Regts.
\newblock On zero-free regions for the anti-ferromagnetic {P}otts model on
  bounded-degree graphs.
\newblock {\em Annales de l’Institut Henri Poincaré D}, 8(3):459--489, 2021.

\bibitem[BGK{\etalchar{+}}07]{bayati2007simple}
Mohsen Bayati, David Gamarnik, Dimitriy Katz, Chandra Nair, and Prasad Tetali.
\newblock Simple deterministic approximation algorithms for counting matchings.
\newblock In {\em STOC}, pages 122--127. ACM, 2007.

\bibitem[BT06]{bobkov2006modified}
Sergey~G. Bobkov and Prasad Tetali.
\newblock Modified logarithmic {S}obolev inequalities in discrete settings.
\newblock {\em Journal of Theoretical Probability}, 19(2):289--336, 2006.

\bibitem[CCC{\etalchar{+}}25]{chen2025rapidrandom}
Xiaoyu Chen, Zejia Chen, Zongchen Chen, Yitong Yin, and Xinyuan Zhang.
\newblock Rapid mixing on random regular graphs beyond uniqueness.
\newblock In {\em FOCS}, pages 2170--2193. IEEE, 2025.

\bibitem[CCE09]{carlen2009subadditivity}
Eric~A Carlen and Dario Cordero-Erausquin.
\newblock Subadditivity of the entropy and its relation to {Brascamp--Lieb}
  type inequalities.
\newblock {\em Geom. Funct. Anal.}, 19(2):373--405, 2009.

\bibitem[CCFV25]{carlson2025optimal}
Charlie Carlson, Xiaoyu Chen, Weiming Feng, and Eric Vigoda.
\newblock Optimal mixing for randomly sampling edge colorings on trees down to
  the max degree.
\newblock In {\em SODA}, pages 5418--5433. SIAM, 2025.

\bibitem[CCYZ25]{chen2025rapid}
Xiaoyu Chen, Zongchen Chen, Yitong Yin, and Xinyuan Zhang.
\newblock Rapid mixing at the uniqueness threshold.
\newblock In {\em FOCS}, pages 879--890. IEEE, 2025.

\bibitem[CDM{\etalchar{+}}19]{chen2019improved}
Sitan Chen, Michelle Delcourt, Ankur Moitra, Guillem Perarnau, and Luke Postle.
\newblock Improved bounds for randomly sampling colorings via linear
  programming.
\newblock In {\em SODA}, pages 2216--2234. SIAM, 2019.

\bibitem[CE22]{CE22}
Yuansi Chen and Ronen Eldan.
\newblock Localization schemes: A framework for proving mixing bounds for
  {M}arkov chains (extended abstract).
\newblock In {\em FOCS}, pages 110--122. IEEE, 2022.

\bibitem[Ces01]{cesi2001quasi}
Filippo Cesi.
\newblock Quasi-factorization of the entropy and logarithmic {S}obolev
  inequalities for gibbs random fields.
\newblock {\em Probab. Theory Relat.}, 120(4):569--584, 2001.

\bibitem[CFYZ22]{CFYZ22}
Xiaoyu Chen, Weiming Feng, Yitong Yin, and Xinyuan Zhang.
\newblock Optimal mixing for two-state anti-ferromagnetic spin systems.
\newblock In {\em FOCS}, pages 588--599. {IEEE}, 2022.

\bibitem[CGM21]{CGM19}
Mary Cryan, Heng Guo, and Giorgos Mousa.
\newblock Modified log-{S}obolev inequalities for strongly log-concave
  distributions.
\newblock {\em Ann. Probab.}, 49(1):506--525, 2021.

\bibitem[CG{\v{S}}V21]{chen2021rapid}
Zongchen Chen, Andreas Galanis, Daniel {\v{S}}tefankovi{\v{c}}, and Eric
  Vigoda.
\newblock Rapid mixing for colorings via spectral independence.
\newblock In {\em SODA}, pages 1548--1557. SIAM, 2021.

\bibitem[CJM{\etalchar{+}}26]{chen2026edgetilting}
Xiaoyu Chen, Zhe Ju, Tianshun Miao, Yitong Yin, and Xinyuan Zhang.
\newblock Edge-tilting field dynamics: Rapid mixing at the uniqueness threshold
  and optimal mixing for {S}wendsen-{W}ang dynamics.
\newblock arXiv/2604.10525, 2026.

\bibitem[CLL04]{carlen2004sharp}
Eric~A Carlen, Elliott~H Lieb, and Michael Loss.
\newblock A sharp analog of {Young's} inequality on {$S^N$} and related entropy
  inequalities.
\newblock {\em J. Geom. Anal.}, 14(3):487--520, 2004.

\bibitem[CLMM23]{chen2023strong}
Zongchen Chen, Kuikui Liu, Nitya Mani, and Ankur Moitra.
\newblock Strong spatial mixing for colorings on trees and its algorithmic
  applications.
\newblock In {\em FOCS}, pages 810--845. IEEE, 2023.

\bibitem[CLV20]{chen2020rapid}
Zongchen Chen, Kuikui Liu, and Eric Vigoda.
\newblock Rapid mixing of glauber dynamics up to uniqueness via contraction.
\newblock In {\em FOCS}, pages 1307--1318. IEEE, 2020.

\bibitem[CLV21]{chen2021optimal}
Zongchen Chen, Kuikui Liu, and Eric Vigoda.
\newblock Optimal mixing of {G}lauber dynamics: Entropy factorization via
  high-dimensional expansion.
\newblock In {\em STOC}, pages 1537--1550. ACM, 2021.

\bibitem[CMT15]{caputo2015approximate}
Pietro Caputo, Georg Menz, and Prasad Tetali.
\newblock Approximate tensorization of entropy at high temperature.
\newblock {\em Ann. Fac. Sci. Toulouse. Math.}, Ser. 6, 24(4):691--716, 2015.

\bibitem[CV25]{carlson2025flip}
Charlie Carlson and Eric Vigoda.
\newblock Flip dynamics for sampling colorings: Improving (11/6 —
  $\varepsilon$) using a simple metric.
\newblock In {\em SODA}, pages 2194--2212. SIAM, 2025.

\bibitem[CWZZ25]{chen2025decay}
Zejia Chen, Yulin Wang, Chihao Zhang, and Zihan Zhang.
\newblock Decay of correlation for edge colorings when {$q > 3\Delta$}.
\newblock In {\em ICALP}, volume 334, pages 54:1--54:18. Schloss Dagstuhl --
  Leibniz-Zentrum f{\"u}r Informatik, 2025.

\bibitem[DHP20]{delcourt2020glauber}
Michelle Delcourt, Marc Heinrich, and Guillem Perarnau.
\newblock The {G}lauber dynamics for edge-colorings of trees.
\newblock {\em Random Struct. Algorithms}, 57(4):1050--1076, 2020.

\bibitem[EAM22]{ElAlaoui2022information}
Ahmed El~Alaoui and Andrea Montanari.
\newblock An information-theoretic view of stochastic localization.
\newblock {\em IEEE Trans. Inf. Theory}, 68(11):7423--7426, 2022.

\bibitem[ERT80]{erdos1980choosability}
Paul Erd{\"o}s, Arthur~L Rubin, and Herbert Taylor.
\newblock Choosability in graphs.
\newblock In {\em Proc. West Coast Conf. Combinatorics, Graph Theory and
  Computing}, volume~26, pages 125--157. Utilitas Math., 1980.

\bibitem[FGYZ21]{feng2021rapid}
Weiming Feng, Heng Guo, Yitong Yin, and Chihao Zhang.
\newblock Rapid mixing from spectral independence beyond the {B}oolean
  {D}omain.
\newblock In {\em SODA}, pages 1558--1577. SIAM, 2021.

\bibitem[FS99]{ferreira1999antiferro}
Sabino~Jos{\'e} Ferreira and Alan~D. Sokal.
\newblock Antiferromagnetic potts models on the square lattice: A
  high-precision {Monte} {Carlo} study.
\newblock {\em J. Stat. Phys.}, 96(3--4):461--530, 1999.

\bibitem[GJK10]{goldberg2010mixing}
Leslie~Ann Goldberg, Mark Jerrum, and Marek Karpinski.
\newblock The mixing time of {G}lauber dynamics for coloring regular trees.
\newblock {\em Random Struct. Algorithms}, 36(4):464--476, 2010.

\bibitem[GKM15]{gamarnik2015strong}
David Gamarnik, Dmitriy Katz, and Sidhant Misra.
\newblock Strong spatial mixing of list coloring of graphs.
\newblock {\em Random Struct. Algorithms}, 46(4):599--613, 2015.

\bibitem[GMP05]{goldberg2005strong}
Leslie~Ann Goldberg, Russell Martin, and Mike Paterson.
\newblock Strong spatial mixing with fewer colors for lattice graphs.
\newblock {\em SIAM J. Comput.}, 35(2):486--517, 2005.

\bibitem[GP23]{gu2023nonlinear}
Yuzhou Gu and Yury Polyanskiy.
\newblock Non-linear log-{Sobolev} inequalities for the {Potts} semigroup and
  applications to reconstruction problems.
\newblock {\em Commun. Math. Phys.}, 404(2):769--831, 2023.

\bibitem[Hei20]{Heinrich2020glauber}
Marc Heinrich.
\newblock Glauber dynamics for colourings of chordal graphs and graphs of
  bounded treewidth.
\newblock arXiv/2010.16158, 2020.

\bibitem[HS23]{hermon2023modified}
Jonathan Hermon and Justin Salez.
\newblock {Modified {log-Sobolev} inequalities for {strong-Rayleigh} measures}.
\newblock {\em Ann. Appl. Probab.}, 33(2):1501--1514, 2023.

\bibitem[Jer95]{jerrum1995simple}
Mark Jerrum.
\newblock A very simple algorithm for estimating the number of {$k$}-colorings
  of a low-degree graph.
\newblock {\em Random Struct. Algorithms}, 7(2):157--165, 1995.

\bibitem[LLY13]{li2013correlation}
Liang Li, Pinyan Lu, and Yitong Yin.
\newblock Correlation decay up to uniqueness in spin systems.
\newblock In {\em SODA}, pages 67--84. SIAM, 2013.

\bibitem[LM11]{lucier2011glauber}
B.~Lucier and M.~Molloy.
\newblock The {G}lauber dynamics for colorings of bounded degree trees.
\newblock {\em SIAM J. Discrete Math.}, 25(2):827--853, 2011.

\bibitem[LP17]{levin2017markov}
David~A. Levin and Yuval Peres.
\newblock {\em Markov chains and mixing times}.
\newblock American Mathematical Soc., 2017.

\bibitem[LSS25]{liu2025correlation}
Jingcheng Liu, Alistair Sinclair, and Piyush Srivastava.
\newblock Correlation decay and partition function zeros: Algorithms and phase
  transitions.
\newblock {\em SIAM Journal on Computing}, 54(4):FOCS19--200--FOCS19--252,
  2025.

\bibitem[LV05]{luczak2005torpid}
Tomasz \L{}uczak and Eric Vigoda.
\newblock Torpid mixing of the {Wang–Swendsen–Koteck\'{y}} algorithm for
  sampling colorings.
\newblock {\em J. Discrete Algorithms}, 3(1):92--100, 2005.

\bibitem[LY13]{lu2013improved}
Pinyan Lu and Yitong Yin.
\newblock Improved {FPTAS} for multi-spin systems.
\newblock In {\em {RANDOM}}, pages 639--654. Springer, 2013.

\bibitem[MO94]{martinelli1994approach}
F.~Martinelli and E.~Olivieri.
\newblock Approach to equilibrium of {Glauber} dynamics in the one phase
  region. {II}. the general case.
\newblock {\em Commun. Math. Phys}, 161(3):487--514, 1994.

\bibitem[MS09]{mohar2009kempe}
Bojan Mohar and Jes{\'u}s Salas.
\newblock A new {Kempe} invariant and the (non)-ergodicity of the
  {Wang--Swendsen--Koteck{\'y}} algorithm.
\newblock {\em J. Phys. A: Math. Theor.}, 42:225204, 2009.

\bibitem[MS10]{mohar2010nonergodicity}
Bojan Mohar and Jesús Salas.
\newblock On the non-ergodicity of the {Swendsen–Wang–Kotecký} algorithm
  on the kagomé lattice.
\newblock {\em J. Stat. Mech.: Theory Exp.}, 2010(05):P05016, 2010.

\bibitem[Nar23]{narboni2023vizing}
Jonathan Narboni.
\newblock Vizing's edge-recoloring conjecture holds.
\newblock In {\em EUROCOMB}, pages 681--686. Masaryk University Press, 2023.

\bibitem[PR17]{patel2017deterministic}
Viresh Patel and Guus Regts.
\newblock Deterministic polynomial-time approximation algorithms for partition
  functions and graph polynomials.
\newblock {\em SIAM J. Comput.}, 46(6):1893--1919, 2017.

\bibitem[Sok01]{sokal2001personal}
Alan~D Sokal.
\newblock A personal list of unsolved problems concerning lattice gases and
  antiferromagnetic {Potts} models.
\newblock {\em Markov Process. Relat. Fields}, 7(1):21--38, 2001.

\bibitem[SS97]{salas1997absense}
Jesús Salas and Alan~D. Sokal.
\newblock Absence of phase transition for antiferromagnetic {Potts} models via
  the {Dobrushin} uniqueness theorem.
\newblock {\em J. Stat. Phys.}, 86(3):551--579, 1997.

\bibitem[SS22]{salas2022ergodicity}
Jes{\'u}s Salas and Alan~D. Sokal.
\newblock Ergodicity of the {Wang--Swendsen--Koteck{\'y}} algorithm on several
  classes of lattices on the torus.
\newblock {\em J. Phys. A: Math. Theor.}, 55:415004, 2022.

\bibitem[SST12]{sinclair12approximation}
Alistair Sinclair, Piyush Srivastava, and Marc Thurley.
\newblock Approximation algorithms for two-state anti-ferromagnetic spin
  systems on bounded degree graphs.
\newblock In {\em SODA}, pages 941--953. SIAM, 2012.

\bibitem[SZ92]{stroock1992logarithmic}
Daniel~W Stroock and Bogus{\l}aw Zegarli{\'n}ski.
\newblock The logarithmic {S}obolev inequality for discrete spin systems on a
  lattice.
\newblock {\em Commun. Math. Phys}, 149(1):175--193, 1992.

\bibitem[TVVY10]{tetali2010phase}
Prasad Tetali, Juan~C. Vera, Eric Vigoda, and Linji Yang.
\newblock Phase transition for the mixing time of the {G}lauber dynamics for
  coloring regular trees.
\newblock In {\em SODA}, pages 1646--1656. SIAM, 2010.

\bibitem[Vig99]{vigoda1999improved}
Eric Vigoda.
\newblock Improved bounds for sampling colorings.
\newblock In {\em FOCS}, pages 51--59. IEEE, 1999.

\bibitem[Wei06]{weitz06counting}
Dror Weitz.
\newblock Counting independent sets up to the tree threshold.
\newblock In {\em {STOC}}, pages 140--149. ACM, 2006.

\bibitem[WSK89]{wang1989antiferro}
Jian-Sheng Wang, Robert~H. Swendsen, and Roman Koteck\'y.
\newblock Antiferromagnetic {P}otts models.
\newblock {\em Phys. Rev. Lett.}, 63:109--112, 1989.

\bibitem[WSK90]{wang1990three}
Jian-Sheng Wang, Robert~H. Swendsen, and Roman Koteck\'y.
\newblock Three-state antiferromagnetic {P}otts models: A {Monte Carlo} study.
\newblock {\em Phys. Rev. B}, 42:2465--2474, 1990.

\bibitem[WZZ24]{wang2024samplingproper}
Yulin Wang, Chihao Zhang, and Zihan Zhang.
\newblock Sampling proper colorings on line graphs using {$(1+o(1))\Delta$}
  colors.
\newblock In {\em STOC}, pages 1688--1699. ACM, 2024.

\bibitem[YZ13]{yin2013approximate}
Yitong Yin and Chihao Zhang.
\newblock Approximate counting via correlation decay on planar graphs.
\newblock In {\em SODA}, pages 47--66. SIAM, 2013.

\end{thebibliography}

\clearpage

\end{document}